\newtheorem{example}{Example}
\newtheorem{thm}{Theorem}
\newcommand{\tr}{\text{tr}}
\newcommand{\liubov}[1]{\textcolor{black}{[liubov] #1}}
\begin{document}

\preprint{APS/123-QED}

\title{Nonparametric Learning Non-Gaussian Quantum States of Continuous Variable Systems}

\author{Liubov Markovich}
\email{markovich@mail.lorentz.leidenuniv.nl}
\author{Xiaoyu Liu}%
 \author{Jordi Tura}
\affiliation{%
 Instituut-Lorentz, Universiteit Leiden, P.O. Box 9506, 2300 RA Leiden, The Netherlands}
\affiliation{$\langle \text{aQa}^\text{L} \rangle$ Applied Quantum Algorithms Leiden, The Netherlands
}%


\date{\today}

\begin{abstract}
Continuous-variable quantum systems are foundational to quantum computation, communication, and sensing. While traditional representations using wave functions or density matrices are often impractical, the tomographic picture of quantum mechanics provides an accessible alternative by associating quantum states with classical probability distribution functions called tomograms.
Despite its advantages, including compatibility with classical statistical methods, the tomographic method remains underutilized due to a lack of robust estimation techniques. This work addresses this gap by introducing a non-parametric \emph{kernel quantum state estimation} (KQSE) framework for reconstructing quantum states and their trace characteristics from noisy data, without prior knowledge of the state. In contrast to existing methods, KQSE yields estimates of the density matrix in various bases, as well as trace quantities such as purity, higher moments, overlap, and trace distance, with a near-optimal convergence rate of $\tilde{O}\bigl(T^{-1}\bigr)$, where $T$ is the total number of measurements. KQSE is robust for multimodal, non-Gaussian states, making it particularly well suited for characterizing states  essential for
quantum science. 
\end{abstract}

\maketitle



\par \textit{Introduction.} Continuous-variable (CV) quantum states describe systems with observables taking values in continuous spectra~\cite{serafini2017quantum}. In quantum optics, their realizations, known as qumodes, serve as CV counterparts to qubits, as light fields naturally exhibit continuous quadrature degrees of freedom. CV platforms support a wide range of quantum information tasks, including computation, communication, and sensing~\cite{lau2016universal, arrazola2019machine, killoran2019continuousvariable, bromley2020applications,jia2025continuousvariable, nielsen2025variational}.
As a result, significant effort has been devoted to learning CV quantum states~\cite{wu2024efficient, mele2024learning, holevo2024estimates}, with particular progress for Gaussian and mildly non-Gaussian states~\cite{walschaers2021nongaussian}, establishing tight bounds on the trace distance in terms of first moments. Alternatively, shadow
tomography for truncated CV systems or energy constraint states is used~\cite{becker2024classical,aaronson2018shadow,volkoff2021universal}.
  \par Due to the continuous nature of qumodes, the density matrix formalism is often impractical.  However, CV states can  be described in phase space by joint quasiprobability distribution functions (QPDFs), such as the Wigner, Q and P‐functions~\cite{wigner1932quantum, husimi1940formal, cahill1969density}. These QPDFs are related via  one-to-one transforms~\cite{manko2020integral} providing a complete description of the state. However, since position $\boldsymbol{q}$ and momentum $\boldsymbol{p}$ operators cannot be measured simultaneously, we lack a pair of observable random variables to reconstruct their joint QPDFs. Instead, quantum state tomography~\cite{smithey1993measurement, lvovsky2009continuousvariable, babichev2004homodyne, wallentowitz1995reconstruction} measures a finite number of rotated quadratures $\boldsymbol{X}_{\mu,\nu}=\mu \boldsymbol{q}+\nu \boldsymbol{p}$ at various  phase-space directions  $\mu,\nu\in  \mathbb{R}$. Using these statistics,  the desired QPDF is reconstructed. 
  \par
The Wigner function is reconstructed by inverting the Radon transform of quadrature distributions estimated via histograms and maximum-likelihood estimation (MLE)~\cite{leonhardt1995measuring,vogel1989determination,banaszek1998maximumlikelihood}. This process is sensitive to sampling noise and the inversion of the Radon transform can be ill-posed, necessitating the use of regularization~\cite{vogel1989determination,lvovsky2009continuousvariable}. Detector inefficiencies blur nonclassical features such as Wigner negativity~\cite{dariano2007homodyne}. 
These propagate into the reconstructed density matrix, leading to loss of nonclassicality, reduced purity, or even nonphysical states~\cite{lvovsky2009continuousvariable, hradil1997quantumstate}.
To avoid such instabilities, modern experiments reconstruct the  state directly from quadrature data using parametric methods
that guarantee physicality~\cite{christ2009spatial,wu2024efficient,fluhmann2020direct,jia2025continuousvariable}. The quadrature $X_{\mu,\nu}$ is a random variable dependent on the measurement settings  $\mu, \nu$, and its PDF $\mathcal{W}(x|\mu,\nu)$, called the symplectic tomogram, is informationally complete~\cite{amosov2008information}, uniquely specifying the density operator $\boldsymbol{\rho}$. Hence, every CV state can be fully described by the classical PDF, i.e., the tomogram, of its quadratures~\cite{mancini1996symplectic,ibort2009introduction, manko1997quantum, dodonov1974even, chernega2023dynamics,dudinets2024entangled,manko2020integral}. 
\par In this Letter, we propose a state reconstruction and trace characteristic estimation method based on nonparametric kernel density estimation (KDE)~\cite{silverman2018density} and a kernel characteristic function estimation (KCFE), called \textit{kernel quantum state estimation} (KQSE). KQSE estimates the state characteristics, using their integral representations via the tomogram and characteristic functions, achieving $\tilde{O}(T^{-1})$
in the uniform and $L_2$ norms, where $T$ is the total amount of measurements with no prior assumptions on the state (Theorem~\ref{thm0}).
This approach is robust to measurement noise and performs equally well for Gaussian and non-Gaussian states.
\par  We first prove that the trace distance  $D(\boldsymbol{\rho}_1,\boldsymbol{\rho}_2)=\tfrac{1}{2}\|\boldsymbol{\rho}_1-\boldsymbol{\rho}_2\|_1$ between any quantum states $\boldsymbol{\rho}_1$, $\boldsymbol{\rho}_2$ is lower bounded by the total variation between the corresponding tomograms  (Theorem~\ref{thm1}).
In Theorem~\ref{thm2} we show that achieving precision $\varepsilon$ in trace distance using KQSE requires $O(\varepsilon^{-5/2})$ measurements. While slightly looser than the $O(\varepsilon^{-2})$ rate known for Gaussian states~\cite{holevo2024estimates,mele2024learning,devroye2023total}, our bound holds universally without assumptions on the state.  Next, we use our previous work~\cite{markovich2024not},  where the characteristic function (CF) $\phi(t;\mu,\nu)$ is defined as the Fourier transform of the tomogram. Theorem~\ref{thm_3} provides necessary and sufficient conditions for a CF to define a valid quantum state. Moreover, the CF enables not only the reconstruction of the density matrix $\rho(y,y')=\langle y|\boldsymbol{\rho}|y'\rangle$  in various bases, but also allows for direct analytic evaluation of quantities such as $\tr{(\boldsymbol{\rho}^d)}$ $(d\in\mathbb{Z}^{+})$, and state distinguishability measures like the trace distance $D(\boldsymbol{\rho}_1,\boldsymbol{\rho}_2)$ and overlap $\tr{(\boldsymbol{\rho}_1\boldsymbol{\rho}_2)}$. We show that this CF coincides with that of the Wigner function~\cite{wu2024efficient, fluhmann2020direct}. Fig.~\ref{fig_1} illustrates the theoretical and empirical correspondence among the tomogram, CF, density matrix, and Wigner function~\cite{manko2020integral}.
Finally, we introduce the new nonparametric KCFE that achieves a mean squared error (MSE) convergence rate of ${O}(1/n)$, where $n$ is the amount of measurements for the fixed $\mu,\nu$, matching optimal parametric rates of MLE (Theorem~\ref{thm4}). In signal processing~\cite{stefanski1990deconvolving, sinitsyn2005kernel,fan1991optimal, delaigle2004practical}, PDFs of the signals are often estimated via the inverse Fourier transform of a CF, as this approach empirically yields more stable results. 
\par  The KQSE method uses a collection of KCFEs $\{\hat{\phi}_X(1;\mu_i,\nu_j\}$, $i\in[1,N_{\mu}]$, $j\in[1,N_{\nu}]$ to estimate  the state and its trace characteristics from the noisy data:
\begin{thm}(Convergence of KQSE)\label{thm0}
Let $\widehat{\rho(y,y')}$ and $\widehat{\mathrm{tr}(\boldsymbol{\rho}_1 \boldsymbol{\rho}_2)}$ denote the KQSE estimators of the density matrix $\rho(y,y')$ and the overlap $\mathrm{tr}(\boldsymbol{\rho}_1 \boldsymbol{\rho}_2)$, respectively, constructed from $T_{\mu} = n N_\mu$ and $T_{\mu,\nu} = n N_\mu N_\nu$ quantum state preparations, where $n$ samples are collected per  setting $\{\mu_i, \nu_j\}$, $ N_\mu, N_\nu=\log{n}$. Then the estimator achieves:
    \begin{eqnarray}\label{1723}
\!\!\!&&   L_{\infty}(\widehat{\rho}(y,y')) \equiv \sup\limits_{y,y'} \mathbb{E}|\rho(y,y') - \widehat{\rho(y,y')}|^2 = \tilde{O}(1/T_{\mu}),\\\nonumber
\!\!&&\mathrm{MSE}\big(\widehat{\mathrm{tr}(\boldsymbol{\rho}_1 \boldsymbol{\rho}_2)}\big) \!\equiv \!\mathbb{E}\big[|\mathrm{tr}(\boldsymbol{\rho}_1 \boldsymbol{\rho}_2) - \widehat{\mathrm{tr}(\boldsymbol{\rho}_1 \boldsymbol{\rho}_2)}|^2\big]\! = \!\tilde{O}(1/T_{\mu,\nu}).\end{eqnarray}
\end{thm}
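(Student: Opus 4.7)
The plan is to reduce the convergence of KQSE to the pointwise MSE bound of the KCFE given by Theorem~\ref{thm4}, combined with a deterministic quadrature analysis. The key observation is that Theorem~\ref{thm_3} expresses both $\rho(y,y')$ and $\mathrm{tr}(\boldsymbol{\rho}_1\boldsymbol{\rho}_2)$ as integrals of the characteristic function $\phi(1;\mu,\nu)$ against oscillatory kernels, and the KQSE estimators are obtained by replacing these integrals by Riemann sums on the rectangular grid $\{(\mu_i,\nu_j)\}$ while substituting the KCFE $\hat{\phi}_X(1;\mu_i,\nu_j)$ for the true $\phi(1;\mu_i,\nu_j)$. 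I would therefore decompose the total error into a deterministic \emph{quadrature/truncation error} and a random \emph{stochastic error}, analyze them separately, and balance them through the scaling $N_\mu,N_\nu=\log n$.

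For the density matrix estimator I would begin with the one-dimensional Fourier-type representation $\rho(y,y')=\frac{1}{2\pi}\int \phi(1;\mu,y'-y)\,e^{i\mu(y+y')/2}\,d\mu$ (and its two-dimensional analogue for the overlap), as derived in the discussion preceding Theorem~\ref{thm_3}. The deterministic part of the error splits into a truncation tail on $\{|\mu|>\Lambda\}$ and a discretization error on $[-\Lambda,\Lambda]$ of order $\Lambda^{2}/N_\mu^{2}$ times a bound on the derivatives of $\phi$; since $\phi$ is the Wigner characteristic function of a trace-class operator and hence decays faster than any polynomial in the smoothness scale, the choice $\Lambda\asymp\log n$ makes this deterministic contribution $O(\mathrm{poly}\log n/n)$, which is absorbed into $\tilde{O}$.

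The stochastic component is then handled by exploiting independence across the $N_\mu$ (resp.\ $N_\mu N_\nu$) measurement settings together with Theorem~\ref{thm4}. Since $\widehat{\rho(y,y')}-\mathbb{E}\widehat{\rho(y,y')}$ is a linear combination of $N_\mu$ independent KCFE outputs weighted by the Riemann coefficients, each of order $O(1/N_\mu)$, and each KCFE has MSE $O(1/n)$, I obtain
\[
\mathrm{Var}\bigl(\widehat{\rho(y,y')}\bigr)\le \frac{C}{N_\mu^{2}}\sum_{i=1}^{N_\mu}\mathbb{E}\bigl|\hat{\phi}(1;\mu_i,y'\!-\!y)-\phi(1;\mu_i,y'\!-\!y)\bigr|^{2}=O\!\left(\frac{1}{N_\mu n}\right)=O(1/T_\mu),
\]
uniformly in $(y,y')$ because the oscillatory kernel $e^{i\mu(y+y')/2}$ has unit modulus. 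Combining the squared bias with the variance and taking the supremum over $y,y'$ yields the first claim. An identical argument, with a two-dimensional Riemann sum and variance scaling as $1/(N_\mu N_\nu n)$ applied to the overlap integral $\mathrm{tr}(\boldsymbol{\rho}_1\boldsymbol{\rho}_2)=\frac{1}{2\pi}\int\phi_1(1;\mu,\nu)\phi_2(1;-\mu,-\nu)\,d\mu\,d\nu$, gives the $\tilde{O}(1/T_{\mu,\nu})$ bound; here I would additionally use that the product estimator $\hat{\phi}_1\hat{\phi}_2$ is unbiased for $\phi_1\phi_2$ when the two states are measured independently, so that the cross term does not spoil the variance accounting.

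The main obstacle, and the reason only $\tilde{O}$ rather than $O$ can be claimed, is controlling the deterministic tail and the KCFE bias \emph{uniformly in the unknown state}. Without a Gaussian or energy-cutoff assumption one cannot invoke a fixed moment bound on $\phi$, so I would rely on trace-class integrability of the Wigner CF together with the bandwidth-adaptive kernel built into the KCFE to guarantee that choosing $\Lambda,N_\mu,N_\nu$ logarithmic in $n$ is sufficient. Verifying that the accumulated $\log$ factors, the step-size error, and the discretization of the oscillatory kernel all remain subdominant to the $1/T$ stochastic rate is the technically delicate part of the argument.
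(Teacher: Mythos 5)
Your overall architecture matches the paper's: both decompose the error into a truncation error outside $|\mu|\le\mu_{\max}$, a quadrature error from replacing the finite integral by a discrete sum, and a stochastic error from substituting the KCFE, and both balance these with logarithmic scaling of the window and grid. Your treatment of the stochastic term is essentially sound (the paper uses a cruder Cauchy--Schwarz bound $\varepsilon_K^2\le \mu_{\max}^2/(\pi^2 n)$ rather than exploiting independence across settings, but both land at $\tilde O(1/T_\mu)$). However, the deterministic half of your argument has two genuine gaps.

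First, the quadrature bound. You assert a discretization error of order $\Lambda^2/N_\mu^2$ (times derivative bounds) and claim that $\Lambda\asymp\log n$, $N_\mu\asymp\log n$ makes it $O(\mathrm{poly}\log n/n)$. It does not: with both $\Lambda$ and $N_\mu$ logarithmic, the step size $\Delta\mu=2\Lambda/N_\mu$ is $O(1)$ and a polynomial-order quadrature error of the form $\Lambda^a/N_\mu^b$ is at best $O(1)$ --- it never becomes subdominant to $1/n$. The paper avoids this by invoking the \emph{exponential} (spectral) accuracy of the trapezoidal rule for integrands that are analytic in a strip and decay exponentially on the real line, giving a discretization error $O\bigl(e^{-\pi\tau N_\mu/\mu_{\max}}\bigr)$, which with $N_\mu/\mu_{\max}\asymp\log n$ decays polynomially in $n$. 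Without analyticity and this exponential quadrature estimate, logarithmically many grid points are simply not enough, and this is the step your argument cannot close. Second, your tail control rests on the claim that the characteristic function of a trace-class operator ``decays faster than any polynomial.'' That is false in general: trace-class gives only $|\phi|\le 1$ and square-integrability via the Parseval relation. The paper instead \emph{assumes} exponential decay $|\phi(\mu)|\le Ce^{-\tau\mu}$ (justified by the Hermite--Gaussian structure of physically relevant tomograms), acknowledges that the truncation error is then only exponentially small in $\mu_{\max}$ rather than vanishing, and introduces an explicit parametric tail corrector to remove the residual constant offset. So the state-independence you hope for is not available; some decay hypothesis on $\phi$ is an unavoidable ingredient of the theorem as proved.
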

Using the KQSE to estimate the pure state $\boldsymbol{\rho}$ with $\boldsymbol{\rho}_n$ we get the optimal rate $
\mathrm{MSE}(\widehat{D^2(\boldsymbol{\rho},\boldsymbol{\rho}_n)})=\tilde{O}(T_{\mu,\nu}^{-1})$,
and the probability of KQSE to give a squared distance greater than 
$\delta$ decays like $\tilde{O}\left(T_{\mu,\nu}^{-1}\delta^{-2}\right)$.
\par Comparable to other state reconstruction and characterization methods that use parametric methods like MLE, we do not assume Gaussianity or use any prior information, only measurement data, reaching the same rate of convergence. Moreover, the estimator incorporates a built‐in filtering step, requiring no additional measurements even in the presence of noisy data. 
\begin{figure}[ht]
\includegraphics[width=\linewidth]{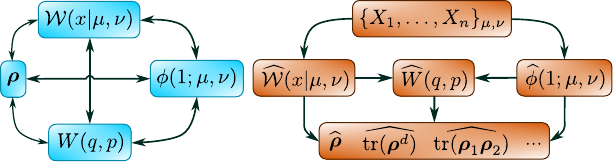}
\caption{In the left panel, a schematic diagram shows how the density operator $\boldsymbol{\rho}$, the tomogram $\mathcal{W}(x| \mu,\nu)$, the Wigner function $W(q,p)$, and the CF $\phi(1;\mu,\nu)$ are related. These representations are equivalent and interconvertible via one‐to‐one mappings. In the right panel, the flowchart depicts how each representation is reconstructed from the measured data $\{X_1,\dots,X_n\}_{\mu,\nu}$. Here, hats denote estimates rather than operators (e.g.\ $\widehat{\mathcal{W}}(x| \mu,\nu)$ is an estimate of $\mathcal{W}(x| \mu,\nu)$). In particular, the Wigner function can be obtained from either the tomogram or the CF. 
\label{fig_1}}
\end{figure}

\par
In~\cite{manko1997quantum,manko2023quantum}, the tomograms of quantum harmonic oscillators (HO) and inverted HO are shown to belong to the Gaussian-Hermite class~\cite{barndorff-nielsen2014information}, with the ground state yielding a simple Gaussian tomogram~\cite{manko2023quantum, markovich2024not}. Superpositions of Gaussian states form Gaussian mixtures~\cite{moya-cessa2014optical}, resulting in multimodal distributions with Gaussian tails. Non-Gaussian states such as cat states~\cite{yurke1986generating} and GKP states~\cite{gottesman2001encoding} exhibit even richer structures, with crystallized cat tomograms recently analyzed in~\cite{lopez-saldivar2022symplectic,markovich2024not}. While parametric models like Gaussian mixtures require prior assumptions about the number of modes and may fail for complex shapes, KQSE offer data-driven, flexible alternatives that adapt naturally to multi-modal distributions without model bias (see the later section \textit{Simulation Study}).
\\\textit{Kernel Quantum State Estimation.} We consider a quantum state in an infinite-dimensional Hilbert space $ \mathcal{H} $, represented by the density matrix operator $ \boldsymbol{\rho} $. By definition, it is Hermitian, positive semi-definite, and has unit trace.  
The  connection between $\boldsymbol{\rho}$ and its symplectic tomogram is given by $\mathcal{W}(x|\mu,\nu)=\mathrm{tr}\bigl[\boldsymbol{\rho}\,\delta(x\,\boldsymbol{1}-\mu\boldsymbol{q}-\nu\boldsymbol{p})\bigr]$ and
\begin{align}\label{1_1}
\boldsymbol{\rho}=\frac{1}{2\pi}\!\iint
  \left[\int\!\mathcal{W}(x|\mu,\nu)e^{ix}\,dx\right]
e^{-i(\mu\boldsymbol{q}+\nu\boldsymbol{p})}\,d\mu\,d\nu,
\end{align}
where $\delta(\cdot)$ denotes the Dirac delta function of an operator. 
 The tomogram is a PDF (always nonnegative and normalized) of a continuous random variable $X_{\mu,\nu}$ corresponding to the quadrature operator $\boldsymbol{X}_{\mu,\nu}$ (see Supplemental Material). 
\par Consider identically prepared CV systems in an unknown state with density operator $\boldsymbol{\rho}$. For each system and a fixed setting $(\mu_i,\nu_j)$, performing $n$ independent measurements yields independent and identically distributed (i.i.d.) random variables
$\{X_{l,\mu_i,\nu_j}\}$ from an unknown PDF $\mathcal{W}(x|\mu_i,\nu_j)$.  
We introduce the KDE of this tomogram  as:
\begin{eqnarray}\label{1500}
    \widehat{\mathcal{W}}_{nh}(x|\mu,\nu) = \frac{1}{nh} \sum_{l=1}^{n} K_{\mu,\nu}\left(\frac{x-X_{l,\mu,\nu} }{h}\right).
\end{eqnarray} 
Note that, following mathematical convention, in the following section we  use hats for the estimators. 
The kernel function $K_{\mu,\nu}(\cdot )$ is typically chosen to be a PDF (e.g., Gaussian). 
KDE estimates probability densities from finite samples without assuming any parametric form. By adjusting the bandwidth $h > 0$,  it can resolve complex and multimodal structures more accurately than histograms or MLE, which often suffer from binning artifacts or model bias~\cite{chen2017tutorial,ChaconDuong2020}. 
\par The total variation distance $T(\mathcal{W}_1, \mathcal{W}_2)$ between two tomograms is defined as half of the $L_1(\mathcal{W}_1, \mathcal{W}_2)= \int \left| \mathcal{W}_1(x|\mu,\nu) - \mathcal{W}_2(x|\mu,\nu) \right|\, dx$.
Then:
\begin{thm}\label{thm1}(Lower bound to closeness of any
states)
 The trace distance $D(\boldsymbol{\rho}_1, \boldsymbol{\rho}_2)$ between two CV quantum states is lower bounded by the total variation $T(\mathcal{W}_1, \mathcal{W}_2)$ between their tomograms.
\end{thm}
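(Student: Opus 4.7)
My plan is to recognize the total variation between tomograms as the $L_1$ distance between outcome distributions of a specific quantum measurement, namely homodyne detection of the rotated quadrature $\boldsymbol{X}_{\mu,\nu}$, and then invoke the standard contractivity of trace distance under quantum-to-classical channels (the measurement form of the data-processing inequality, closely related to the Helstrom bound). The inequality then reduces to Jordan decomposition plus the completeness of the homodyne POVM.

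Concretely, I would first identify the spectral resolution of $\boldsymbol{X}_{\mu,\nu}=\mu\boldsymbol{q}+\nu\boldsymbol{p}$ with a continuous POVM $\{E_{\mu,\nu}(x)\}$ where $E_{\mu,\nu}(x)=\delta(x\hat{1}-\mu\boldsymbol{q}-\nu\boldsymbol{p})\geq 0$ and $\int E_{\mu,\nu}(x)\,dx=\hat{I}$, so that by the definition recalled in the paper the tomogram is $\mathcal{W}_k(x|\mu,\nu)=\tr[\boldsymbol{\rho}_k E_{\mu,\nu}(x)]$ for $k=1,2$. Next I would apply the Jordan decomposition $\boldsymbol{\rho}_1-\boldsymbol{\rho}_2=P-Q$ with $P,Q\geq 0$ and $PQ=0$, which exists because the difference is Hermitian and trace class; since $\tr(\boldsymbol{\rho}_1-\boldsymbol{\rho}_2)=0$, one has $\tr(P)=\tr(Q)=D(\boldsymbol{\rho}_1,\boldsymbol{\rho}_2)$.

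From this decomposition, for every $x$ the quantities $\tr[PE_{\mu,\nu}(x)]$ and $\tr[QE_{\mu,\nu}(x)]$ are each non-negative, so the triangle inequality gives
\[
\bigl|\mathcal{W}_1(x|\mu,\nu)-\mathcal{W}_2(x|\mu,\nu)\bigr|\leq \tr[PE_{\mu,\nu}(x)]+\tr[QE_{\mu,\nu}(x)].
\]
Integrating in $x$ and using completeness of the POVM together with linearity of the trace yields $L_1(\mathcal{W}_1,\mathcal{W}_2)\leq \tr(P)+\tr(Q)=2D(\boldsymbol{\rho}_1,\boldsymbol{\rho}_2)$, which after dividing by two is exactly $T(\mathcal{W}_1,\mathcal{W}_2)\leq D(\boldsymbol{\rho}_1,\boldsymbol{\rho}_2)$ for every fixed $(\mu,\nu)$, as claimed.

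The main technical obstacle is that $E_{\mu,\nu}(x)$ involves generalized, non-normalizable eigenstates of the unbounded self-adjoint operator $\boldsymbol{X}_{\mu,\nu}$, so the completeness relation and the interchange of trace with integral require careful justification via the spectral theorem: one works with the projection-valued measure $dE_{\mu,\nu}$ associated to $\boldsymbol{X}_{\mu,\nu}$, tests against indicator or Schwartz functions, and then passes to the density limit by dominated convergence (using that $P,Q$ are trace class). Once this operator-theoretic cleanup is in place, the remaining content is the textbook contractivity of trace distance under a quantum measurement, specialized to the tomographic POVM.
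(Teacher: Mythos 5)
Your proposal is correct and follows essentially the same route as the paper: the paper bounds $|\tr(\boldsymbol{U}_{\mu,\nu}(x)(\boldsymbol{\rho}_1-\boldsymbol{\rho}_2))|\leq \tr(\boldsymbol{U}_{\mu,\nu}(x)|\boldsymbol{\rho}_1-\boldsymbol{\rho}_2|)$ and then uses the POVM normalization $\int\boldsymbol{U}_{\mu,\nu}(x)\,dx=\hat{1}$, and your Jordan decomposition $\boldsymbol{\rho}_1-\boldsymbol{\rho}_2=P-Q$ with $|\boldsymbol{\rho}_1-\boldsymbol{\rho}_2|=P+Q$ is exactly the mechanism behind that one-line inequality. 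Your version merely makes the absolute-value step and the domain-theoretic caveats explicit; the substance is identical.
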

Unlike previous results limited to Gaussian states~\cite{wu2024efficient,mele2024learning,holevo2024estimates}, our inequality holds for arbitrary quantum states. The upper bound we derive is not tight in general and becomes formally infinite unless the total variation is either compactly supported or decays sufficiently fast at infinity, as is the case for PDFs with Gaussian tails. This implies that some prior knowledge about the PDF class is needed to obtain a tighter bound using our method. For Gaussian states, tighter upper bounds have been derived in recent works~\cite{holevo2024estimates,mele2024learning} based on the first two moments.
\par   For the $L_1(\widehat{\mathcal{W}}_{nh},\mathcal{W})$ distance, KDE achieves a convergence rate of $O(n^{-2/5})$. Then:
\begin{thm}\label{thm2}
Let $\boldsymbol{\rho}$ be an unknown state.
  The number of copies $n$ of $\boldsymbol{\rho}$ required to estimate it by  the KDE method with precision $\varepsilon$ in trace distance has to scale at least as $O(\varepsilon^{-5/2})$. 
\end{thm}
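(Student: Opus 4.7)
The strategy is to chain two \emph{lower} bounds so as to produce a lower bound on $n$. The first is Theorem~\ref{thm1}, which forces a small trace distance to imply a small total variation between tomograms; the second is the classical minimax lower bound for kernel density estimation in $L_1$, which prevents any KDE from approximating a smooth density faster than $n^{-2/5}$. Both ingredients point in the lower-bounding direction, so composing them yields $n\gtrsim\varepsilon^{-5/2}$ rather than a sufficiency statement.

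In detail, let $\boldsymbol{\rho}_n$ denote the KDE-based estimator obtained by substituting $\widehat{\mathcal W}_{nh}$ of~\eqref{1500} into~\eqref{1_1}, and assume it achieves trace-distance precision $D(\boldsymbol{\rho},\boldsymbol{\rho}_n)\le\varepsilon$. Fix a measurement direction $(\mu,\nu)$. By construction, the tomogram of $\boldsymbol{\rho}_n$ in that direction is the KDE $\widehat{\mathcal W}_{nh}(\cdot|\mu,\nu)$, so Theorem~\ref{thm1} applied to the pair $(\boldsymbol{\rho},\boldsymbol{\rho}_n)$ yields
\begin{equation*}
\tfrac12\,L_1\!\bigl(\mathcal W(\cdot|\mu,\nu),\,\widehat{\mathcal W}_{nh}(\cdot|\mu,\nu)\bigr)\;=\;T(\mathcal W,\widehat{\mathcal W}_{nh})\;\le\;D(\boldsymbol{\rho},\boldsymbol{\rho}_n)\;\le\;\varepsilon.
\end{equation*}
I then invoke the lower half of the minimax theorem for density estimation in $L_1$: for any density $f$ in the natural twice-differentiable smoothness class (to which tomograms of Gaussian, Fock, GKP, and cat states all belong), every KDE satisfies $\mathbb E\,L_1(f,\hat f_{nh})=\Omega(n^{-2/5})$, for any (possibly data-driven) bandwidth $h$. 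The standard proof is a Le~Cam two-point construction: one perturbs $f$ by a bump of height $\asymp h^2$ on an interval of length $h$ and shows that the two hypotheses are statistically indistinguishable unless the sample size compensates. Combining the two inequalities,
\begin{equation*}
c\,n^{-2/5}\;\le\;\mathbb E\,L_1\!\bigl(\mathcal W,\widehat{\mathcal W}_{nh}\bigr)\;\le\;2\varepsilon,
\end{equation*}
which rearranges to $n\ge c'\varepsilon^{-5/2}$, establishing the claim.

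\textbf{Main obstacle.} The technical subtlety is the reconstruction step: $\widehat{\mathcal W}_{nh}$ is a valid PDF, but the $\boldsymbol{\rho}_n$ obtained from~\eqref{1_1} need not be positive semi-definite, and in principle only its tomograms in the directions actually used for measurement are anchored to the KDE data. I would side-step both points by applying Theorem~\ref{thm1} only in each direction $(\mu_i,\nu_j)$ where a KDE is built---there equality of tomograms is tautological---so the per-direction chain above holds verbatim, and averaging over the measurement grid preserves the rate. A secondary verification is that physical tomograms lie in the smoothness class underlying the $\Omega(n^{-2/5})$ floor; this follows from the real-analyticity of the Wigner--Radon marginals of any normalisable state.
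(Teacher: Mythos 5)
Your argument is essentially the paper's: Theorem~\ref{thm2} is justified there by combining Theorem~\ref{thm1} with the $O(n^{-2/5})$ rate of the KDE in the $L_1$ norm and solving $n^{-2/5}=\varepsilon$, which is exactly your chain of inequalities. The one refinement you add --- correctly --- is to make explicit that a \emph{necessity} claim on $n$ requires the matching $\Omega(n^{-2/5})$ lower bound on the KDE's $L_1$ error (for a fixed second-order kernel this comes from the deterministic bias floor of the specific estimator rather than from a minimax/Le~Cam two-point argument, which bounds all estimators over a class), a direction the paper states only as an upper bound and otherwise leaves implicit.
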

 For the $L_2$ distance (i.e., mean integrated squared error (MISE)), KDE achieves a rate of $O(n^{-4/5})$ (see Supplemental Material), outperforming the optimal histogram rate of $O(n^{-2/3})$, but remaining slower than the parametric MLE rate of $O(n^{-1})$~\cite{silverman2018density}. This reflects the tradeoff for model flexibility: KDE requires only smoothness of the underlying density, without assuming a specific parametric form. The resulting tomograms can be directly used to reconstruct the quantum state via Eq.~\eqref{1_1}, but the convergence remains slower than $O(n^{-1})$. In the following, we show that incorporating the CF allows us to recover the optimal $O(n^{-1})$ rate.
\\\textit{Kernel Characteristic Function Estimation.}
One can rewrite \eqref{1_1} using the CF $\phi(t=1;\mu,\nu)$, which is the Fourier transform $e^{ix t}$ of the tomogram. The CF always exists, fully characterizes the distribution, and satisfies $\phi(0) = 1$, $|\phi(t)| \leq 1$, and $\phi(-t) = \phi^{\star}(t)$. In Ref.~\cite{markovich2024not}, we formulated conditions on the CF ensuring $\boldsymbol{\rho}$ reconstruction. Here, we present its equivalent integral formulation:
\begin{thm}{(Hermiticity,  Normalization, Positivity)}\label{thm_3}
The integral \eqref{1_1}  defines the density operator corresponding to a quantum state if and only if the CF of the tomogram satisfies:
 \begin{eqnarray}\label{th3}
&&\phi(1;\mu,\nu)= \phi(-1;-\mu,-\nu) \quad \forall \mu,\nu,\quad \phi(1;0,0)=1,\nonumber\\
&& 0\leq  \frac{1}{2\pi}
    \iint \phi(1;\mu,\nu)\phi_{{\chi}}^{\star}(1;\mu,\nu)d\mu d\nu\leq 1, \quad \forall \ket{\chi}.
\end{eqnarray}
\end{thm}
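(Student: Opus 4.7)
The plan is to translate each of the three defining properties of a density operator---Hermiticity ($\boldsymbol{\rho}^{\dagger}=\boldsymbol{\rho}$), unit trace ($\tr\boldsymbol{\rho}=1$), and positive semidefiniteness ($\boldsymbol{\rho}\geq 0$)---into an equivalent constraint on $\phi(1;\mu,\nu)$ by substituting the integral representation~\eqref{1_1} and exploiting the orthogonality of the Weyl operators $e^{-i(\mu\boldsymbol{q}+\nu\boldsymbol{p})}$ under the Hilbert--Schmidt inner product. Since each of the three translations is an equivalence in its own right, their conjunction delivers the desired ``if and only if''.

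For Hermiticity, I would take the adjoint of~\eqref{1_1}, use $\bigl(e^{-i(\mu\boldsymbol{q}+\nu\boldsymbol{p})}\bigr)^{\dagger}=e^{+i(\mu\boldsymbol{q}+\nu\boldsymbol{p})}$, and then relabel $(\mu,\nu)\to(-\mu,-\nu)$ to recast $\boldsymbol{\rho}^{\dagger}$ in the same form as~\eqref{1_1}. Orthogonality of the Weyl operators then forces $\phi^{\star}(1;-\mu,-\nu)=\phi(1;\mu,\nu)$. Because $\mathcal{W}(x|\mu,\nu)$ is a real nonnegative PDF, its Fourier transform obeys $\phi^{\star}(t;\mu,\nu)=\phi(-t;\mu,\nu)$, converting the identity into the stated form $\phi(1;\mu,\nu)=\phi(-1;-\mu,-\nu)$.

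For unit trace, I would take the trace of~\eqref{1_1} and invoke the standard identity $\tr\bigl[e^{-i(\mu\boldsymbol{q}+\nu\boldsymbol{p})}\bigr]=2\pi\,\delta(\mu)\delta(\nu)$, which collapses the double integral to a point evaluation and gives $\tr\boldsymbol{\rho}=\phi(1;0,0)$. For positivity, I would sandwich~\eqref{1_1} with $\langle\chi|\cdot|\chi\rangle$ and identify $\langle\chi|e^{i(\mu\boldsymbol{q}+\nu\boldsymbol{p})}|\chi\rangle=\phi_{\chi}(1;\mu,\nu)$ as the characteristic function of the pure state $|\chi\rangle\langle\chi|$; Hermitian conjugation then yields $\langle\chi|e^{-i(\mu\boldsymbol{q}+\nu\boldsymbol{p})}|\chi\rangle=\phi_{\chi}^{\star}(1;\mu,\nu)$, producing exactly the integrand in~\eqref{th3}. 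Since $\boldsymbol{\rho}\geq 0$ is equivalent to $\langle\chi|\boldsymbol{\rho}|\chi\rangle\geq 0$ for every $|\chi\rangle$, the lower bound follows; the upper bound by $1$ for normalized $|\chi\rangle$ is then automatic from the combination of positivity and trace-one via $\|\boldsymbol{\rho}\|_{\infty}\leq\tr\boldsymbol{\rho}=1$.

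The main obstacle will be the analytic justification of the interchanges of trace, inner product, and the improper integrals in~\eqref{1_1}, since $\boldsymbol{q}$ and $\boldsymbol{p}$ are unbounded and the $\delta$-function identities must be read distributionally. The standard remedy is to establish the equivalences first on a dense set of test vectors (e.g.\ Schwartz-class $|\chi\rangle$), for which all manipulations are absolutely convergent, and then extend to arbitrary $|\chi\rangle\in\mathcal{H}$ by continuity; this is routine but must be acknowledged so that the three pointwise conditions on $\phi$ rigorously encode the operator-level properties of $\boldsymbol{\rho}$.
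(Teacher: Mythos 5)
Your proposal is correct and follows essentially the same route as the paper, whose supporting computations are distributed through Supplemental Section S1.3: the trace identity $\tr\bigl[e^{-i(\mu\boldsymbol{q}+\nu\boldsymbol{p})}\bigr]=2\pi e^{i\mu\nu/2}\delta(\mu)\delta(\nu)$ yielding $\tr\boldsymbol{\rho}=\phi(1;0,0)$, the noncommutative Parseval relation giving $\langle\chi|\boldsymbol{\rho}|\chi\rangle=\tfrac{1}{2\pi}\iint\phi\,\phi_{\chi}^{\star}\,d\mu\,d\nu\in[0,1]$, and the CF symmetry $\phi(-t)=\phi^{\star}(t)$ combined with the adjoint of the Weyl expansion for Hermiticity. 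Your closing remark about justifying the distributional manipulations on a dense domain is a reasonable caveat that the paper itself leaves implicit.
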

Similar result is provided in Ref.~\cite{holevo2011probabilistic} in no connection to the tomographic formalism, and the function satisfying the left inequality in the condition \eqref{th3} is named $\Delta$-positive defined.
We show that the density matrices $\boldsymbol{\rho}_1$ and $\boldsymbol{\rho}_2$ of two quantum states satisfy the noncommutative Parseval relation:
\begin{eqnarray}\label{1357}
   \tr{( \boldsymbol{ \rho}_1\boldsymbol{\rho}_2)}
  &=& \frac{1}{2\pi}\iint \phi_1(1;\mu,\nu) \phi_2^{\star}(1;\mu,\nu) d\mu d\nu,
\end{eqnarray}
where $\phi_1$ and $\phi_2$ are the CFs corresponding to the tomograms of the states.
If $\boldsymbol{\rho}_1=\boldsymbol{\rho}_2$ the latter provides the purity, while 
$\tr{\left(\boldsymbol{\rho}^d\right)}$ and $\tr{\left(\prod_{i=1}^{d}\boldsymbol{\rho}_i\right)}$ are given in Supplemental Material. 
 Since the distance between  two pure states is $D(\boldsymbol{\rho}_1,\boldsymbol{\rho}_2)=\sqrt{1-\tr{( \boldsymbol{ \rho}_1\boldsymbol{\rho}_2)}}$~\cite{holevo2024estimates}, we conclude that it is fully determined by the CFs in Eq.~\eqref{1357}.
Finally, in the coordinate representation, the  elements of $\boldsymbol{\rho}$ follow from the CF via Fourier transform
\begin{eqnarray}\label{1154}
  \!\!\!  \!\!\! \! \rho(y,y') = \frac{1}{2\pi}\int\phi\bigl(1;\mu,y-y'\bigr)\exp\Bigl(-i\,\tfrac{\mu(y+y')}{2}\Bigr)\,d\mu,
    \label{eqn: integral from matrix elements}
\end{eqnarray}
and similar expressions can be written in any basis. 
\par Thus, knowing the CF of a state allows one to fully characterize it and efficiently estimate both the distance and overlap between states, forming the basis of our KQSE framework. To achieve optimal convergence rates, we introduce the nonparametric KCFE as
\begin{eqnarray}\label{1041_1}
\hat{\phi}_{nh}(t;\mu,\nu)=\hat{\phi}_{X,n}(t;\mu,\nu) \phi_K(th;\mu,\nu)
\end{eqnarray}
with the optimal bandwidth selection. 
We see that the latter estimator is a product of two CFs: 
$\hat{\phi}_{X,n}(t;\mu,\nu)=\frac{1}{n} \sum_{j=1}^{n}\exp{(it X_{j,\mu,\nu})}$
depends on the measured data, while 
$\phi_K(th;\mu,\nu)\equiv\int_{-\infty}^{\infty}K_{\mu,\nu}(z)\exp{(ithz)}dz$
depends on the kernel's bandwidth parameter. The form of $\phi_{K}(t h;\mu,\nu)$ is known analytically because the chosen $K_{\mu,\nu}(z)$ has a known analytical expression.
\par 
Homodyne tomography typically neglects various losses in the detection process, such as mode mismatching and detector inefficiencies, which distort the observed distribution relative to the idealized case. In~\cite{leonhardt1995measuring}, the following simplified error model is introduced:
\begin{eqnarray}\label{1033}
   Z \equiv^d \kappa X + (1 - \kappa) Y, \quad 0 < \kappa < 1,
\end{eqnarray}
where $\equiv^d$ denotes equality in distribution. Here, $\kappa$ represents the overall detection efficiency, and $Y$ is an independent noise variable, typically modeled as i.i.d. Gaussian. While the true signal-noise interaction may be nonlinear and the noise non-Gaussian, the analysis of such models lies beyond the scope of this paper.
\par Our method requires the CF of the signal variable $X$ to be filtered from noise. If $X$ and $Y$ are independent, the PDF of their sum equals the convolution of the individual PDFs, and the CF of their sum is the product of individual CFs. Thus, if the CF of the noise $\phi_Y((1-\kappa)t)$ is invertible, the CF of the target variable is given by ${\phi}_X(t) = {\phi}_Z\left(\tfrac{t}{\kappa}\right)
    \phi_Y^{-1}\left(\tfrac{(1 - \kappa)t}{\kappa}\right)$, $ \kappa \in (0,1)$, 
where $\phi_Z(t)$ is the CF of the observable $Z$.
A similar idea arises in the deconvolution of signal densities from noisy data~\cite{stefanski1990deconvolving,fan1991optimal,delaigle2004practical}, where one typically assumes the CF of the noise satisfies the nonvanishing condition: $|\phi_Y(t)| > 0$, $\forall t$. In this deconvolution formula, division by $\phi_Y(t)$ is required, and zeros in the noise CF would render the inversion ill-posed or undefined. For instance, the CF of a normal distribution is strictly nonzero, ensuring stability, whereas distributions such as the uniform law have CFs that vanish at specific points, causing potential singularities.
\par 
Thus, the model \eqref{1033} and the assumed noise distribution determine the form of the correction in the CF estimation. Using \eqref{1041_1}, we define the kernel estimate of the signal CF $\phi_X$ filtered from the noise:
\begin{eqnarray}\label{CF_fil}
&&\hat{\phi}_{X,nh}( t ;\mu,\nu) = 
\hat{\phi}_{Z,nh}( t ;\mu,\nu)
\phi_Y^{-1}\left(\frac{(1-\kappa)t}{\kappa}\right),\\\nonumber
&&\hat{\phi}_{Z,nh}( t ;\mu,\nu) \equiv \hat{\phi}_{Z,n}\left(\frac{t}{\kappa};\mu,\nu\right)
\phi_K\left(\frac{t}{\kappa} h;\mu,\nu\right),
\end{eqnarray}
where $\phi_Y^{-1}$ is known from the noise model, and $\phi_K$ depends on the kernel choice.
\begin{thm}\label{thm4}
(Convergence of KCFE)
Let $\hat{\phi}_{X,nh}(t;\mu,\nu)$ be the KCFE of  $\phi(t;\mu,\nu)$, constructed from $n$ i.i.d. measurements of a fixed quadrature $X_{\mu,\nu}$ (or $Z_{\mu,\nu}$ in the noisy case). Then, for an optimally chosen bandwidth $h_{\phi} = O(n^{-1/2})$, the mean squared error satisfies
$\mathrm{MSE}(\hat{\phi}_{X,nh_{\phi}}) = O(n^{-1})$.
\end{thm}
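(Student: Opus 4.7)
The plan is a standard bias-variance decomposition of the (complex-valued) mean squared error,
\begin{equation*}
\mathrm{MSE}(\hat{\phi}_{X,nh}(t;\mu,\nu)) = \bigl|\mathbb{E}[\hat{\phi}_{X,nh}(t;\mu,\nu)] - \phi_X(t;\mu,\nu)\bigr|^2 + \mathrm{Var}\bigl(\hat{\phi}_{X,nh}(t;\mu,\nu)\bigr),
\end{equation*}
treating the squared-bias and the variance contributions separately. The key structural observation is that the empirical CF is built from bounded summands ($|e^{itX_j}|=1$), so its variance is automatically at the parametric $1/n$ rate; the role of the kernel smoothing factor $\phi_K$ is to regularize downstream inversions, and the bandwidth $h$ only has to be chosen small enough that the bias it induces is negligible compared with this variance. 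This is why a rate faster than the balance point is acceptable, explaining the choice $h_\phi=O(n^{-1/2})$ rather than the MISE-optimal choice used in the KDE analysis.

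For the bias I would evaluate $\mathbb{E}[\hat{\phi}_{X,nh}(t;\mu,\nu)]$ in both the noise-free and noisy cases. In the noise-free case, unbiasedness of the empirical CF gives $\mathbb{E}[\hat{\phi}_{X,nh}(t;\mu,\nu)] = \phi_X(t;\mu,\nu)\,\phi_K(th;\mu,\nu)$. In the model~\eqref{CF_fil}, plugging in $\phi_Z(\tau)=\phi_X(\tau)\,\phi_Y((1-\kappa)\tau/\kappa)$ at $\tau=t/\kappa$, the deterministic factor $\phi_Y^{-1}((1-\kappa)t/\kappa)$ cancels the noise part exactly, leaving the analogous expression with $\phi_K(th/\kappa;\mu,\nu)$. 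Assuming $K_{\mu,\nu}$ is a symmetric PDF with finite variance $\sigma_K^2$ (the Gaussian choice suffices), a Taylor expansion around the origin gives $\phi_K(th;\mu,\nu)=1-\tfrac{1}{2}\sigma_K^2 t^2 h^2 + o(h^2)$. Combined with $|\phi_X|\le 1$, the squared bias is $O(h^4)$ at fixed $(t,\mu,\nu)$.

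For the variance I would use independence together with $|e^{isX_j}|=1$ to obtain $\mathrm{Var}(\hat{\phi}_{Z,n}(t/\kappa;\mu,\nu)) = (1-|\phi_Z(t/\kappa;\mu,\nu)|^2)/n \le 1/n$. Multiplying by the deterministic factors $|\phi_K(th/\kappa;\mu,\nu)|^2 \le 1$ and $|\phi_Y^{-1}((1-\kappa)t/\kappa)|^2$, the latter finite at fixed $t$ and $\kappa\in(0,1)$ by the nonvanishing assumption on $\phi_Y$, yields $\mathrm{Var}(\hat{\phi}_{X,nh})=O(1/n)$. Combining with the bias bound gives $\mathrm{MSE}(\hat{\phi}_{X,nh}) = O(h^4)+O(1/n)$, and the prescribed choice $h_\phi=O(n^{-1/2})$ drives the squared bias to $O(n^{-2})$, leaving the variance $O(n^{-1})$ as the dominant term.

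The main obstacle is the behavior of the deconvolution factor $\phi_Y^{-1}$: for supersmooth noise distributions (e.g.\ Gaussian), $1/\phi_Y$ grows rapidly in its argument, so any attempt to make the statement uniform in $t$ or to use a much larger bandwidth would require delicate tail assumptions on $\phi_Y$ and would generically degrade the rate, in line with the deconvolution literature~\cite{stefanski1990deconvolving,fan1991optimal}. Since the KCFE is in fact evaluated at the fixed value $t=1$ appearing in the integral representations~\eqref{1_1},~\eqref{1357},~\eqref{1154}, all relevant constants stay finite and the pointwise $O(1/n)$ bound suffices for the downstream KQSE analysis in Theorem~\ref{thm0}.
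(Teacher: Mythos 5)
Your proposal is correct and follows essentially the same route as the paper: a bias--variance decomposition with $\mathrm{Bias}(\hat{\phi}_{nh}(t))=(\phi_K(th)-1)\phi(t)$ and $\mathrm{Var}(\hat{\phi}_{nh}(t))=|\phi_K(th)|^2(1-|\phi(t)|^2)/n$, yielding $O(h^4)+O(1/n)$ and hence $O(1/n)$ at $h=O(n^{-1/2})$. The only differences are minor: the paper specializes to the Gaussian kernel and minimizes the exact MSE to exhibit the optimal bandwidth \eqref{1924} in closed form, whereas you bound the bias by a generic second-order Taylor expansion and conclude by domination (which works for any symmetric finite-variance kernel), and you additionally carry the deconvolution factor $\phi_Y^{-1}$ through the MSE bound, a step the paper relegates to its separate noise-filtering discussion without redoing the error analysis there.
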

The KCFE yields a discrete set of CF values, allowing only partial reconstruction of the integrals in Eqs.~\eqref{1357} and~\eqref{eqn: integral from matrix elements}. In KQSE, we decompose each integral into a bounded central region and tail contributions. The central part, supported on $|\mu| \leq \mu_{\max}, |\nu| \leq \nu_{\max}$, is estimated using KCFE combined with a discrete Fourier transform, while the tails are approximated based on the known asymptotic decay of the CF. This decomposition is justified by the rapid decrease of quadrature distributions~\cite{lvovsky2002quantumoptical,markovich2024not}, which ensures that truncation errors are well-controlled. The convergence rate of KQSE is established in Theorem~\ref{thm0}, and a detailed error analysis including finite-window effects, discretization errors, and KCFE accuracy is given in the Supplemental Material.
\\\textit{Simulation Study.}
Let the observable of interest $X$ be generated by the coherent cat state (CCS): $\ket{\psi}_{\mathrm{CCS}}=N_c\sum_{j=1}^{3} \ket{a_j}$, where $\ket{a_j} = \ket{a\, \exp{(\tfrac{2\pi i (j-1)}{3})}}$, $j = 1, 2, 3$, with $\ket{a}$ being Glauber coherent states of the HO and $N_c$ the normalization constant. The corresponding tomogram is multimodal, non-Gaussian PDF, with a Gaussian tail. We select  $\mu = 0.8$, $\nu = 1.2$, and $a = 1.0 + 0.5i$ and the KDE of this tomogram is shown in Fig.~\ref{fig:1}.
\begin{figure}[h]
\includegraphics[width=0.7\linewidth]{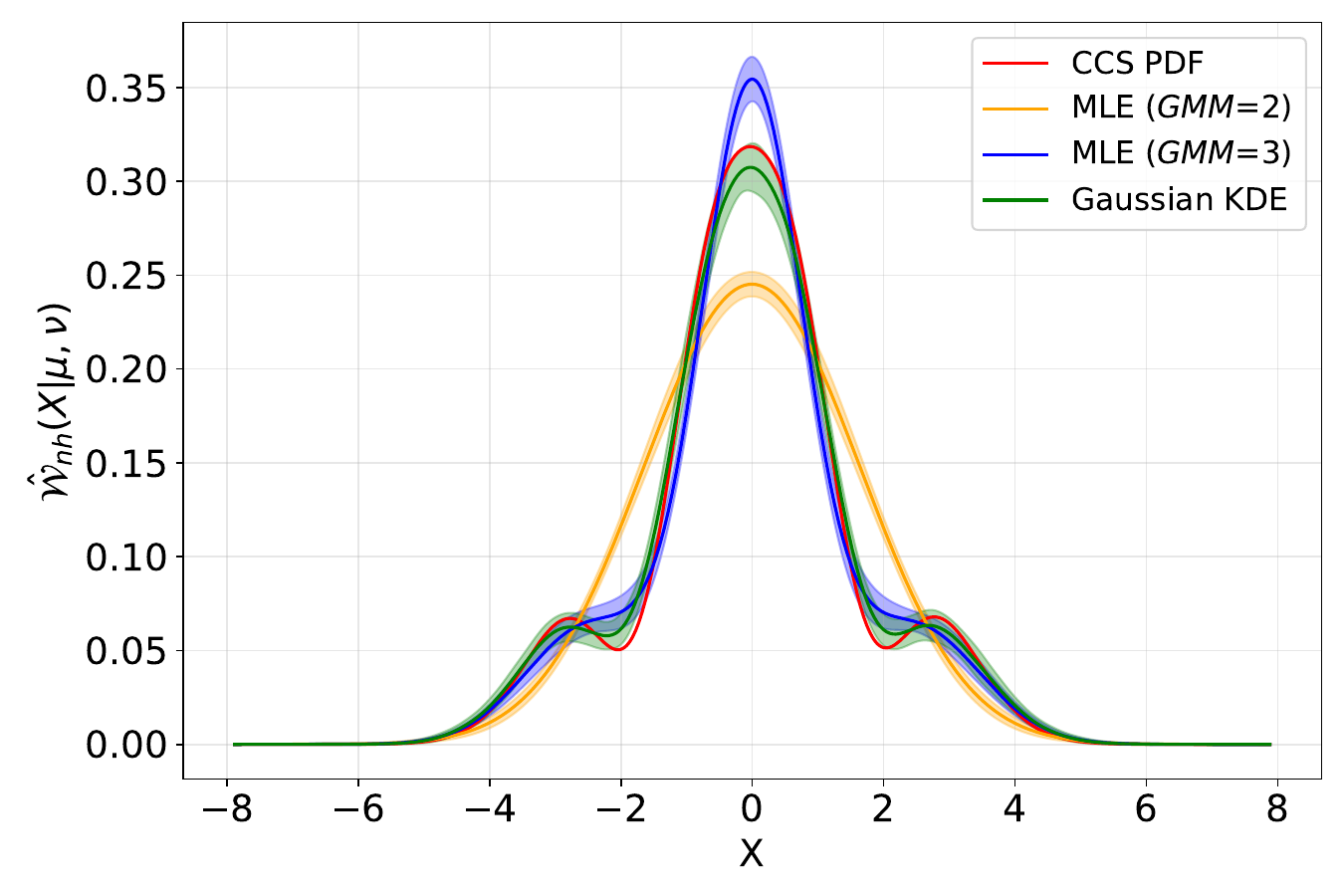}\caption{
Comparison of CCS tomogram reconstructions using MLE ($GMM=2,3$) and Gaussian KDE for $n=1000$ with $\mu=0.8$, $\nu=1.2$, and $a=1+0.5i$.
}
  \label{fig:1}
\end{figure}
\textcolor{black}{The KDE estimation is performed using a Gaussian kernel with the bandwidth selected via cross-validation. For comparison, we also perform MLE-based reconstructions under different parametric assumptions, modeling the tomogram as a Gaussian mixture. In particular, we consider a two-component mixture ($GMM=2$), which represents a deliberately misspecified model, and a three-component mixture ($GMM=3$), which provides a more adequate description. This comparison is used to illustrate the effect of model mismatch on the performance of MLE-based estimators. The symmetrical kernel selection is not important for the asymptotics, however changing the bias of the estimate (see Supplemental Material).}
\par \textcolor{black}{Next we consider the case where the observable data follows the mixture~\eqref{1033}, with the noise component $Y$ drawn from a normal distribution $N(0,1)$ and fixed $\kappa = 0.85$. The CF is estimated via~\eqref{CF_fil}, using $N_{\mu} = 160$ evaluation points uniformly sampled in $\mu \in [0,8]$, with $\nu=1.2$, as shown in Fig.~\ref{fig:2}. A KCFE with the Gaussian kernel and the optimal bandwidth $h_{\phi}=O(n^{-1/2})$ is used. For comparison, CF estimates obtained by Fourier transforming the MLE-based tomograms with $GMM=2$ and $GMM=3$ are also shown. The built-in correction procedure is observed to significantly enhance the estimation performance.
}
\begin{figure}[h]
\includegraphics[width=0.7\linewidth]{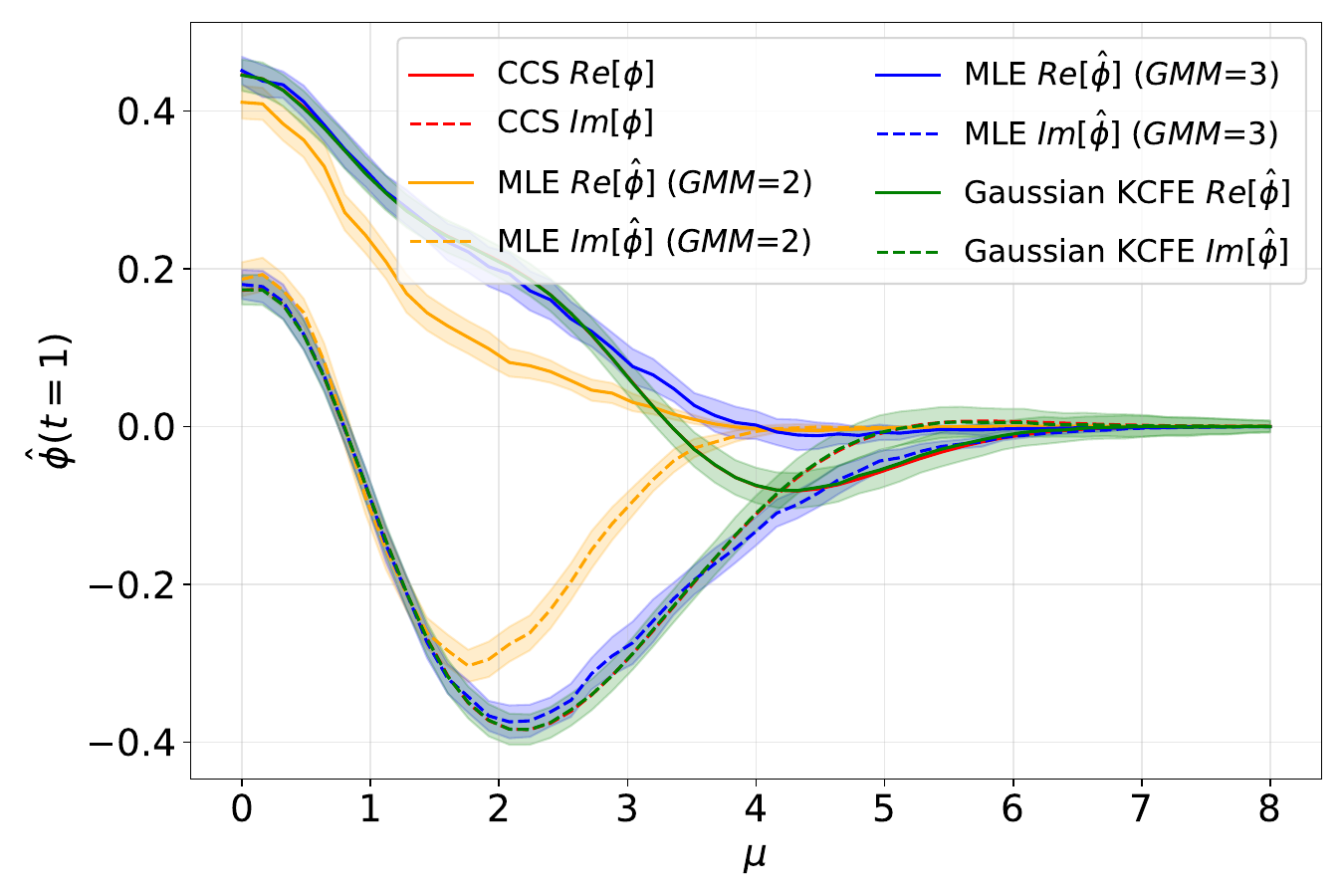}
  \caption{Comparison of CCS CF estimator obtained using MLE ($GMM=2,3$) and KCFE for $n=1000$ and  $\nu=1.2$.}
  \label{fig:2}
\end{figure}
\par Using the set of KCFEs $\{\hat{\phi}_{nh}(1;\mu_i,y-y')\}_{i=1}^{N_{\mu}}$, we use KQSE procedure to reconstruct the kernel \eqref{1154} of cat state density operator. With the set  $\{\hat{\phi}_{nh}(1;\mu_i,\nu_j)\}_{i,j=1}^{N_{\mu},N_{\nu}}$  KQSE reconstructs the overlap \eqref{1357} between the known coherent cat state $\boldsymbol{\rho}$ and its estimate from noisy data $\boldsymbol{\rho}_n$ for various total sample sizes $T_{\mu,\nu}$. The results are provided in Tab.~\ref{tab_1}. Among the parameters varied, the sample size $n$ and the number of measurement settings $N_{\mu}$, $N_{\nu}$ exhibit the strongest influence on the reconstruction accuracy, yielding  reduction in the total errors, while the effect of increasing $\mu_{\max}$, $\nu_{\max}$ is slightly weaker. A detailed analysis of the parameter sensitivities and a comparison with MLE-based methods are provided in the Supplemental Material.
\par \textcolor{black}{In addition to numerical simulations, we validate the proposed KQSE framework using experimentally measured homodyne-tomographic data. Specifically, we analyze non-Gaussian optical data corresponding to a Schr{\"o}dinger kitten state obtained via conditional measurements in a quantum-optical experiment [see Ref.~\cite{Lvovsky_2002,Lvovsky_2004}]. From the experimental quadrature samples, we reconstruct the tomogram, CF, and state kernel using KQSE, and compare the results with MLE under both correctly specified and misspecified parametric models. KQSE consistently yields the lowest reconstruction errors for both simulated and experimental data, outperforming MLE-based approaches in the presence of model mismatch and noise.
 Full details of the experimental dataset, reconstruction procedure, and quantitative comparison are presented in the Supplementary Material.}
\begin{table}[b]
\caption{\label{tab_1}%
Uniform and MSE errors of the KQSE reconstruction for fixed $\mu_{\max}=6$ with total number of measurements $T_{\mu}=T_{\mu,\nu}=T$.}
\begin{ruledtabular}
\begin{tabular}{cccc}
$T$ & Data type & $L_{\infty}(\widehat{\rho}(y,y'))$ & $\mathrm{MSE}(\widehat{\mathrm{tr}(\boldsymbol{\rho}\boldsymbol{\rho}_n)})$ \\
\colrule
                 & Noiseless & $1.781\times10^{-4}$ & $1.048\times10^{-4}$ \\
$5.40\times10^5$ & Noisy     & $1.407\times10^{-2}$ & $1.116\times10^{-2}$ \\
                 & Corrected & $1.814\times10^{-4}$ & $1.104\times10^{-4}$ \\
\colrule
                 & Noiseless & $9.114\times10^{-5}$ & $8.742\times10^{-5}$ \\
$7.70\times10^5$ & Noisy     & $1.393\times10^{-2}$ & $1.071\times10^{-2}$ \\
                 & Corrected & $9.319\times10^{-5}$ & $9.382\times10^{-5}$ \\
\colrule
                 & Noiseless & $4.183\times10^{-5}$ & $6.322\times10^{-5}$ \\
$1.21\times10^6$ & Noisy     & $1.354\times10^{-2}$ & $1.008\times10^{-2}$ \\
                 & Corrected & $4.219\times10^{-5}$ & $7.152\times10^{-5}$ \\
\end{tabular}
\end{ruledtabular}
\end{table}

\textit{Conclusions.} We study CV quantum states within the framework of the probabilistic tomographic representation of quantum mechanics. Unlike QPDFs, the quantum tomogram is a genuine PDF, depending on a directly observable random variable. We prove that the trace distance between two quantum states is bounded from below by the total variation distance between their tomograms. We introduce a KDE for the tomogram and show that, given a single set of tomographic parameters $\mu,\nu$ and $n = O(\varepsilon^{-5/2})$ samples, one can obtain a lower bound on the trace distance with precision $\varepsilon$.  Furthermore, we demonstrate that the CF of the tomogram fully determines the quantum state and all its trace-based quantities.
By proposing a novel KCFE, we introduce our KQSE method for reconstructing quantum states and their trace characteristics from noisy measurement data. Our estimator achieves a near-optimal convergence rate in squared distance of $\tilde{O}(1/T)$, where $T$ denotes the total number of samples. The method is universal: it performs reliably for arbitrary states, including non-Gaussian multimodal ones, which are crucial for achieving universal quantum computation~\cite{menicucci2006universal, takagi2018convex, hanamura2024implementing}, entanglement distillation~\cite{yokoyama2013ultralargescale}, quantum error correction~\cite{gottesman2001encoding,sivak2023realtime}, and metrological advantage~\cite{xu2022metrological}.
\par In contrast to existing methods, our approach is independent of the shape of the tomographic distribution. While many non-Gaussian states retain Gaussian tails, our technique remains robust for any light-tailed PDF (and even for heavy-tailed distributions with a proper $K_{\mu,\nu}(\cdot)$ selection~\cite{markovich2018light}), which frequently arise in practical scenarios and are relevant for quantum modeling. 
\textcolor{black}{Our future work will address the extension of KQSE to multivariate states by leveraging established multivariate kernel methods and their known convergence properties~\cite{ChaconDuong2020}.}

\textit{Acknowledgments.} 
The authors would like to thank V.I. Man'ko and
A.S. Holevo for insightful comments and corrections.
We thank A. Lvovsky for providing the experimental homodyne-tomographic data used in this work and for valuable discussions. The numerical experiments of this work were performed using the compute resources from the Academic Leiden Interdisciplinary Cluster Environment (ALICE) provided by Leiden University. 
L.M. was  supported by the Netherlands Organisation for Scientific Research (NWO/OCW), as part of the Quantum Software Consortium program (project number 024.003.037 / 3368).
This work has received support from the European Union’s Horizon Europe research and innovation programme through the ERC StG FINE-TEA-SQUAD (Grant No. 101040729). This work is supported by the Dutch National Growth Fund (NGF), as part of the Quantum Delta NL programme. 
Funded by the European Union. Views and opinions expressed are however, those of the author(s) only and do not necessarily reflect those of the European Union or the European Commission. Neither the European Union nor the granting authority can be held responsible for them.

\bibliographystyle{apsrev4-2}
\bibliography{tomogram}

\appendix

\onecolumngrid 
\tableofcontents

\section*{S1. Operator Symbols and Tomographic Probability Representation of Quantum States}\label{sup_1}
\subsection*{1.1. Quantizer and Dequantizer Formalism}
\par Let $\mathcal{H}$ be an infinite-dimensional Hilbert space, and $\mathcal{B}(\mathcal{H})$ the set of linear operators on $\mathcal{H}$. In quantum mechanics, physical observables $A$ correspond to Hermitian operators $\boldsymbol{A} \in \mathcal{B}(\mathcal{H})$. Such operator can be represented as a matrix with complex elements.
 Consider a set of Hermitian positive definite operators $\boldsymbol{U}(z) \in \mathcal{B}(\mathcal{H})$, called the \textit{dequantizer}, where $z$ takes continuous values. We define the function $ f_A(z) $ as  
\begin{eqnarray}\label{0}
     f_A(z) = \operatorname{tr} (\boldsymbol{A} \boldsymbol{U}(z)).
\end{eqnarray}
This function, known as the \textit{symbol} of operator $\boldsymbol{A}$, establishes a linear mapping from the operator space to the space of functions.
If \eqref{0} establishes a one-to-one correspondence between the symbol and the operator, the inverse transform is given by  
\begin{eqnarray}\label{2}
   \boldsymbol{A}=\int  f_A(z)\boldsymbol{D}(z) dz,
\end{eqnarray}
where the \textit{quantizer} operators $ \boldsymbol{D}(z) \in \mathcal{B}(\mathcal{H}) $ enable the reconstruction of $ \boldsymbol{A} $ from its symbol $ f_A(z) $.
If the Hilbert space $ \mathcal{H}_d $ has finite dimension $ d $, the variable $ z $ is discrete, and $ \boldsymbol{U}(z), \boldsymbol{D}(z) $ are linear bounded operators on $ \mathcal{H}_d $. If $ z $ takes $ N $ values (outcomes), \eqref{2} becomes  
\begin{eqnarray}
    \boldsymbol{A}=\sum\limits_{k=1}^{N} f_A(k) \boldsymbol{D}(k).
\end{eqnarray}
For the inverse map to exist, the set of dequantizers $ \boldsymbol{U}(z) $ must contain $ d^2 $ linearly independent operators, with $ N \geq d^2 $.
 It is easy to check that the pair of quantizer and dequantizer operators satisfy the condition
\begin{eqnarray}
    \tr{\boldsymbol{U}(z') \boldsymbol{D}(z)}=\delta(z-z').
\end{eqnarray}
\par The choice of the quantizer-dequantizer pair for $\boldsymbol{A}$ is not unique. In general, the only requirement for the dequantizer is to be Hermitian, but in this case, the symbol is not a probability density function (PDF). For example, it can take negative values, as the Wigner function. This does not allow for a direct probabilistic interpretation of measurements.  Imposing the additional condition that $\boldsymbol{U}(z)$ is positive semi-definite and form a resolution of the identity on $\mathcal{H}$ makes the dequantizer a positive-operator valued measure (POVM). In this case, \eqref{0} defines the PDF of $z$. 

\subsection*{1.2. Tomographic Probability Density Function}
\par We consider the quantum state in an infinite-dimensional Hilbert space $ \mathcal{H} $, represented by a positive Hermitian operator $ \boldsymbol{\rho} \in \mathcal{B}(\mathcal{H}) $, known as the \textit{density matrix operator}. By definition, $ \boldsymbol{\rho} $ is Hermitian, positive semi-definite, and has unit trace.  The kernel of $ \boldsymbol{\rho} $ is given by  
$\rho{(\psi,\psi')}=\bra{\psi}\boldsymbol{\rho}\ket{\psi'}$
where $ \ket{\psi} $ and $ \ket{\psi'} $ are quantum states.
 \par Let us define the dequantizer and quantizer as  
\begin{eqnarray}\label{1439}
    \boldsymbol{U}_{\mu,\nu}(x)=\delta\left(\hat{1} x-\mu \boldsymbol{q}-\nu\boldsymbol{p}\right),\quad \boldsymbol{D}_{\mu,\nu}(x)= (2\pi)^{-1}\exp{i(x\hat{1}-\mu\boldsymbol{q}-\nu\boldsymbol{p})},\end{eqnarray}
where $\mu, \nu \in \mathbb{R}$, and $\boldsymbol{q}$, $\boldsymbol{p}$ are position and momentum operators.  We verify that $\boldsymbol{U}_{\mu,\nu}(x)$ is Hermitian, positive semi-definite, and satisfies  
\begin{eqnarray}
    \int\limits_{-\infty}^{\infty} \boldsymbol{U}_{\mu,\nu}(x) dx = \hat{1}.
\end{eqnarray}
Here, the Dirac delta function of an operator is given by  
\begin{eqnarray}
    \delta(\boldsymbol{A}) = \frac{1}{2\pi} \int e^{ik\boldsymbol{A}} dk.
\end{eqnarray}
Thus, $\boldsymbol{U}_{\mu,\nu}(x)$ forms a POVM. 
Using \eqref{1439} in the correspondence \eqref{0}, the mapping of $\boldsymbol{\rho}$ onto a parametric set of PDFs, depending on two real parameters $\mu$ and $\nu$, is given by~\cite{mancini1996symplectic}:
\begin{eqnarray}\label{1}
\mathcal{W}(x|\mu,\nu)=\tr{\boldsymbol{\rho} \boldsymbol{U}_{\mu,\nu}(x)}=\langle x;\mu,\nu|\boldsymbol{\rho}|x;\mu,\nu\rangle,\quad\!\! 
\boldsymbol{\rho}=\iiint\limits_{-\infty}^{\infty} \mathcal{W}(x|\mu,\nu) \boldsymbol{D}_{\mu,\nu}(x)dxd\mu d\nu,
\end{eqnarray}
where $|x;\mu,\nu\rangle$ is an eigenvector of the Hermitian operator  
\begin{eqnarray}\label{1051_2}
    \boldsymbol{X}_{\mu,\nu} = \mu \boldsymbol{q} + \nu \boldsymbol{p},
\end{eqnarray}
called the \textit{quadrature}.  A continuous random variable $X$ corresponds to \eqref{1051_2}, with sample space $S_X \subset \mathbb{R}$.  
The function $\mathcal{W}(x|\mu,\nu)$, known as the \textit{symplectic tomogram}, represents the \textit{conditional PDF} over $S_X$, dependent on $\mu$ and $\nu$. As a PDF, the tomogram is nonnegative and normalized:  
\begin{eqnarray}
\int\limits_{-\infty}^{\infty} \mathcal{W}(x|\mu,\nu) dx = 1.
\end{eqnarray}
Since the delta function is homogeneous, $\delta(\lambda y)=|\lambda|^{-1}\delta(y)$, for any constant $\lambda$, the tomogram inherits the homogeneity property:
\begin{eqnarray}\label{1453}
    \mathcal{W}(\lambda x|\lambda\mu,\lambda\nu)=|\lambda|^{-1}\mathcal{W}(x|\mu,\nu),\quad    \mathcal{W}(y| \lambda \mu, \lambda \nu) =
     |\lambda|^{-1}
     \mathcal{W}(y\lambda^{-1}|\mu,\nu),\quad y=\lambda x.
\end{eqnarray}
\par  Since tomograms are PDFs, the \textit{total variation distance} between $\mathcal{W}_1(x|\mu,\nu)$ and $\mathcal{W}_2(x|\mu,\nu)$ over a continuous variable $x$ is defined as  
\begin{eqnarray}\label{totalvar}
 T(\mathcal{W}_1, \mathcal{W}_2)  = \frac{1}{2} \int \Big|\mathcal{W}_1(x|\mu,\nu) - \mathcal{W}_2(x|\mu,\nu)\Big| \, dx.
\end{eqnarray}
By Scheffé’s theorem, the $L_1$-distance between two PDFs equals twice the total variation:  
$L_1(W_1,W_2)=2 T(W_1,W_2)$. The total variation distance ranges from $0$ to $1$, where $0$ indicates identical distributions and $1$ means no overlap.
\par Our result establishes a connection between the total variation of tomograms and the trace distance of the corresponding density operators. The following theorem holds:
\begin{thm}\label{thm1}
  The trace distance  $D(\boldsymbol{\rho}_1,\boldsymbol{\rho}_2)=\frac{1}{2}\tr{|\boldsymbol{\rho}_1-\boldsymbol{\rho}_2|}=1/2 \|\boldsymbol{\rho}_1-\boldsymbol{\rho}_2\|_1$ between the density matrices $\boldsymbol{\rho}_1$ and  $\boldsymbol{\rho}_2$ of two quantum CV states is   
 \begin{eqnarray}
   T(\mathcal{W}_1(x|\mu,\nu), \mathcal{W}_2(x|\mu,\nu))\leq D(\boldsymbol{\rho}_1,\boldsymbol{\rho}_2).
  \end{eqnarray}
\end{thm}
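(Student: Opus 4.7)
My plan is to derive the inequality as a direct application of the data-processing (monotonicity) property of trace distance under POVM measurements, specialized to the symplectic POVM $\boldsymbol{U}_{\mu,\nu}(x)$ defined in \eqref{1439}. Since the tomogram is produced from $\boldsymbol{\rho}$ by the rule $\mathcal{W}(x|\mu,\nu)=\tr(\boldsymbol{\rho}\,\boldsymbol{U}_{\mu,\nu}(x))$ with $\boldsymbol{U}_{\mu,\nu}(x)\ge 0$ and $\int\boldsymbol{U}_{\mu,\nu}(x)\,dx=\hat 1$, any POVM-monotonicity argument will give the stated bound. I would first write this out explicitly rather than invoking a black-box theorem, since it clarifies that only positivity and resolution of identity are used, and it makes transparent which direction of the inequality is being proved.

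Concretely, the first step is to use linearity of the trace to obtain
\begin{equation*}
\mathcal{W}_1(x|\mu,\nu)-\mathcal{W}_2(x|\mu,\nu)=\tr\bigl((\boldsymbol{\rho}_1-\boldsymbol{\rho}_2)\,\boldsymbol{U}_{\mu,\nu}(x)\bigr).
\end{equation*}
Next, I would introduce the Jordan decomposition of the Hermitian operator $\Delta:=\boldsymbol{\rho}_1-\boldsymbol{\rho}_2=\Delta_+-\Delta_-$ with $\Delta_\pm\ge 0$ and orthogonal supports, so that $\|\Delta\|_1=\tr\Delta_++\tr\Delta_-$. Because $\boldsymbol{U}_{\mu,\nu}(x)\ge 0$, both $\tr(\Delta_\pm\boldsymbol{U}_{\mu,\nu}(x))\ge 0$, and the triangle inequality gives the pointwise bound
\begin{equation*}
|\mathcal{W}_1(x|\mu,\nu)-\mathcal{W}_2(x|\mu,\nu)|\le \tr(\Delta_+\boldsymbol{U}_{\mu,\nu}(x))+\tr(\Delta_-\boldsymbol{U}_{\mu,\nu}(x)).
\end{equation*}

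Integrating over $x$ and using $\int \boldsymbol{U}_{\mu,\nu}(x)\,dx=\hat 1$ together with linearity of the trace yields
\begin{equation*}
\int|\mathcal{W}_1-\mathcal{W}_2|\,dx\le \tr\Delta_++\tr\Delta_-=\|\boldsymbol{\rho}_1-\boldsymbol{\rho}_2\|_1,
\end{equation*}
and dividing by two gives exactly $T(\mathcal{W}_1,\mathcal{W}_2)\le D(\boldsymbol{\rho}_1,\boldsymbol{\rho}_2)$. The inequality is valid for every fixed $(\mu,\nu)$, so one can take a supremum on the left if desired.

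The only genuine obstacle is technical rather than conceptual: $\boldsymbol{U}_{\mu,\nu}(x)=\delta(x\hat 1-\mu\boldsymbol{q}-\nu\boldsymbol{p})$ is an operator-valued distribution, not a bounded operator, so the symbols $\tr(\Delta_\pm\boldsymbol{U}_{\mu,\nu}(x))$ must be interpreted as the spectral measure of the self-adjoint quadrature $\boldsymbol{X}_{\mu,\nu}$ applied to the trace-class operators $\Delta_\pm$. I would justify this by noting that $\boldsymbol{X}_{\mu,\nu}$ is self-adjoint (for $(\mu,\nu)\ne(0,0)$) with projection-valued measure $E_{\mu,\nu}(dx)$, so $\boldsymbol{U}_{\mu,\nu}(x)\,dx$ is rigorously $E_{\mu,\nu}(dx)$, and $\tr(\Delta_\pm E_{\mu,\nu}(\cdot))$ are genuine nonnegative measures of total mass $\tr\Delta_\pm$. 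With this reading, every step above is the standard POVM data-processing argument applied to a projection-valued measure, and the pathological case $(\mu,\nu)=(0,0)$ is handled separately using the homogeneity relation \eqref{1453} or by noting that both sides vanish there.
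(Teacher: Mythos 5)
Your proposal is correct and follows essentially the same route as the paper: the paper's one-line proof bounds $|\tr(\boldsymbol{U}_{\mu,\nu}(x)(\boldsymbol{\rho}_1-\boldsymbol{\rho}_2))|\le\tr(\boldsymbol{U}_{\mu,\nu}(x)|\boldsymbol{\rho}_1-\boldsymbol{\rho}_2|)$ and integrates using the resolution of identity, and your Jordan decomposition of $\Delta_+ - \Delta_-$ is precisely the standard justification of that inequality since $\Delta_++\Delta_-=|\boldsymbol{\rho}_1-\boldsymbol{\rho}_2|$. Your added care in reading $\boldsymbol{U}_{\mu,\nu}(x)\,dx$ as the spectral measure of $\boldsymbol{X}_{\mu,\nu}$ is a welcome technical refinement the paper omits (only note that at $(\mu,\nu)=(0,0)$ it is the left-hand side, not both sides, that vanishes, which still gives the inequality trivially).
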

\begin{proof}
 We write
   \begin{eqnarray}
&& T(\mathcal{W}_1(x|\mu,\nu), \mathcal{W}_2(x|\mu,\nu))=\frac{1}{2} \int |\tr(\boldsymbol{U}_{\mu,\nu}(x)(\boldsymbol{\rho}_1 - \boldsymbol{\rho}_2))|dx \\\nonumber
&\leq &\frac{1}{2} \int \tr(\boldsymbol{U}_{\mu,\nu}(x)|\boldsymbol{\rho}_1 - \boldsymbol{\rho}_2|)dx =\frac{1}{2}  \tr\left(\int\boldsymbol{U}_{\mu,\nu}(x)dx|\boldsymbol{\rho}_1 - \boldsymbol{\rho}_2|\right)=D(\boldsymbol{\rho}_1,\boldsymbol{\rho}_2),
\end{eqnarray} 
where we used the POVM properties of the dequantizer. 
\end{proof} 
\begin{example}
       The wave function of the excited state of the harmonic oscillator (HO) is
\begin{eqnarray}\Psi_m(y)=\frac{1}{\sqrt[4]{\pi}\sqrt{2^m m!}}e^{-y^2/2}H_m(y),\end{eqnarray}
where $H_m(y)$ is the Hermite polynomial. The tomogram and CF of the excited oscillator state are the following
\begin{eqnarray}\label{1100}
     \mathcal{W}_m(x|\beta)
    &=& \mathcal{W}_0(x|\beta) \frac{1}{2^m m!} H^2_m \left(\frac{x}{\beta}\right),\quad 
    {\phi}_m(1;\beta)
    = {\phi}_0(1;\beta) L_m\left(\frac{\beta^2}{2}\right),
\end{eqnarray}
where the tomogram and CF of the ground state  are 
\begin{eqnarray}\label{1530}
\mathcal{W}_0(x|\beta) = \frac{1}{\sqrt{\pi}\beta} \exp\!\Bigl[-\tfrac{x^2}{\beta^2}\Bigr], \quad 
{\phi}_0(t;\beta) = \exp\!\Bigl(-\tfrac{t^2\beta^2}{4}\Bigr),\quad \beta\equiv\sqrt{\mu^2+\nu^2}.
\end{eqnarray}
Here the Laguerre polynomial series decomposition is
\begin{eqnarray}
    L_m(x)=\sum\limits_{k=0}^m\frac{(-1)^km!}{(k!)^2(m-k)!}x^k.
\end{eqnarray}
We conclude that any quantum state formed as a superposition of HO eigenstates exhibits Gaussian tails, both in its wavefunction and in the corresponding CF with respect to the parameters $\mu$ and $\nu$. A more detailed analysis of the families of probability distributions arising in experimental settings is provided in Ref.~\cite{markovich2024not}.
\end{example}

\subsection*{1.3. Characteristic Function of the Tomogram}
\par In~\cite{markovich2024not}, the \textit{characteristic function} (CF) of the tomogram $ \mathcal{W}(x|\mu,\nu)$ is defined as  
\begin{eqnarray}\label{1113}
    \phi(t;\mu,\nu) = \int\limits_{-\infty}^{\infty} \mathcal{W}(x|\mu,\nu) e^{itx}dx.
\end{eqnarray}
Since the PDF is always integrable, the CF always exists. It is called the CF because it fully characterizes the distribution.  
The function $\phi(t)$ is continuous for all real $t$ and satisfies $\phi(0) = 1$, $|\phi(t)| \leq 1$, and $\phi(-t) = \phi^{\star}(t)$. By Theorem 1 in~\cite{polya1949remarks}, if the corresponding probability density is even and continuous everywhere except possibly at $x = 0$, then $\phi(t)$ has additional properties: it is real-valued, $\lim\limits_{t\rightarrow \infty} \phi(t) = 0$, even, and convex for $t > 0$.

From \eqref{1453} one can derive the following property
\begin{eqnarray}\label{1118}
   && \phi(t; \lambda\mu, \lambda\nu) = \phi(t\lambda; \mu, \nu),\quad \phi(t \lambda^{-1}; \lambda\mu, \lambda\nu) = \phi(t; \mu, \nu).
\end{eqnarray}
Using the CF, one can rewrite  \eqref{1}  as follows:
\begin{eqnarray}\label{1050}
 \phi(1;\mu,\nu)=\tr{\boldsymbol{\rho} e^{i\boldsymbol{X}_{\mu,\nu}}},\quad    \boldsymbol{\rho}=\frac{1}{2\pi} \int\limits_{-\infty}^{\infty} \phi(1;\mu,\nu)e^{-i\boldsymbol{X}_{\mu,\nu}}d\mu d\nu.
\end{eqnarray}
 As noted earlier, for a symmetric random variable around the origin, the CF is real and even. Since the Fourier transform of a real even function remains real and even, the resulting operator also inherits these properties.
\par Using the fact that $\tr{e^{-i\boldsymbol{X}_{\mu,\nu}}}=\tr(e^{-i(\nu\boldsymbol{p}+\mu\boldsymbol{q})})=2\pi e^{i\frac{\mu\nu}{2}}\delta(\mu)\delta(\nu)$ holds, one can derive the connections between the trace characteristics of the quantum state and the CF of the tomogram:
\begin{eqnarray}\label{1051}
   &&\tr{ \boldsymbol{\rho}}= \phi(1;0,0),\\\nonumber
   && \tr{(\boldsymbol{\rho}^2)} 
   =\frac{1}{2\pi} \iint \limits_{-\infty}^{\infty} |\phi(1;\mu,\nu)|^2 d\mu d\nu,\\\nonumber
   &&\tr{\left(\boldsymbol{\rho}^d\right)}\! =\!
   \frac{1}{(2\pi)^{d-1}} \iint \limits_{-\infty}^{\infty} \prod\limits_{i=1}^{d-1}\phi(1;\mu_i,\nu_i) \phi^{\star}\left(1;\sum\limits_{i=1}^{d-1}\mu_i,\sum\limits_{i=1}^{d-1}\nu_i\right)\\\nonumber
   &\times&\exp\!\Biggl[-\frac{i}{2}\sum_{1\le j<k\le d-1} (\mu_j\nu_k-\nu_j\mu_k)\Biggr]d\mu_1\dots d\mu_{d-1} d\nu_1\dots d\nu_{d-1}.
\end{eqnarray}
One can see that the density matrices $\boldsymbol{\rho}_1$ and $\boldsymbol{\rho}_2$ of two quantum states satisfy the \textit{noncommutative Parseval relation}:
   \begin{eqnarray}\label{1357}
   \tr{( \boldsymbol{ \rho}_1\boldsymbol{\rho}_2)}
  &=& \frac{1}{2\pi}\iint \phi_1(1;\mu,\nu) \phi_2^{\star}(1;\mu,\nu) d\mu d\nu\\\nonumber
  &=&
  \frac{1}{2\pi}\iint\limits_{-\infty}^{\infty}  \tr{\left[\boldsymbol{\rho}_1 e^{i(\mu\boldsymbol{q}+\nu\boldsymbol{p})}\right]}  \left[\tr{\left[\boldsymbol{\rho}_2 e^{i(\mu\boldsymbol{q}+\nu\boldsymbol{p})}\right]}\right]^{\star} d\mu d\nu, 
\end{eqnarray}
where $\phi_1(1;\mu_1,\nu_1)$ and $\phi_2(1;\mu_1,\nu_1)$ are the CFs corresponding to the tomograms of the states.  
This result can be easily extended to the product of $d$ density matrices:
\begin{eqnarray}
&&\tr{\left(\prod\limits_{i=1}^{d}\boldsymbol{\rho}_i\right)}\! =\!\frac{1}{(2\pi)^{d-1}} \iint \limits_{-\infty}^{\infty} \prod\limits_{i=1}^{d-1}\phi_i(1;\mu_i,\nu_i) \phi_d^{\star}\Bigl(1; \sum_{j=1}^{d-1}\mu_j,\,\sum_{j=1}^{d-1}\nu_j\Bigr)\\\nonumber
&\times&\exp\!\Biggl[-\frac{i}{2}\sum_{1\le j<k\le d-1} (\mu_j\nu_k-\nu_j\mu_k)\Biggr]  d\mu_1\, d\mu_2\cdots d\mu_{d-1}\,d\nu_1\, d\nu_2\cdots d\nu_{d-1},
\end{eqnarray}
where $\phi_i(t;\mu_i,\nu_i)$ are the CFs of the corresponding states $\boldsymbol{\rho}_i$. The trace from the square rooted density matrices can be expressed in terms of CF as
\begin{eqnarray*}
\tr{\sqrt{\boldsymbol{\rho}_1} \sqrt{\boldsymbol{\rho}_2}}=\frac{1}{2\pi}\tr{\left(\!\Big[\!\!\iint\limits_{-\infty}^{\infty}\!\! \phi_{1}(1;\mu_1,\nu_1)\phi_{2}(1;\mu_2,\nu_2)e^{-i(\nu_1\boldsymbol{p}+\mu_1\boldsymbol{q})}\,e^{-i(\nu_2\boldsymbol{p}+\mu_2\boldsymbol{q})}d\mu_1 d\nu_1  d\mu_2 d\nu_2\Big]^{\frac{1}{2}}\!\right)},
\end{eqnarray*}
providing the upper and lover bounds in the following inequality for the trace distance~\cite{kholevo1972quasiequivalence}:
\begin{eqnarray}
1 - \mathrm{tr} \sqrt{\boldsymbol{\rho}_1} \sqrt{\boldsymbol{\rho}_2}  \leq D(\boldsymbol{\rho}_1,\boldsymbol{\rho}_2) \leq  \sqrt{1 - \left( \mathrm{tr} \sqrt{\boldsymbol{\rho}_1} \sqrt{\boldsymbol{\rho}_2} \right)^2 }.
\end{eqnarray}
\par The kernel element $\rho(\chi,\chi)=\langle \chi|\boldsymbol{\rho}|\chi\rangle=\tr{[\boldsymbol{\rho}\boldsymbol{\rho}_{\chi}]}$ of the density operator in any state $\ket{\chi}$ representation can be written as
\begin{eqnarray}\label{1514_1}
    \rho( \chi, \chi)
    &=&\frac{1}{2\pi}
    \iint \phi(1;\mu,\nu)\phi_{{\rho}_{\chi}}^{\star}(1;\mu,\nu)d\mu d\nu,
\end{eqnarray}
where $\phi_{{\rho}_{\chi}}^{\star}(1;\mu,\nu)$ is the CF for the tomogram corresponding to the quantum state $\ket{\chi}$. Since $\boldsymbol{\rho}$ is a density matrix, it is positive semidefinite and its matrix elements can not be bigger than one, then $0\leq  \rho( \chi, \chi)\leq 1$, holds $\forall \ket{\chi}\neq 0$.
\par In the coordinate representation ($ \rho(y,y')=\langle y|\boldsymbol{\rho}|y'\rangle$), the kernel elements of $\boldsymbol{\rho}$ can be written as a Fourier transform of the CF:
\begin{eqnarray}
    \rho(y,y')\!=\!\frac{1}{2\pi}\int\limits_{-\infty}^{\infty} \phi(1;\mu,y-y')\exp{\left(-i\frac{\mu(y+y')}{2}\right)}  d\mu\equiv\frac{1}{\sqrt{2\pi}}\overline{\phi}\left(1;\frac{y+y'}{2},y-y'\right),
    \label{eqn: integral from matrix elements}
\end{eqnarray}
where $\overline{\phi}(1;\tilde{\mu},{\nu})$ denotes the Fourier transform of $\phi(1;\mu,\nu)$. 
Hence, to reconstruct the latter kernel of the density matrix operator, one needs to estimate the CF for a fixed parameter $\nu=y-y'$, varying the parameter $\mu$.
\subsection*{1.4. Tomograms for Pure States}
For the density operator $\boldsymbol{\rho}=\ket{\psi}\bra{\psi}$ of the pure state $\ket{\psi}$, the relation \eqref{1} converts into a simpler one
\begin{eqnarray}\label{1645_5}
    \mathcal{W}(x|\mu,\nu)= \frac{1}{2\pi|\nu|}|(\mathcal{F}_{\mu,\nu}\Psi)(x)|^2,\quad (\mathcal{F}_{\mu,\nu}\Psi)(x)=
\int\limits_{-\infty}^{\infty} \Psi(y)\exp{\left(\frac{i\mu}{2\nu}y^2-\frac{ix}{\nu}y\right)} dy.
\end{eqnarray}
Here $(\mathcal{F}_{\mu,\nu}\Psi)(x)$ is a two-parameter set of unitary transforms in the space $\mathcal{H}$.
One can see that if $\mu=0$ and $\nu=1$, the latter is a Fourier transform. 
\par In the case of general pure states there is exact expression for the trace distance between two density operators~\cite{holevo2024estimates}:
\begin{eqnarray}
D(\boldsymbol{\rho}_1,\boldsymbol{\rho}_2)=\sqrt{1-\tr{(\boldsymbol{\rho}_1\boldsymbol{\rho}_2)}}=\sqrt{1-\frac{1}{2\pi}\iint \phi_1(1;\mu,\nu) \phi_2^{\star}(1;\mu,\nu) d\mu d\nu}.
\end{eqnarray}
In this case the knowledge of the CFs can provide the trace distance between two pure quantum states. 
\subsection*{1.5. Optical Tomograms}
\par The optical tomogram is a special case of the symplectic tomogram. Using the change of variables 
\begin{eqnarray}\label{1359}
\mu=r \cos\theta, \quad \nu=r \sin\theta,\quad x=ry, \quad r\in\mathbb{R}^+,\quad \theta\in[0,2\pi],
\end{eqnarray}
we can write the density operator \eqref{1} as follows
\begin{eqnarray}
\boldsymbol{\rho}=\frac{1}{2\pi}\int\limits_{-\infty}^{\infty} \int\limits_{0}^{\infty}\int\limits_{0}^{2\pi}r\mathcal{W}(y|  \cos{\theta},  \sin{\theta}) \exp{(ir(y-\boldsymbol{q}\cos{\theta}-\boldsymbol{p}\sin{\theta}))}dy d rd\theta
\end{eqnarray}
where $ \mathcal{W}(y|  \cos{\theta},  \sin{\theta}) $ is called \textit{optical tomogram}.
\par Experimentally one uses the homodyne detection for a state reconstruction. The quantum system to be measured is laser light with a fixed frequency~\cite{wu1998optical,lvovsky2004iterative,dariano2007homodyne}.  A coherent laser source provides the input light field. The key component of the setup is a homodyne detector. The input signal is mixed by a $50-50$ beam-splitter  with a strong local oscillator,  which is coherent with the input field
and is in a strong coherent state.    The relative phase $\theta\in[0,\pi]$ between the signal and the local oscillator can be adjusted and is varied  with uniform probability. Two photo-detectors with efficiency $\eta$ measure the field and the resulting photo-currents are  subtracted electronically yielding the output signal 
\begin{eqnarray}\label{1112}
    I_{H}(\eta)=\langle \boldsymbol{Y}_{\theta}\rangle+\sqrt{\frac{1-\eta}{2\eta}}\langle u_{\theta}+v_{\theta}\rangle,
\end{eqnarray}
where the modes $u$ and $v$ are in the vacuum state
and 
\begin{eqnarray}
    \boldsymbol{Y}_{\theta}=\cos{\theta}\boldsymbol{q}+\sin{\theta}\boldsymbol{p},\end{eqnarray}
    are the field's quadrature amplitudes. The goal is to reconstruct the quantum state of the light field based on measurements $ I_{H}(\eta)$    which are obtained by varying the relative phase $\theta$ between the input signal and the local oscillator in the homodyne detector.
The phase is a parameter controlled by the experimenter.
\par By varying the local oscillator phase 
$\theta$ continuously over a range that is typically $[0,\pi]$, one can  obtain a series of measurements for different quadratures $Y_{\theta}$. Each fixed value of 
$\theta$ gives a marginal PDF $\mathcal{W}(y|  \cos{\theta},  \sin{\theta})$
 for the quadrature $Y_{\theta}$. 
\par One can see that the optical tomogram and the symplectic tomogram, are connected by the transformation \eqref{1359}, that gives
\begin{eqnarray}
   \frac{1}{|r|} \mathcal{W}(y|  \cos{\theta},  \sin{\theta})\equiv \mathcal{W}(x|  \mu,  \nu)
\end{eqnarray}That is why all the results deduced for the symplectic tomogram are easily translated to the language of optical tomograms  and vice versa. The CF of the optical tomogram $\mathcal{W}(y|  \cos{\theta},  \sin{\theta})$ is introduced as
\begin{eqnarray}
     \phi(r;  \cos{\theta},  \sin{\theta})=\int\limits_{-\infty}^{\infty} \mathcal{W}(y|  \cos{\theta},  \sin{\theta})e^{iry}dy.
\end{eqnarray}
Then, we can immediately write the density operator, using the optical tomogram, as
\begin{eqnarray}\label{1312}
\boldsymbol{\rho}
&=&\frac{1}{2\pi}\int\limits_{-\infty}^{\infty} \int\limits_{0}^{\pi} |r|\phi(r;\cos{\theta},\sin{\theta}) \exp{(-ir\boldsymbol{Y}_\theta)} d r d\theta,
\end{eqnarray}
 where we  use the fact that $\cos(\theta + \pi) = -\cos\theta$ and $\sin(\theta + \pi) = -\sin\theta$, so $\boldsymbol{Y}_{\theta + \pi} = -\boldsymbol{Y}_\theta$. Moreover, from \eqref{1118} we can see that $\phi(r;\cos{(\theta+\pi)},\sin{(\theta+\pi)})=\phi(-r;\cos{(\theta)},\sin{(\theta)})$ holds. One can conclude that it is enough to vary the angle $\theta$ only in $(0,\pi)$ to reconstruct the state, that is exactly done in the experiment mentioned before.
\subsection*{1.6. From Characteristic Function to Wigner Function}
\par In quantum optics, a widely used connection between the Wigner function and the quantum tomogram  by Radon transformation exists:
\begin{eqnarray}\label{1431}
&&W(q, p) = \frac{1}{2\pi} \iiint_{-\infty}^{\infty}   \mathcal{W}(x| \mu, \nu) \exp\left(i(x - \mu q - \nu p)\right) \, dx \, d\mu \, d\nu,\\\nonumber
   &=& \frac{1}{4\pi^2} \int_{-\infty}^{\infty} \int_{0}^{\infty} \int_{0}^{2\pi} \mathcal{W}(y| \cos{\theta},\sin{\theta}) \exp\left[ir\left(y - q \cos{\theta} - p \sin{\theta}\right)\right]r dy  dr  d\theta.
\end{eqnarray}
One can rewrite it in terms of the CF as follows:
\begin{eqnarray}\label{1432}
&&W(q, p) = \frac{1}{2\pi} \iint_{-\infty}^{\infty} \phi_{\mathcal{W}}(1;\mu,\nu) \exp\left(-i( \mu q + \nu p)\right)  \, d\mu \, d\nu,\\\nonumber
   &=& \frac{1}{4\pi^2} \int_{0}^{\infty} \int_{0}^{2\pi} \phi_{\mathcal{W}}(r|\cos{\theta},\sin{\theta}) \exp\left[-ir\left(q \cos{\theta} + p \sin{\theta}\right)\right]r  dr  d\theta.
\end{eqnarray}
\par We can conclude, that the CF of a symplectic tomogram is connected to the Wigner function of a quantum state as follows:
\begin{eqnarray}
    W(q,p)=\frac{1}{2\pi}\int \phi(1;\mu,\nu)e^{-i(\mu q+\nu p)}d\mu d\nu.
\end{eqnarray}
Using the Wigner function, similar to \eqref{eqn: integral from matrix elements}, the kernel element can be written as:
\begin{eqnarray}\label{1624}
    \rho(y,y')&=&
    \frac{1}{2\pi}\int\limits_{-\infty}^{\infty} W\left(\frac{y+y'}{2};p\right)\exp{\left(-ip(y-y')\right)}  dp\equiv\frac{1}{\sqrt{2\pi}}\overline{W}\left(\frac{y+y'}{2};y-y'\right),
\end{eqnarray}
where  $\overline{W}(q,\omega)$  denotes the Fourier transform of the Wigner function. One can conclude that the following connection of CF and the Wigner function holds
\begin{eqnarray}
    \overline{\phi}\left(1;\frac{y+y'}{2},y-y'\right)=\overline{W}\left(\frac{y+y'}{2};y-y'\right),\quad \forall y,y'.
\end{eqnarray}
\par In many traditional methods, the reconstruction of the quantum state typically proceeds via an intermediate step, first obtaining the Wigner function. In contrast, our approach bypasses this intermediary by directly estimating the tomogram or its CF from the measured data. This direct reconstruction not only simplifies the overall procedure but also reduces potential sources of error that may accumulate during intermediate transformations. 

\section*{S2. Nonparametric Kernel Tomogram Estimation }\label{app1}
\par Let us have $n$ i.i.d. data points $\{X_{j,\mu,\nu}\}_{j=1}^{n}$ from an unknown PDF $\mathcal{W}(x|\mu,\nu)$. 
The kernel density estimate (KDE) of the tomogram  $\mathcal{W}(x|\mu,\nu)$ is~\cite{silverman2018density,tsybakov2009introduction}:
\begin{eqnarray}\label{1500_1}
    \hat{\mathcal{W}}_{nh}(x|\mu,\nu) = \frac{1}{nh} \sum_{j=1}^{n} K\left(\frac{x-X_{j,\mu,\nu} }{h}\right).
\end{eqnarray}  
In this section, we denote the estimates with hats and use the notation $ \hat{\mathcal{W}}_{nh}(x)= \hat{\mathcal{W}}_{nh}(x|\mu,\nu)$.
Here $K(x)$ is the kernel function, which is generally  smooth, and $h > 0$ is the bandwidth parameter, which controls the width of the kernel.
A kernel function is generally symmetric and normalised $\int K(x)dx = 1$. Moreover
\(\lim\limits_{{x \to -\infty}} K(x) = \lim\limits_{{x \to +\infty}} K(x) = 0\), holds. 
These requirements are needed to guarantee that the KDE $\hat{\mathcal{W}}_{nh}(x)$ is a PDF. 
\par The kernel $K(x)$ is typically chosen from the set of  PDFs. The three most common kernel functions used are the Gaussian, Epanechnikov, and rectangular (uniform) kernels. These kernels have different shapes and characteristics, and the choice of kernel depends on the specific characteristics of the data and the goals of the analysis. 
The Epanechnikov is a special kernel that has the lowest (asymptotic) mean square error. 
\par The mean squared error (MSE) of the KDE is~\cite{silverman2018density,tsybakov2009introduction}:
 \begin{eqnarray}
\text{MSE}(\hat{\mathcal{W}}_{nh}(x_0)) &=& \mathbb{E}[(\hat{\mathcal{W}}_{nh}(x_0)-\mathcal{W}(x_0))^2]=\text{Bias}^2(\hat{\mathcal{W}}_{nh}(x_0)) + \text{Var}(\hat{\mathcal{W}}_{nh}(x_0)) \nonumber \\
&=& \frac{h^4}{4} ({\mathcal{W}}''(x_0))^2 \mu^2_K + \frac{1}{nh} {\mathcal{W}}(x_0) \sigma^2_K + o(h^4) + o\left(\frac{1}{nh}\right)\\\nonumber
&=& O(h^4) + O\left(\frac{1}{nh}\right),
\end{eqnarray}
where \(\mu_K = \int y^2 K(y) \, dy\neq 0\) and  \(\sigma^2_K \equiv \int K^2(y) \, dy\).
The bandwidth $h$ plays a crucial role in the balance between bias and variance in the estimation. The bandwidth minimizing the MSE is given by 
\begin{eqnarray}
    h_{\mathrm{opt}}(x_0) = \left(\frac{1}{n} \cdot \frac{{\mathcal{W}}(x_0)\sigma^2_K}{({\mathcal{W}}''(x_0))^2 \mu^2_K}\right)^{1/5} = O( n^{-1/5}).\end{eqnarray}
   For this choice of smoothing bandwidth the MSE has the rate 
   \begin{eqnarray}\label{1640}
     \text{MSE}(\hat{f}_{nh_{\mathrm{opt}}}(x_0)) = O(n^{-4/5}) . 
   \end{eqnarray} 
\par In the above analysis, we only consider a single point $ x_0 $. A straightforward generalization is the mean integrated square error (MISE):
   \begin{eqnarray}\label{1641}
\text{MISE}(\hat{\mathcal{W}}_{nh}) 
= \int \text{MSE}(\hat{\mathcal{W}}_{nh}(x)) \, dx.
 \end{eqnarray} 
One can show that
  \begin{eqnarray}
\text{MISE}(\hat{\mathcal{W}}_{nh}) 
= \frac{\mu^2_K}{4}  h^4  \int |\mathcal{W}''(x)|^2 \, dx 
+ \frac{\sigma^2_{K}}{nh} + o(h^4) + o\left( \frac{1}{nh} \right)= O(h^4) + O\left( \frac{1}{nh} \right).
\end{eqnarray}
The optimal smoothing bandwidth is often chosen by minimizing the first two terms of the latter expression. Namely, we have
\begin{eqnarray}
h_{\mathrm{opt}} = \left( \frac{1}{n} \cdot \frac{4  \sigma^2_{K}}{\mu^2_{K} \int |\mathcal{W}''(x)|^2 \, dx} \right)^{1/5}= O( n^{-1/5}),
\end{eqnarray}
providing the optimal rate 
\begin{eqnarray}\label{1640_10}
     \text{MISE}(\hat{\mathcal{W}}_{nh_{\mathrm{opt}}}(x|\mu,\nu))\equiv\mathbb{E}\left[\int \left(\hat{\mathcal{W}}_{nh_{\mathrm{opt}}}(x|\mu,\nu)-{\mathcal{W}}(x|\mu,\nu)\right)^2dx\right] = O(n^{-4/5}).
   \end{eqnarray} 
\par Note that the bandwidth depends on the unknown derivatives of the density. To overcome that difficulty, a variety of automatic, data-based methods have been developed for the bandwidth selection. Several review studies have been undertaken to compare their efficiencies, with the general consensus that the plug-in selectors and cross validation~\cite{hall1992smoothed} selectors are the most useful over a wide range of data sets.
\par The classical result \eqref{1640_10} is obtained minimizing the $L_2$- norm distance and gives little information about how close the estimate is to the true PDF in $L_1$- norm, that is required for total variation \eqref{totalvar}. However, in Ref.~\cite{hall1988minimizing} under condition on the PDF to be bounded and having at least $p$ bounded, continuous and integrable derivatives, 
the kernel estimate achieves the following rate with respect to $L_1$- norm:
 \begin{eqnarray}
    T(\hat{\mathcal{W}}_{nh_u}(x|\mu,\nu), {\mathcal{W}}(x|\mu,\nu))\equiv  \frac{1}{2}\!\ \int\limits_{-\infty}^{\infty}\!\!\mathbb{E} \left[|\hat{\mathcal{W}}_{nh_u}(x|\mu,\nu)-{\mathcal{W}}(x|\mu,\nu)|\right]dx=O\left(n^{-\frac{p}{2p+1}}\right)\!,
 \end{eqnarray}
 with the optimal bandwidth $h_u=O\left(n^{-1/(p+1)}\right)$.
 The kernel function used in \eqref{1500_1} is selected to be of order $p$ - that is $\int |z^pK(z)|dz<\infty$, $K(z)$ is bounded and its $p-1$ moments are zero, while its $p$-th moment is finite. For example, second-order kernels, e.g., Gaussian or Epanechnikov, has the property that it integrates to one and its first moment (mean) is zero, while its second moment (variance) is finite. For such  kernels the rate of convergence in $L_1$- norm   is $O(n^{-2/5})$. Higher-order kernels (e.g., biweight or triweight) can have more than two zero moments,  approximating smoother density functions more effectively. However, they are less frequently applied due to their complexity and potential for negative values in the density estimate.
 \par KDE can perform poorly at the boundaries of the domain, as kernels may be only partially within the data range. Moreover, for large datasets, KDE can be computationally intensive, as it requires summing over all data points for each evaluation of the
 PDF. Techniques like Fast Fourier Transform (FFT) and kernel density trees help to mitigate this issue.
 
\subsection*{2.1. Nonparametric Characteristic Function Estimation}
\par For our approach, the estimation of the CF is needed to characterize the quantum state \eqref{1050}, the kernel function \eqref{eqn: integral from matrix elements} and the trace characteristics \eqref{1051} mentioned in the previous section. 
We define the kernel CF estimate (KCFE) as 
\begin{eqnarray}\label{1040}
\hat{\phi}_{nh}(t;\mu,\nu)&=& \int\limits_{-\infty}^{\infty} \hat{\mathcal{W}}_{nh}(x|\mu,\nu)e^{it x}  dx = \frac{1}{nh} \int\limits_{-\infty}^{\infty}\sum_{j=1}^{n} K_{\mu,\nu}\left(\frac{x-X_{j,\mu,\nu}}{h}\right) e^{{i}t x} dx\\\nonumber
&=& \frac{1}{n}\sum_{j=1}^{n} e^{itX_{j,\mu,\nu} }\int\limits_{-\infty}^{\infty}K_{\mu,\nu}(z)e^{ith z}dz.
\end{eqnarray}
\par  The probability density estimate can be introduced as the sum of the Dirac delta functions built on a variational series of vectors of samples $X$~\cite{chentsov1981correctness}:
\begin{eqnarray}
\hat{\mathcal{W}}(x|\mu,\nu)  =\frac{1}{n} \sum_{j=1}^{n} \delta(x - X_{j,\mu,\nu}).
\end{eqnarray}
Its CF can be defined as follows:
\begin{eqnarray}\label{1229_1}
\hat{\phi}_{X,n}(t;\mu,\nu) = \int\limits_{-\infty}^{\infty}  \hat{\mathcal{W}}(x|\mu,\nu)e^{it x} \,dx = \int\limits_{-\infty}^{\infty}\frac{1}{n} \sum_{j=1}^{n} \delta(x-X_{j,\mu,\nu})e^{it x} dx=\frac{1}{n} \sum_{j=1}^{n}e^{it X_{j,\mu,\nu}}.
\end{eqnarray}
Assuming that the kernel function $K_{\mu,\nu}(z)$ is also selected to be a PDF, we can introduce its CF
\begin{eqnarray}
\phi_K(th;\mu,\nu)\equiv\int\limits_{-\infty}^{\infty}K_{\mu,\nu}(z)e^{ithz}dz.
\end{eqnarray}
Using this, we rewrite \eqref{1040} as follows:
\begin{eqnarray}\label{1041_1}
\hat{\phi}_{nh}(t;\mu,\nu)=\hat{\phi}_{X,n}(t;\mu,\nu) \phi_K(th;\mu,\nu).
\end{eqnarray}
 We see that the latter estimator is a product of two CFs, one dependent on the measurement data and one on the bandwidth parameter of the kernel function. Since we define the kernel function ourselves, the form of $\phi_K(th;\mu,\nu)$ is analytically known. In signal processing \eqref{1041_1}, this is called smoothed empirical CF~\cite{stefanski1990deconvolving}.
 If $\hat{\phi}_{nh}( t ;\mu,\nu)$ is absolutely integrable, one can estimate the PDF $\mathcal{W}_{X}(x|\mu,\nu)$ of $X$ by the inverse Fourier transform of the CF:
\begin{eqnarray}\label{tom_fil}
 \hat{\mathcal{W}}_{nh}(x|\mu,\nu)&=&
 \frac{1}{2\pi } \int\limits_{-\infty}^{\infty}  \hat{\phi}_{nh}( t ;\mu,\nu)e^{-itx} dt.
\end{eqnarray} 
Since the  empirical CF is always bounded by one, e.g., $|\hat{\phi}_{X,n}(t;\mu,\nu)|\leq 1$,
it suffices that the kernel's Fourier transform $\phi_K(th;\mu,\nu)$ is integrable. For example, if we choose a Gaussian kernel in the form
\begin{eqnarray}\label{1004_2}
    K_{\mu,\nu}(z) =
     \frac{1}{\sqrt{\pi(\mu^2+\nu^2)}}\exp{\left[-\frac{z^2}{\mu^2+\nu^2}\right]},
\end{eqnarray}
the CF of this kernel is
\begin{eqnarray}\label{1614}
    \phi_K(th;\mu,\nu)=\exp{\left(-\frac{t^2h^2(\mu^2+\nu^2)}{4}\right)},
\end{eqnarray}
that is a Gaussian function  in $t$  and it is known to be integrable. Therefore, under this choice the inversion integral exists.
\par The KCFE with the Gaussian kernel  can be written as
\begin{eqnarray}\label{1231}
  \hat{\phi}^G_{nh}(t;\mu,\nu)=\hat{\phi}_{X,n}(t;\mu,\nu) \exp{\left[-\frac{t^2h^2(\mu^2+\nu^2)}{4}\right]}.
\end{eqnarray}
Now we need to find the optimal bandwidth parameter minimizing the distance between the CF and its kernel estimate in some specific norms. 

\subsection*{2.2. Asymptotic Properties of the Kernel Characteristic Function Estimator}
\par Now we analyze the estimation error of the KCFE. 
Only the sample CF $\hat{\phi}_{X,n}(t)$ depends on the observable.
By linearity of expectation we have~\cite{tsybakov2009introduction}:
\begin{eqnarray}
    \mathbb{E}[\hat{\phi}_{X,n}(t)]&=&\frac{1}{n} \mathbb{E}\Big[\sum_{j=1}^{n}e^{itX_{j}}\Big]=\mathbb{E}\Big[e^{itX}\Big]=
\int e^{it x} f(x)dx=\phi(t),\\\nonumber
\mathbb{E}[|\hat{\phi}_{X,n}(t)|^2]&=&\frac{1}{n^2} \mathbb{E}\Big[\sum_{j,l=1}^{n}e^{it(X_{j}-X_{l})}\Big]=\frac{1}{n^2} \mathbb{E}\Big[\sum_{j,l=1;j\neq l}^{n}e^{it(X_{j}-X_{l})}+\sum_{j=1}^{n}e^{1}\Big]\\\nonumber
&=&\frac{1}{n^2}\left(n+n(n-1)\phi(t)\phi(t)^{\star}\right)=\frac{1}{n}+\left(1-\frac{1}{n}\right)|\phi(t)|^2,
\end{eqnarray}
where we used that for $X$ and $Y$ which are i.i.d. random variables, the CF of their sum is given by:
$\phi_{X+Y}(t) = \phi_X(t) \cdot \phi_Y(t)$. Then we can deduce the expectation of the KCFE \eqref{1041_1}:
\begin{eqnarray}
    \mathbb{E}[\hat{\phi}_{nh}(t)]&=& \phi_K(th)\mathbb{E}[\hat{\phi}_{X,n}(t)]=\phi_K(th)\phi(t),\\\nonumber
    \mathbb{E}[|\hat{\phi}_{nh}(t)|^2]&=&|\phi_K(th)|^2\mathbb{E}[|\hat{\phi}_{X,n}(t)|^2]=|\phi_K(th)|^2 \left(\frac{1}{n}+\left(1-\frac{1}{n}\right)|\phi(t)|^2\right).
\end{eqnarray}
Using these results, we obtain the bias and  variance of the estimate:
\begin{eqnarray}
   \mathrm{Bias}( \hat{\phi}_{nh}(t))=(\phi_K(th)-1)\phi(t),
\end{eqnarray}
\begin{eqnarray}\label{1038}
    \text{Var}(\hat{\phi}_{nh}(t))&=&\mathbb{E}[|\hat{\phi}_{nh}(t)|^2]-|\mathbb{E}[\hat{\phi}_{nh}(t)]|^2=\frac{|\phi_K(th)|^2 }{n}\left(1-|\phi(t)|^2\right).
\end{eqnarray}
One can see that if $h\rightarrow 0$, the bias is tending to zero. If $n\rightarrow \infty$, the variance is tending to zero. Since the CF can take complex values, we define the MSE for the absolute value as
\begin{eqnarray}\label{1640_10_2}
    \operatorname{MSE}(\hat{\phi}_{nh}( t ;\mu,\nu))&\equiv&\mathbb{E}\left(|\hat{\phi}_{nh}( t ;\mu,\nu)-{\phi}( t ;\mu,\nu)|^2\right)= |\operatorname{Bias}( \hat{\phi}_{nh}(t))|^2+ \operatorname{Var}(\hat{\phi}_{nh}(t))\\\nonumber
    &=&|\phi_K(th)-1|^2|\phi(t)|^2+\frac{|\phi_K(th)|^2 }{n}\left(1-|\phi(t)|^2\right).
   \end{eqnarray} 
To find the rate of convergence in $h$ one has to specify the kernel function.
   \par For the ground state of the HO,
the corresponding tomogram is given by \eqref{1530}, that is, a Gaussian PDF. Hence, the obvious choice of the kernel function is \eqref{1004_2}.
The corresponding CF \eqref{1614} is also Gaussian in $\mu,\nu$.
Then the KCFE  with the Gaussian kernel  is \eqref{1231}.
The MSE of this KCFE is
\begin{eqnarray}
 \!\!\!   \mathrm{MSE}(\hat{\phi}^G_{nh}(t;\mu,\nu)) 
  =\left(\exp{\left[-\frac{t^2h^2(\mu^2+\nu^2)}{4}\right]}-1\right)^2|\phi(t;\mu,\nu)|^2+\frac{\exp{\left[-\frac{t^2h^2(\mu^2+\nu^2)}{4}\right]}}{n}(1-|\phi(t;\mu,\nu)|^2).
\end{eqnarray}
For $|\phi(t;\mu,\nu)|<1$ the optimal  bandwidth parameter is
\begin{eqnarray}\label{1924}
h_{\phi} = \frac{2}{t(\mu^2+\nu^2)}\sqrt{-\ln\!\left(1-\frac{1-|\phi(t;\mu,\nu)|^2}{2n\,|\phi(t;\mu,\nu)|^2}\right)}
\approx \frac{2}{t(\mu^2+\nu^2)}\sqrt{\frac{1-|\phi(t;\mu,\nu)|^2}{2n\,|\phi(t;\mu,\nu)|^2}}=O(n^{-1/2}).
\end{eqnarray}
Since for any number $y$ with $0<y<1$, we have $\ln{(1-y)}<0$,  the expression under the square root is nonnegative if
\begin{eqnarray}
    0<1-\frac{1-|\phi(t;\mu,\nu)|^2}{2n\,|\phi(t;\mu,\nu)|^2}<1,
\end{eqnarray}
hold.  In practice, for big enough 
$n$ it is always the case. 
Then the optimal  MSE of the Gaussian KCFE is
\begin{eqnarray}\label{1224}
\mathrm{MSE}(\hat{\phi}_{nh_{\phi}}^G(t;\mu,\nu)) 
    &=&\frac{1-|\phi(t;\mu,\nu)|^2}{n}
-\frac{(1-|\phi(t;\mu,\nu)|^2)^2}{4n^2\,|\phi(t;\mu,\nu)|^2}
= O(1/n).
\end{eqnarray}
 The KCFE is asymptotically as efficient as the parametric maximum likelihood estimator (MLE).

\subsection*{2.3. Filtering Measurement Noise via Characteristic Functions}
The homodyne tomography does not take into account various losses (mode mismatching, failure of detectors)
in the detection process, which modify the distribution of results in a real
measurement compared with the idealized case. In~\cite{leonhardt1995measuring}, the following error model is introduced:
\begin{eqnarray}\label{1033}
   Z\equiv^d \kappa X+(1-\kappa)Y,\quad 0< \kappa< 1,
\end{eqnarray}
where $\equiv^d $ is an equality in distribution. 
Here, $\kappa$ comprises the overall detection efficiency, including all kinds of losses, and $Y$ is a sequence of i.i.d.  noise independent from the target variable $X$. 
\par If two random variables $X$ and $Y$ are independent, then the probability density of their sum is equal to the convolution of the probability densities of $X$ and $Y$. The independence of the two random variables implies that their joint PDF is equal to the product of their marginals, namely
$p_{X,Y}(x, y) = p_X(x) p_Y(y)$. Then we have
\begin{eqnarray}\label{1441}
 \mathcal{W}_{Z}(z) &=& \int_{-\infty}^{\infty} \int_{-\infty}^{\infty} p_{X,Y}(x, y) \delta\left(\kappa x +(1-\kappa) y - z\right)  dy  dx \\\nonumber
 &=& \frac{1}{1-\kappa}\int_{-\infty}^{\infty} p_{X,Y}\left(x, \frac{z - \kappa x}{1-\kappa}\right)  dx =  \frac{1}{1-\kappa}\int_{-\infty}^{\infty}  \mathcal{W}_X(x) p_Y\left(\frac{z - \kappa x}{1-\kappa}\right)  dx,
\end{eqnarray}
where $\mathcal{W}_X(x)$ is a PDF (tomogram) of $X$, $p_Y(z)$ is a PDF of the noise $Y$.
\par For our approach we need the CF of $X$. It is known that the CF of the sum of two independent random variables is the
product of individual CFs. 
 The CF of $Z $, denoted by $\phi_{Z}(t)$, is:
\begin{eqnarray}
   \phi_{Z}(t) = \mathbb{E}[e^{itZ}] = \mathbb{E}[e^{it(\kappa X + (1 - \kappa) Y)}]=\mathbb{E}[e^{it\kappa X}] \cdot \mathbb{E}[e^{it(1 - \kappa) Y}]=\phi_X(\kappa t) \phi_Y((1-\kappa) t).
\end{eqnarray}
Here $\phi_X(\kappa t)$ is the CF of a target PDF, while $\phi_Y((1-\kappa) t)$ is the CF of $p_Y(z)$.
One can conclude, that if $\phi_Y((1-\kappa) t)$ is invertible, then
\begin{eqnarray}\label{1509}
    {\phi}_X( t)={\phi}_{Z}\left(\frac{t}{\kappa}
    \right)\phi^{-1}_Y\left(\frac{(1-\kappa)t}{\kappa}
    \right),\quad \kappa\in(0,1).
\end{eqnarray}
Similar idea appears in the context of \textit{deconvolution} of signals PDF from the noisy data~\cite{stefanski1990deconvolving,fan1991optimal,delaigle2004practical}. One imposes the nonvanishing condition on the CF of the noise, i.e., 
$    |\phi_Y(t)|>0$, $ \forall t$.
 When performing deconvolution \eqref{1509}, one needs to divide by $\phi_Y(t)$. If $\phi_Y(t)$ were to vanish at some points, the inversion process would become unstable or even impossible due to division by zero. For example, the CF of a normal distribution \eqref{1614} is nonvanishing since it is never zero. In contrast, some distributions, such as the uniform distribution on an interval, have CFs that vanish at certain values of $t$.
\par 
Thus, the model \eqref{1033} and the assumption about the noise distribution define the correction of the CF estimator.
Using \eqref{1041_1}, we can define the KCFE of the signal $X$ filtered from noise $Y$:
\begin{eqnarray}\label{CF_fil}
&&\hat{\phi}_{X,nh}( t ;\mu,\nu)= 
\hat{\phi}_{Z,nh}( t ;\mu,\nu)
\phi^{-1}_Y\left(\frac{(1-\kappa)t}{\kappa}
    \right),\\\nonumber
    &&\hat{\phi}_{Z,nh}( t ;\mu,\nu)\equiv \hat{\phi}_{Z,n}\left(\frac{t}{\kappa};\mu,\nu\right)
\phi_K\left(\frac{t}{\kappa} h;\mu,\nu\right),
\end{eqnarray}
where $\phi^{-1}_Y\left(\frac{(1-\kappa)t}{\kappa}
    \right)$ is assumed to be known from the noise model and $\phi_K\left(\frac{t}{\kappa} h;\mu,\nu\right)$ is dependent from the kernel function choice.  
If $\hat{\phi}_{X,nh}( t ;\mu,\nu)$ is integrable, one can estimate the PDF $\mathcal{W}_{X}(x|\mu,\nu)$ of $X$ by the inverse Fourier transform of the CF:
\begin{eqnarray}\label{tom_fil_2}
 \hat{\mathcal{W}}_{nh}(x|\mu,\nu)=
 \frac{1}{2\pi } \int\limits_{-\infty}^{\infty}  \hat{\phi}_{Z,n}\left(\frac{t}{\kappa};\mu,\nu\right)
\phi_K\left(\frac{t}{\kappa} h;\mu,\nu\right)
\phi^{-1}_Y\left(\frac{(1-\kappa)t}{\kappa}
    \right)e^{-itx} dt.
\end{eqnarray} 
In order to guarantee that this
estimator is well defined we need to impose the conditions~\cite{delaigle2004practical}: \begin{eqnarray}
      &&|\phi_Y(t)|>0,\quad \forall t,\\\nonumber
      &&\sup_t |\phi_K\left(\frac{t}{\kappa} h;\mu,\nu\right)
\phi^{-1}_Y\left(\frac{(1-\kappa)t}{\kappa}
    \right)|<\infty,\\\nonumber
    &&\int |\phi_K\left(\frac{t}{\kappa} h;\mu,\nu\right)
\phi^{-1}_Y\left(\frac{(1-\kappa)t}{\kappa}
    \right)| dt<\infty.\end{eqnarray}
\par The trace product \eqref{1357} between two estimated states can be estimated as
 \begin{eqnarray}\label{1357_1}
   \widehat{\tr{( \boldsymbol{\rho}_{1,nh}\boldsymbol{\rho}_{2,nh})}}
  = \frac{1}{2\pi}\phi^{-2}_Y\left(\frac{1-\kappa}{\kappa}
    \right)\iint\limits_{-\infty}^{\infty} \hat{\phi}_{Z_1,nh}(1;\mu,\nu) \hat{\phi}_{Z_2,nh}(1;-\mu,-\nu) d\mu d\nu,
\end{eqnarray}
assuming for simplicity that the noise has a similar PDF for both measurement procedures. 
\par If $\boldsymbol{\rho}_1=\boldsymbol{\rho}$ is a true state and $\boldsymbol{\rho}_{2}=\boldsymbol{\rho}_{nh}$ is its estimate from the noisy data, then
 \begin{eqnarray}\label{1357_2}
   \widehat{\tr{( \boldsymbol{\rho}\boldsymbol{\rho}_{nh})}}
  = \frac{1}{2\pi}\phi^{-1}_Y\left(\frac{1-\kappa}{\kappa}
    \right)\iint\limits_{-\infty}^{\infty} {\phi}(1;\mu,\nu) \hat{\phi}_{Z_1,nh}(1;-\mu,-\nu) d\mu d\nu,
\end{eqnarray}
that shows the overlap between the sate and its estimate. 
The latter estimate must tend to purity for $n\rightarrow\infty$. The rate of convergence of this estimate is provided in the following sections. 
Other trace characteristics \eqref{1051} can be estimated in the same fashion.  The estimate of the kernel of the density matrix \eqref{eqn: integral from matrix elements} is
 \begin{eqnarray}
    \widehat{\rho_{nh}(y,y')}&=&\frac{1}{2\pi }\phi^{-1}_Y\left(\frac{1-\kappa}{\kappa}
    \right)\int\limits_{-\infty}^{\infty} \hat{\phi}_{Z,nh}(1;\mu,y-y')
\exp{\left(-i\frac{\mu(y+y')}{2}\right)}  d\mu
    \label{kel_fil}
\end{eqnarray}
Thus, with a collection of i.i.d. data sets $\{Z_{\mu_i, \nu_i}\}_{i=1}^{N} $ for a finite set of parameters 
$\{\mu_i, \nu_i\}_{i=1}^{N}$  on the interval, e.g.,  $\mu_i\in[-\mu_\mathrm{max},\mu_\mathrm{max}]$, $\nu_i\in[-\nu_\mathrm{max},\nu_\mathrm{max}]$, $\mu_\mathrm{max},\nu_\mathrm{max}>0$, $i=[1,N]$, we can estimate a finite number of CFs $\{\hat{\phi}(1;\mu_k,\nu_k)\}_{k=1}^N$  and tomograms $\{\hat{\mathcal{W}}(x|\mu_k,\nu_k)\}_{k=1}^N$. However, this is not sufficient for integration over the entire parameter space in \eqref{1357_1}- \eqref{kel_fil}. The solution to this problem is presented in the next section.

\section*{S3. Kernel quantum State Estimation}\label{sec_4}
\par We start with the  estimation of the kernel of the density operator \eqref{eqn: integral from matrix elements} in the coordinate basis. Let us rewrite  \eqref{eqn: integral from matrix elements}  as a sum of  integrals
\begin{eqnarray}\label{1227}
    &&{\rho}(y,y')={\rho}_{\mathrm{max}}(y,y')+{\rho}_{\mathrm{tail}+}(y,y')+{\rho}_{\mathrm{tail}-}(y,y'),
    \end{eqnarray}
    where we introduce the notations
    \begin{eqnarray}
    &&{\rho}_{\mathrm{max}}(y,y')\equiv \frac{1}{2\pi}\int\limits_{-\mu_{\mathrm{max}}}^{\mu_{\mathrm{max}}}\Big[{\phi}(1;\mu,y-y')\exp{\left(-i\frac{\mu(y+y')}{2}\right)}\Big]  d\mu,\label{1121_5}
    \end{eqnarray}
    \begin{eqnarray}
    &&{\rho}_{\mathrm{tail}\pm}(y,y')\equiv\frac{1}{2\pi}\int\limits_{\mu_{\mathrm{max}}}^{\infty}\Big[{\phi}(1;\pm\mu,y-y')\exp{\left(\pm i\frac{\mu(y+y')}{2}\right)}\Big]  d\mu.
    \label{1121_4}
\end{eqnarray}
\par  We have KCFE $\{\hat{\phi}(1;\mu_k,\nu=y-y')\}_{k=1}^{N_{\mu}}$ only to approximate the first integral. That means that no measurement data is available to directly estimate \eqref{1121_4}. 
From \cite{markovich2024not} we know that the family of tomographic PDFs $\mathcal{W}(x|\mu,\nu)$  encountered in experiments typically exhibit a Hermite-Gaussian form. Such PDFs are bounded on infinity and  have finite moments.  Their multimodal structure makes them challenging to estimate using solely parametric methods. However, beyond a certain value of $x$, the Hermite modes decay, and the Gaussian tail becomes dominant.  Therefore, it is reasonable to adopt a hybrid estimation approach: to estimate ${\rho}_{\mathrm{max}}(y,y')$ nonparametrically from the measurement data while estimating the tail components ${\rho}_{\mathrm{tail} \pm}(y,y')$ parametrically using the prior knowledge of the theoretically prepared state. We use similar technique for $ {\tr{( \boldsymbol{\rho}_{1}\boldsymbol{\rho}_{2})}}$ estimation.

\subsection*{3.1. DFT-Based Approximation of Integrals Involving the Characteristic Function}
We want to approximate \eqref{1121_5} with a finite sum. Let us divide the interval of width $2\mu_{\mathrm{max}}$ into $N_{\mu}$ equally spaced sub-intervals with width \begin{eqnarray}\label{1008}
    \Delta\mu =2\mu_{\mathrm{max}}/N_{\mu}.
\end{eqnarray} 
We can approximate the Fourier integral \eqref{1121_5} by the sum
\begin{eqnarray}\label{1090}
  \tilde{\rho}_{\mathrm{max}}(y,y')\equiv  \frac{\Delta\mu}{2\pi}\sum\limits_{k=0}^{N_{\mu}-1}{\phi}(1;k \Delta\mu-\mu_{\mathrm{max}},y-y')\exp{\left(-i\frac{(k\Delta\mu-\mu_{\mathrm{max}})(y+y')}{2}\right)}.
\end{eqnarray}
By dividing the measurement range into small intervals and assigning a constant value of the function for each interval, we do a similar thing that is done in signal processing, when one performs time-sampling of the signal and the sampling frequency is given by $F=\Delta\mu^{-1}$. Further we use KCFE to write the estimate of \eqref{1090} as follows
\begin{eqnarray}\label{1260}
     \hat{\tilde{\rho}}_{\mathrm{max}}(y,y')&\equiv & \frac{1}{2\pi F}\exp{\left(i\frac{\mu_{\mathrm{max}}(y+y')}{2}\right)}\sum\limits_{k=0}^{N_{\mu}-1} \hat{\phi}_{nh}[k]\exp{\left(-i\frac{k(y+y')}{2F}\right)},
\end{eqnarray}
where we introduce the notation 
\begin{eqnarray}
    \hat{\phi}_{nh}[k]\equiv \hat{\phi}_{nh}\left(1;\frac{k}{F}-\mu_{\mathrm{max}},y-y'\right),
\end{eqnarray}
and the discrete Fourier transform (DFT) of this function is
\begin{eqnarray}\label{1259}
   \hat{\Phi}_{nh}[g]=\sum\limits_{k=0}^{N_{\mu}-1} \hat{\phi}_{nh}[k]\exp{\left(-i 2\pi \frac{ gk}{N_{\mu}}\right)}.
\end{eqnarray}
We use square brackets to stress that the function takes discrete input.
Comparing \eqref{1260} and \eqref{1259}, we conclude that the DFT calculates the Fourier transform only for  discrete frequencies which are given by $\frac{y+y'}{2}=2\pi\frac{g}{N_{\mu}}F$.
Then the final estimate of the  integral \eqref{1121_5} after truncation, discretization and CF estimation is 
\begin{eqnarray}\label{1012}
     \hat{\tilde{\rho}}_{\mathrm{max}}(y,y')&=& \frac{1}{2\pi F}\exp{\left(i\frac{\mu_{\mathrm{max}}(y+y')}{2}\right)}\hat{\Phi}_{nh}\Big[\frac{(y+y')N_{\mu}}{4\pi F}\Big].
\end{eqnarray}
\par We use $L_{\infty}$ norm as a measure of  mean error of approximating the kernel \eqref{eqn: integral from matrix elements} of the density matrix operator with \eqref{1012}:
\begin{eqnarray}\label{1033_33}
   {L}_{\infty}(\boldsymbol{\rho}_{nh})\equiv \sup\limits_{y,y'}  \mathbb{E}\left(|{\rho}(y,y')-\hat{\tilde{\rho}}_{\mathrm{max}}(y,y')|^2\right).
\end{eqnarray} 
It quantifies the maximum pointwise squared deviation between the true kernel and the estimated kernel over the domain of $y,y'\in\mathbb{R}$.
\par In our method we have three types of errors. First error is the one arising from the CF estimation. The second is the truncation of the integral over the infinite space by the integral over a final domain. The third is the discretization error arising from approximating the integral with the finite sum. 
\par Suppose we want to approximate
\begin{eqnarray}
    I(\theta)=\frac{1}{2\pi}\int\limits_{-\infty}^{\infty}\phi(\mu)e^{-i\mu \theta}d\mu
\end{eqnarray}
with the truncated integral on the finite interval $[-\mu_{\mathrm{max}},\mu_{\mathrm{max}}]$:
\begin{eqnarray}
     I_{\mathrm{trunc}}(\theta;\mu_{\mathrm{max}})\equiv \frac{1}{2\pi}\int\limits_{-\mu_{\mathrm{max}}}^{\mu_{\mathrm{max}}}\phi(\mu)e^{-i\mu \theta}d\mu
\end{eqnarray}
The truncation error is given by
\begin{eqnarray}
\epsilon_{\mathrm{trunc}}(\theta;\mu_{\mathrm{max}}) = \left| I(\theta)-I_{\mathrm{trunc}}(\theta;\mu_{\mathrm{max}})\right| 
= \left| \frac{1}{2\pi} \int_{|\mu| > \mu_{\max}} \phi(\mu) \, e^{-i\mu \theta} \, d\mu \right|.
\end{eqnarray}
Let $\phi(\mu)$  be analytic in the strip
\begin{eqnarray}
S_b=\{ \mu \in \mathbb{C} : |\operatorname{Im}(\mu)| < b \},\quad b > 0.
\end{eqnarray}
Assume that on the real line it satisfies 
\begin{eqnarray}\label{1621}
    |\phi(\mu)|\leq Ce^{-\tau \mu}, \quad \mu \in \mathbb{R},\quad \tau>0,
\end{eqnarray}
and $C>0$ is some constant. Then it is easy to see that the
truncation error is bounded by
\begin{eqnarray}\label{1621_1}
\epsilon_{\mathrm{trunc}}(\theta;\mu_{\mathrm{max}}) \leq \frac{C}{2\pi}\, e^{-\tau \mu_{\max}},\quad \forall \theta\in \mathbb{R}.
\end{eqnarray}
So the tail is exponentially small in $\mu_{\mathrm{max}}$.
This indicates that for analytic functions that decay exponentially on the real axis (or faster), the truncation error decreases exponentially with the truncation limit $\mu_{\max}$. The exponential, sine and cosine functions, as long as polynomials are analytic because their Taylor series expansions converge to the function values on their entire domain.
\par When approximating a Fourier integral with a discrete sum, a discretization error arises from replacing the continuous integral by a numerical quadrature. We approximate $I_{\mathrm{trunc}}(\theta,\mu_{\mathrm{max}})$ by 
\begin{eqnarray}\label{eq:discrete_sum}
I_{N_{\mu}} = \frac{\Delta \mu}{2\pi} \sum_{j=0}^{N_{\mu}-1} \phi(\mu_j) \, e^{-i\mu_j \theta},
\end{eqnarray}
with sample points
\begin{eqnarray}
\mu_j = -\mu_{\max} + j\Delta \mu,\quad \text{for } j = 0,1,\dots, N_{\mu}-1,
\end{eqnarray}
and grid spacing $\Delta \mu$.
If $f(\mu)$ is analytic in the strip
$S_b$ and is uniformly bounded $|f(\mu)|\leq M$, $\forall\mu\in S_b$  the discretization error
decays as~\cite{trefethen2014exponentially}:
\begin{eqnarray}\label{1644}
E_{\mathrm{dis}} =  \left| I_{\mathrm{trunc}} - I_{N_{\mu}}\right|\leq \frac{M}{\pi}\frac{e^{-2\pi \tau/\Delta\mu}}{1-e^{-2\pi \tau/\Delta\mu}}=O\!\left(e^{-\pi\tau N_{\mu}/\mu_{\mathrm{max}}}\right).
\end{eqnarray}
\par Let us use these results to our case. Considering the exponential family of PDFs we are typically dealing with, the   $\phi(1;\mu,y-y')$ is analytic and  we can use the result \eqref{1621_1} to upper bound the  truncation error introduced by limiting the infinite integration domain to a finite interval. Since $|\phi(\mu)|\leq 1$, we can find the constants $C$ and $\tau$ such that the condition \eqref{1621} will be satisfied. 
First we select:
\begin{eqnarray}
    |\phi(1;\mu,y-y')|\leq 1\leq Ce^{-\tau \mu_{\mathrm{max}}}, \quad \mu \in \mathbb{R},\quad \tau>0.
\end{eqnarray}
Since $\mu_{\mathrm{max}}>0$, the inequality holds for $\tau\leq \ln{(C)}/\mu_{\mathrm{max}}$, $C\geq 1$. The smallest truncation error is achieved by taking the largest admissible 
$\tau$. Then the truncation error is
\begin{eqnarray}\label{1743}
  \varepsilon_{\mathrm{trunc}} = |{\rho}(y,y')-{\rho}_{\mathrm{max}}(y,y')|\leq \frac{1}{2\pi}.\end{eqnarray}
  Choosing the maximal 
$\tau$  makes the product $Ce^{-\tau \mu_{\mathrm{max}}}$ exactly one, and drives the truncation error down to the universal ceiling 
\eqref{1743}, independent of $C$ and $\mu_{\mathrm{max}}$. Lets observe
\begin{eqnarray}
    |\phi(1;\mu,y-y')|\leq Ce^{-\tau \mu_{\mathrm{max}}}\leq 1, \quad \mu \in \mathbb{R},\quad \tau>0.
\end{eqnarray}
If $0<C\leq 1$ the inequality holds $\forall \tau\geq 0$. If $C>1$, then $\tau\geq \ln{(C)}/\mu_{\mathrm{max}}$ and the truncation error is
\begin{eqnarray}\label{17433}
  \varepsilon_{\mathrm{trunc}} \leq \frac{C}{2\pi}e^{-\tau\mu_{\mathrm{max}}}.\end{eqnarray}
The truncation error is tending to zero when $\mu_{\mathrm{max}}\rightarrow\infty$. However, we are always limited to some finite value of $\mu_{\max}$, so the truncation error always exists. Further we show how to mitigate this error with parametric correction.
\par Using the result \eqref{1644}, the discretization error arising from replacing the continuous integral with a sum over discrete points, can be estimated as 
\begin{eqnarray}
  \varepsilon_{\mathrm{dis}} =    |{\rho}_{\mathrm{max}}(y,y')-\tilde{\rho}_{\mathrm{max}}(y,y')|\leq  \frac{1}{\pi}\frac{e^{-\pi \tau N_{\mu}/ \mu_{\max}}}{1-e^{-\pi \tau N_{\mu}/ \mu_{\max}}},
\end{eqnarray}
where we can select $M=1$ since $|\phi(\mu)|\leq 1$ holds.
\par 
Finally, we take into account the estimation error, arising from the imperfections of the KCFE  method:
\begin{eqnarray}\label{234}
  \varepsilon_{K}^2 &=&   \mathbb{E}\left( |\tilde{\rho}_{\mathrm{max}}(y,y')-\hat{\tilde{\rho}}_{\mathrm{max}}(y,y')|^2\right)\\\nonumber
  &=& \mathbb{E}\left(\Bigg |\frac{1}{2\pi F}\exp{\left(i\frac{\mu_{\mathrm{max}}(y+y')}{2}\right)}\sum\limits_{k=0}^{N_{\mu}-1} ({\phi}[k]-\hat{\phi}_{nh}[k])  
  \exp{\left(-i\frac{k(y+y')}{2F}\right)}\Bigg|^2\right).
\end{eqnarray}
We use the Cauchy-Schwarz inequality
\begin{eqnarray}
    \Bigg|\sum_k a_kb_k\Bigg|^2\leq  \left(\sum_k |a_k|^2\right)\left(\sum_k |b_k|^2\right),
\end{eqnarray}
where we take $a_k={\phi}[k]-\hat{\phi}_{nh}[k]$ and 
$b_k=\frac{1}{2\pi F}\exp{\left(i\frac{\mu_{\mathrm{max}}(y+y')}{2}\right)} \exp{\left(-i\frac{k(y+y')}{2F}\right)}$.
Then the upper bound is 
  \begin{eqnarray}\label{234_1}
  \varepsilon_{K}^2 
  &\leq&\frac{N_{\mu}}{(2\pi F)^2}\mathbb{E}\left(\sum\limits_{k=0}^{N_{\mu}-1} \Big|{\phi}[k]-\hat{\phi}_{nh}[k]\Big|^2\right).
\end{eqnarray}
Since the exponential term is of a unit modulus, it does not affect the magnitude of the expression. We use the point-wise MSE of the kernel CF estimation \eqref{1224} to write
  \begin{eqnarray}\label{234_2}
  \varepsilon_{K}^2 \leq \frac{N_{\mu}}{(2\pi F)^2n}\sum\limits_{k=0}^{N_{\mu}-1}(1-|{\phi}[k]|^2)
  &\leq & \frac{\mu^2_{\max}}{\pi^2 n}.
\end{eqnarray}
\par Collecting all the errors, the distance between the estimate and the exact integral  is bounded by
\begin{eqnarray}\label{1113_1}
 \mathbb{E}\left( |{\rho}(y,y')-\hat{\tilde{\rho}}_{\mathrm{max}}(y,y')|^2\right)\leq3(\varepsilon_{\mathrm{trunc}}^2+\varepsilon_{\mathrm{dis}}^2+  \varepsilon_{K}^2)= \frac{3}{\pi^2}\Big[
\frac{1}{4} e^{-2\tau \mu_{\max}}+\frac{e^{-2\pi \tau N_{\mu}/ \mu_{\max}}}{(1-e^{-\pi \tau N_{\mu}/ \mu_{\max}})^2}+\frac{\mu^2_{\max}}{ n}\Big].
\end{eqnarray}
As we mentioned, we need to select the parameter  $\mu_{\mathrm{max}}$ such that the tail behavior of the CF would be dominating the oscillations. In theory we can select any $\mu_{\mathrm{max}}$ that would fit this demand, but in practice it is dictated by the measurement setup restrictions. To guarantee 
\begin{eqnarray}\label{1105}
\lim\limits_{n\rightarrow\infty}\frac{\mu_{\mathrm{max}}^2}{ n}=0, \quad \lim\limits_{n\rightarrow\infty}\frac{N_{\mu}}{\mu_{\mathrm{max}}}=\infty,
\end{eqnarray}
we must check that $ \mu_{\mathrm{max}}$ is such that
\begin{eqnarray}
    \mu_{\mathrm{max}}=o\left(  n^{\frac{1}{2}}\right)
\end{eqnarray}
and $N_{\mu}$ must grow strictly faster than $ \mu_{\mathrm{max}}$.
To this end, we can  select 
\begin{eqnarray}\label{1202}
    \mu_{\mathrm{max}}\approx\log{n},\quad N_{\mu}\approx\mu_{\mathrm{max}}\log{n}.
\end{eqnarray} 
Then \eqref{1105} are satisfied. 
The error \eqref{1113_1} is 
\begin{eqnarray}\label{1113_2}
 \mathbb{E}\left( |{\rho}(y,y')-\hat{\tilde{\rho}}_{\mathrm{max}}(y,y')|^2\right)=
\begin{cases}
O(n^{-2\tau}), & \quad 0<\tau<\dfrac12,\\[6pt]
O((\log n)^{2}\,n^{-1}), & \quad \tau\ge\dfrac12.
\end{cases}
\end{eqnarray}
\par The total amount of measurement that we must do for our reconstruction procedure per $(y,y')$ is $T_{\mu}=N_{\mu}\times n= n(\log{n})^2$. Since we deal with the Gaussian tailed PDFs, the $\tau$ can be always selected bigger than $1/2$. 
Using the Lambert 
$W$- function, we find $n=\exp{2W(\sqrt{T_{\mu}}/2)}$. For large $T_{\mu}$ the $W$- function is $W(z)\sim \log{z}-\log{\log{z}}$. We can derive that $n\approx T_{\mu}/4$. 
Then
\begin{eqnarray}\label{eq:epsilon_rho_upperbound}
 \mathbb{E}\left( |{\rho}(y,y')-\hat{\tilde{\rho}}_{\mathrm{max}}(y,y')|^2\right)\leq 
  \frac{3(\log{T_{\mu}/4})^4}{\pi^2 T_{\mu}}=\tilde{O}\left(\frac{1}{T_{\mu}}\right). 
\end{eqnarray}
In KDE estimation, the difference between pointwise MSE and supremum MSE (i.e., in $L_{\infty}$ norm) does not change the convergence rate, provided the target function is smooth and the estimator is well-behaved~\cite{tsybakov2009introduction}. Since the reconstructed function is smooth and bounded, taking the supremum on the bounded domain provides
\eqref{1033_33}
\begin{eqnarray}\label{1033_33_1}
  {L}_{\infty}(\boldsymbol{\rho}_{nh})=\tilde{O}\left(\frac{1}{T_{\mu}}\right).
\end{eqnarray}

\par The upper bound \eqref{234_2} allows to see the rate of convergence in $n$ of the estimation error $\varepsilon_{K}^2$. However, the actual error, calculated by \eqref{234} is much smaller, still preserving the rate $O(1/n)$, that can be seen in Fig.~\ref{fig:epsilon_rho}. The reason of this difference is that the upper bound provided by the  Cauchy-Schwarz inequality is not tight.
\begin{figure}[h]
    \centering
    \begin{subfigure}[t]{0.48\linewidth}
        \centering
        \includegraphics[width=\linewidth]{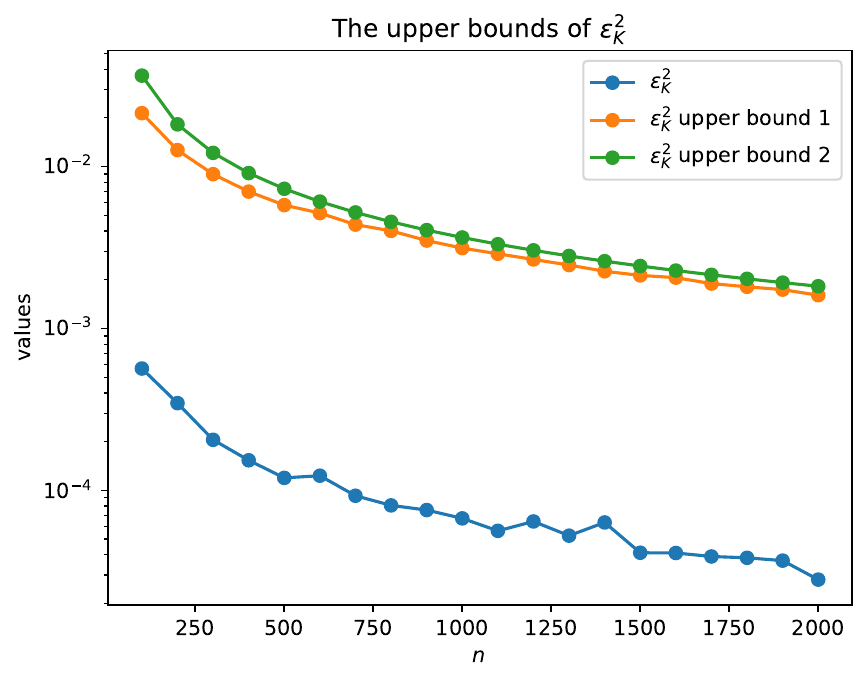}
        \caption{}
        \label{fig:epsilon_K}
    \end{subfigure}
    \hfill
    \begin{subfigure}[t]{0.48\linewidth}
        \centering
        \includegraphics[width=\linewidth]{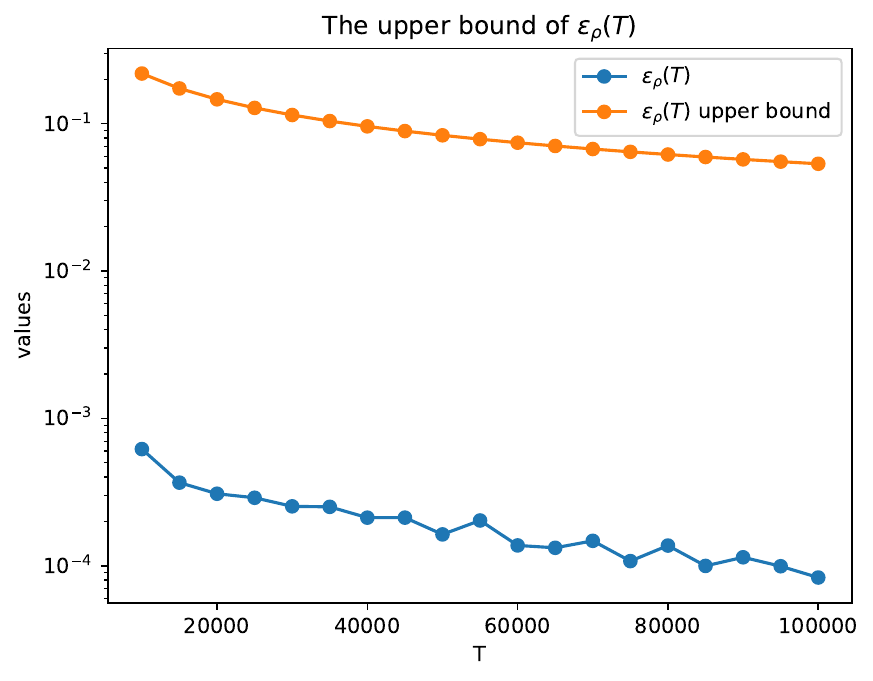}
          \caption{}
        \label{fig:epsilon_rho}
    \end{subfigure}
    \caption{Comparison of estimation errors with their theoretical upper bounds for the initial coherent cat state (a) Comparison between $\varepsilon_K^2$ (Eq.~\eqref{234}) and its upper bounds (Eqs.~\eqref{234_1} and~\eqref{234_2} for upper bound 1 and 2 in the labels, respectively) as a function of $n\in[100,2000]$ for $N_{\mu}=50$, $\mu_{\max}=6$, $(y,y')=(1.5,1.0)$. Each point is averaged over 100 trials. Observed ratios: $46.07$ and $55.37$. (b) Comparison between $\varepsilon_{\rho}$ (Eq.~\eqref{1033_33}) and its upper bound (Eq.~\eqref{eq:epsilon_rho_upperbound}) as a function of $T\in [10^4, 10^5]$.The observed ratio is $4.88 \times 10^2$.}
    \label{fig:epsilon_all}
\end{figure}

\par As we mentioned before, we can't increase the $\mu_{\mathrm{max}}$ with $n$ in realistic scenarios. That is why we  consider $\mu_{\mathrm{max}}$ as a constants, given by the experimental setup.
We must be assure that this constant is in accordance to \eqref{1105} $\mu_{\mathrm{max}}<< n$. In this case the third term in \eqref{1113_1} is decreasing even faster ($O(1/n)$), however the first term in in \eqref{1113_1} will give us a constant shift because of the truncation procedure. 
To reduce this truncation error we introduce a correcter of the tail component. In \cite{markovich2024not} we show that for the ground state of the HO the CF is \eqref{1530}.
Substituting it in \eqref{1121_4} we introduce the correction of the tail part as
 \begin{eqnarray}
    &&\boldsymbol{\rho}_{\mathrm{tail}\pm}(y,y')
\;=\;
\frac{1}{2\sqrt{\pi}}
\,\exp\!\Bigl(-\tfrac{y^2 + y'^2}{2}\Bigr)
\,
\mathrm{erfc}\!\Bigl(\tfrac{\mu_{\mathrm{max}} \mp i\,(y+y')}{2}\Bigr),
    \label{1121_4_0}
\end{eqnarray}
where the result involves the complementary error function:
\begin{eqnarray}
    \mathrm{erfc}(z)=\frac{2}{\sqrt{\pi}}\int\limits_{z}^{\infty}e^{-t^2}dt.
\end{eqnarray}
Then the KQSE with correction of the tail is
\begin{eqnarray}
  \boldsymbol{\rho}(y,y')=  \hat{\tilde{\rho}}_{\mathrm{max}}(y,y')+\boldsymbol{\rho}_{\mathrm{tail}\pm}(y,y').
\end{eqnarray}
 We cannot estimate how the parametric correction will reduce the error, since it is fully dependent on how well we guessed our parametric model. In the best-case scenario, the truncation error, already exponentially suppressed in $\mu_{\max}$, can be considered negligible. Theoretical knowledge of the CF can help estimate this error when $\mu_{\max}$ is insufficient to cover its oscillatory support. For instance, the CF of the HO decays as $e^{-t^2\beta/4}$, reaching $10^{-267}$ for the ground state and $10^{-258}$ for the $n=3$ Fock state when $\beta = 50$. Thus, given an experimental bound on $\mu_{\max}$, the expected truncation loss can be evaluated analytically, allowing one to decide whether a correction procedure is necessary. 
\subsection*{3.2. Estimation Error of the Trace Characteristics of Quantum States}
\par To estimate the trace characteristic of the density matrix operators we do similar steps. However, in this case their is no need in DFT, it is a simple integral approximation with the finite sum. The tail correction can be done in the similar fashion. 
We denote the product of two CFs as
\begin{eqnarray}\label{1752}
    A(\mu,\nu) =\phi_1(1;\mu,\nu) \phi_2^{\star}(1;\mu,\nu),
\end{eqnarray}
and its point‐wise product and the KCFE as:
\begin{eqnarray}
   A(\mu_i,\nu_j)=\phi_1(1;\mu_i,\nu_j) \phi_2^{\star}(1;\mu_i,\nu_j) ,\quad \hat{A}(\mu_i,\nu_j)=\hat{\phi}_{1,nh}(1;\mu_i,\nu_j) \hat{\phi}_{2,nh}^{\star}(1;\mu_i,\nu_j).
\end{eqnarray}
\par We want to approximate the integral
   \begin{eqnarray}
I\equiv   \tr{ \boldsymbol{\rho}_{1}\boldsymbol{\rho}_{2}}
  &=& \frac{1}{2\pi}\iint A(\mu,\nu) d\mu d\nu.
\end{eqnarray}
Similarly to the previous section, we first truncate the latter integral:
\begin{eqnarray}
    I_{\mathrm{trunc}}
  &=& \frac{1}{2\pi}\int\limits_{-\mu_{\mathrm{max}}} ^{\mu_{\mathrm{max}}}\int\limits_{-\nu_{\mathrm{max}}} ^{\nu_{\mathrm{max}}} A(\mu,\nu) d\mu d\nu.
\end{eqnarray}
The product of two CFs \eqref{1752} is a CF too. We assume it has  a Gaussian tail (that is a typical situation for HO based states~\cite{markovich2024not}) and its modulus can be bounded by $Ce^{-\beta(\mu^2+\nu^2)}$. The truncation error is  
\begin{eqnarray}\label{1347}
|I-I_{\mathrm{trunc}}|^2=O\left( e^{-4\beta\min{(\mu_{\mathrm{max}}^2,\nu_{\mathrm{max}}^2)}}\right).    
\end{eqnarray}
Then we discretise the finite integral
   \begin{eqnarray}
\tilde{I}=\frac{1}{2\pi} \sum_{i=1}^{N_{\mu}}  \sum_{j=1}^{N_{\nu}} A(\mu_i, \nu_j)\Delta\mu \Delta\nu
\end{eqnarray}
For smooth integrands and uniform grid the discretization error is $|I_{\mathrm{trunc}}-\tilde{I}|^2=O(\Delta\mu^4 + \Delta\nu^4)$.
\par Finally, we use the KCFE:
\begin{eqnarray}
     \hat{\tilde{I}}=\widehat{\tr{( \boldsymbol{\rho}_{1,nh}\boldsymbol{\rho}_{2,nh})}}=\frac{1}{2\pi}\sum_{i=1}^{N_{\mu}}  \sum_{j=1}^{N_{\nu}} \hat{A}(\mu_i,\nu_j)\Delta \mu\Delta \nu.
\end{eqnarray}
The error arising from the kernel estimation method is
\begin{eqnarray}
  |\tilde{I}- \hat{\tilde{I}}|^2&=&\left(\frac{\Delta \mu\Delta \nu}{2\pi}\right)^2\left|\sum_{i=1}^{N_{\mu}}  \sum_{j=1}^{N_{\nu}} (A(\mu_i, \nu_j)-\hat{A}(\mu_i, \nu_j))\right|^2\\\nonumber
  &=&\left(\frac{\Delta \mu\Delta \nu}{2\pi}\right)^2\sum_{i=1}^{N_{\mu}}  \sum_{j=1}^{N_{\nu}} (A(\mu_i, \nu_j)-\hat{A}(\mu_i, \nu_j))(A(\mu_i, \nu_j)-\hat{A}(\mu_i, \nu_j))^{\star}
\end{eqnarray}
If we take the expectation, then
\begin{eqnarray}
       \mathbb{E}|\tilde{I}- \hat{\tilde{I}}|^2&=&\left(\frac{\Delta \mu\Delta \nu}{2\pi}\right)^2\mathbb{E}\left[\sum_{i=1}^{N_{\mu}}  \sum_{j=1}^{N_{\nu}} (A(\mu_i, \nu_j)-\hat{A}(\mu_i, \nu_j))(A(\mu_i, \nu_j)-\hat{A}(\mu_i, \nu_j))^{\star}\right]\\\nonumber
       &=&\left(\frac{\Delta \mu\Delta \nu}{2\pi}\right)^2\sum_{i=1}^{N_{\mu}}  \sum_{j=1}^{N_{\nu}}\mathbb{E}\left[|A(\mu_i, \nu_j)-\hat{A}(\mu_i, \nu_j)|^2\right].
\end{eqnarray}
We use $|ab-cd|^2\leq 2|a-c|^2|b|^2+2|c|^2|b-d|^2$, to write
\begin{eqnarray}
    &&\mathbb{E}\left[|A(\mu_i, \nu_j)-\hat{A}(\mu_i, \nu_j)|^2\right]
   \\\nonumber
   &\leq& 2\mathbb{E}\left[|{\phi}_{1}(\mu_i,\nu_j) -\hat{\phi}_{1,nh}(\mu_i,\nu_j)|^2|{\phi}_{2}^{\star}(\mu_i,\nu_j)|^2\right]
    +2\mathbb{E}\left[|{\phi}_{2}^{\star}(\mu_i,\nu_j) -\hat{\phi}_{2,nh}^{\star}(\mu_i,\nu_j)|^2|\hat{\phi}_{1,nh}(\mu_i,\nu_j)|^2\right]\\\nonumber
    &=&2\mathbb{E}\left[|{\phi}_{1}(\mu_i,\nu_j) -\hat{\phi}_{1,nh}(\mu_i,\nu_j)|^2\right]|{\phi}_{2}^{\star}(\mu_i,\nu_j)|^2+ 2\mathbb{E}\left[|{\phi}_{2}^{\star}(\mu_i,\nu_j) -\hat{\phi}_{2,nh}^{\star}(\mu_i,\nu_j)|^2\right]\mathbb{E}\left[|\hat{\phi}_{1,nh}(\mu_i,\nu_j)|^2\right]\\\nonumber
    &=&2\Bigg[\left(\frac{1-|\phi_1(1;\mu_i,\nu_j)|^2}{n}
-\frac{(1-|\phi_1(1;\mu_i,\nu_j)|^2)^2}{4n^2\,|\phi_1(1;\mu_i,\nu_j)|^2}\right)|{\phi}_{2}^{\star}(1;\mu_i,\nu_j)|^2\\\nonumber
&+& \left(\frac{1-|\phi_2(1;\mu_i,\nu_j)|^2}{n}
-\frac{(1-|\phi_2(1;\mu_i,\nu_j)|^2)^2}{4n^2\,|\phi_2(1;\mu_i,\nu_j)|^2}\right)|\phi_K(th)|^2 \left(\frac{1}{n}+\left(1-\frac{1}{n}\right)|\phi_1(1;\mu_i,\nu_j)|^2\right)
    \Bigg]\\\nonumber
&=&2\left(|{\phi}_{2}^{\star}(1;\mu_i,\nu_j)|^2(1-|\phi_1(1;\mu_i,\nu_j)|^2)+|\phi_K(th)|^2|{\phi}_{1}(1;\mu_i,\nu_j)|^2(1-|\phi_2(1;\mu_i,\nu_j)|^2)\right)O\left(\frac{1}{n}\right),
\end{eqnarray}
where we use the Gaussian kernel with  the optimal bandwidth \eqref{1924}, using the approximation \begin{eqnarray}
|\phi_K(th)|^2\approx\exp{-\frac{2}{\mu^2+\nu^2}\frac{1-|\phi_1(1;\mu_i,\nu_j)|^2}{2n|\phi_1(1;\mu_i,\nu_j)|^2}}=1+O(1/n).\end{eqnarray} 
We assume that $\Delta\mu =2\mu_{\mathrm{max}}/N_{\mu}$, $\Delta\nu =2\nu_{\mathrm{max}}/N_{\nu}$. Then 
\begin{eqnarray}
\!\!\!\!\!\!\!\!\mathbb{E}|\tilde{I}- \hat{\tilde{I}}|^2\leq \frac{8\mu^2_{\mathrm{max}}\nu_{\mathrm{max}}^2}{\pi^2N_{\mu}^2 N_{\nu}^2}\sum_{i=1}^{N_{\mu}}  \sum_{j=1}^{N_{\nu}}\left(|{\phi}_{2}^{\star}(1;\mu_i,\nu_j)|^2(1-|\phi_1(1;\mu_i,\nu_j)|^2)+|{\phi}_{1}(1;\mu_i,\nu_j)|^2(1-|\phi_2(1;\mu_i,\nu_j)|^2)\right)O\left(
   \frac{1}{n}\right).
\end{eqnarray}
In this case, the total amount of measurements is 
$T_{\mu,\nu}=nN_{\mu}N_{\nu}$, since we vary both $\mu$ and $\nu$ parameters. We select the parameters $\mu_{\mathrm{max}}$ and $\nu_{\mathrm{max}}$ to be of the rate \eqref{1202}.
Then the latter error is
\begin{eqnarray}
    \mathbb{E}|\tilde{I}- \hat{\tilde{I}}|^2= \tilde{O}\left(\frac{1}{T_{\mu,\nu}}\right).
\end{eqnarray}
We also assumed that the truncation error is corrected  similarly to the previous section with the parametric tail corrector. Then the final error is scaling as $\mathbb{E}|{\tr{( \boldsymbol{\rho}_{1}\boldsymbol{\rho}_{2})}}- \widehat{\tr{( \boldsymbol{\rho}_{1,nh}\boldsymbol{\rho}_{2,nh})}}|^2=\tilde{O}(1/T_{\mu,\nu})$.
\begin{example}
     Let us assume that $\boldsymbol{\rho}$ is a true state and $\widehat{\boldsymbol{\rho}_{nh}}$ is its KQSE. Then from the previous derivations we know that the MSE of the estimate is
\begin{eqnarray}
\mathrm{MSE}(\widehat{\tr{( \boldsymbol{\rho}\boldsymbol{\rho}_{nh})}})= \mathbb{E}|{\tr{( \boldsymbol{\rho}^2)}}- \widehat{\tr{( \boldsymbol{\rho}\boldsymbol{\rho}_{nh})}}|^2 = \tilde{O}(1/T_{\mu,\nu}).
\end{eqnarray}
If the state $\boldsymbol{\rho}$ is pure, then $D(\boldsymbol{\rho}, \boldsymbol{\rho})=0$ and 
\begin{eqnarray}
  \widehat{D(\boldsymbol{\rho},\boldsymbol{\rho}_{nh})}=\sqrt{1-\widehat{\tr{( \boldsymbol{\rho}\boldsymbol{\rho}_{nh})}}} .
\end{eqnarray}
The MSE of the estimate of the squared trace distance is exactly equal to the MSE of the overlap estimator:
\begin{eqnarray}
\mathrm{MSE}(\widehat{D^2(\boldsymbol{\rho},\boldsymbol{\rho}_{nh})})=\mathrm{MSE}(\widehat{\tr{( \boldsymbol{\rho}\boldsymbol{\rho}_{nh})}})=\tilde{O}(1/T_{\mu,\nu}).
\end{eqnarray}
Apply Chebyshev's inequality, we get
\begin{eqnarray}
    P\left(|\widehat{D^2(\boldsymbol{\rho},\boldsymbol{\rho}_{nh})})-{D^2(\boldsymbol{\rho},\boldsymbol{\rho})})|\geq \delta\right)\leq \frac{\mathrm{MSE}(\widehat{D^2(\boldsymbol{\rho},\boldsymbol{\rho}_{nh})})}{\delta^2}=\tilde{O}\left(\frac{1}{T_{\mu,\nu}\delta^2}\right),
\end{eqnarray}
since ${D^2(\boldsymbol{\rho},\boldsymbol{\rho})}=0$. 
\end{example}
{
\color{black}
\section*{S4. Simulation and Experimental Study}\label{sec_sim}

\par In this section, we present a comprehensive numerical study of tomographic reconstruction methods applied to both simulated and experimental data. The analysis focuses on the estimation of the tomographic probability distribution $\mathcal{W}(X|\mu,\nu)$, its characteristic function $\hat{\phi}(1;\mu,\nu)$ and the kernel of the quantum state $\rho(y,y')$, with particular emphasis on the robustness of nonparametric approaches in comparison with parametric alternatives.
\par We first perform a simulation study in which samples are generated from analytically known tomographic probability density functions. The true tomograms are obtained from the theoretical expressions derived below, and synthetic measurement data are generated using importance sampling. For each fixed pair of tomographic parameters $(\mu_k,\nu_k)$, we consider multiple independent datasets of size $n$ and investigate the reconstruction accuracy as a function of the sample size.
\par The KDE estimation is performed using: Gaussian kernels with the bandwidth selected via cross-validation (GridSearchCV), as well as Epanechnikov kernels with adaptively chosen bandwidth based on Silverman's rule of thumb~\cite{silverman2018density}. For comparison, histogram-based estimators are also included, with the bin width chosen according to the Freedman--Diaconis rule.
\par To quantify the reconstruction quality, we compute the MISE defined in Eq.~\eqref{1640_10}. For each configuration, we generate $L$ independent realizations $\{\hat{\mathcal{W}}^{(l)}_{nh}(x|\mu_k,\nu_k)\}_{l=1}^{L}$ of the estimator, each based on an independent dataset. The MISE is approximated numerically as
\begin{eqnarray}
     \widehat{\operatorname{MISE}}(\hat{\mathcal{W}}_{nh}(x|\mu_k,\nu_k)) =
     \frac{1}{L}\sum_{l=1}^{L}
     \left[
     \sum_{i=1}^{m}
     \bigl(\hat{\mathcal{W}}^{(l)}_{nh}(x_i|\mu_k,\nu_k)
     -\mathcal{W}(x_i|\mu_k,\nu_k)\bigr)^2
     \,\Delta x
     \right],
\end{eqnarray}
where the integral is approximated on a discrete grid $\{x_i\}_{i=1}^{m}$ with spacing $\Delta x$ using a Riemann sum. The study is performed for three sample sizes, $n=\{500,1000,2000\}$.
\par In addition to KDE-based reconstruction, we also consider MLE of the tomogram under parametric assumptions. Specifically, the tomographic distribution is modeled as a Gaussian mixture model (GMM) with a fixed number of components. We analyze the cases of two- and three-component mixtures (GMM$=2$ and GMM$=3$) in order to assess the sensitivity of MLE-based reconstruction to model misspecification. This comparison allows us to directly contrast nonparametric KDE with parametric MLE approaches in a controlled setting.
\par A similar analysis is carried out for the CF of the tomogram, reconstructed using the KCFE. The reconstruction accuracy is quantified by the mean squared error
\begin{eqnarray}
     \widehat{\operatorname{MSE}}(\hat{\phi}_{nh}(1;\mu_k,\nu_k)) =
     \frac{1}{L}\sum_{l=1}^{L}
     \left|
     \hat{\phi}^{(l)}_{nh}(1;\mu_k,\nu_k)
     -\phi(1;\mu_k,\nu_k)
     \right|^2 .
\end{eqnarray}
We also reconstruct the characteristic function using MLEs of the tomogram. Specifically, the CF is obtained as the Fourier transform of the MLE-reconstructed tomographic probability distribution under the Gaussian mixture model assumptions discussed above. The resulting CF estimates are then compared with those obtained directly via the KCFE. 
\par Finally, we investigate the performance of the proposed methods on noisy data and experimental homodyne measurements. We consider tomograms corresponding to noisy quantum states modeled as mixtures with additive Gaussian noise, as in Eq.~\eqref{1033}. From the noisy measurement data, the KCFE yields a set of estimates $\{\hat{\phi}_X(1;\mu_i,\nu_j)\}$ for $i=1,\dots,N_\mu$ and $j=1,\dots,N_\nu$, which are effectively purified at the level of the CF. These purified estimates are then used as input for KQSE, enabling the reconstruction of the density-matrix kernel $\rho(y,y')$ and associated trace characteristics.
\par In addition, for comparison, we also reconstruct the density-matrix kernel $\rho(y,y')$ using CFs obtained from MLEs of the tomogram.  This enables a direct comparison between density matrix reconstructions based on KQSE and MLE in the presence of noise and experimental imperfections.
\par The numerical error of the reconstructed kernel of the density matrix is quantified by the computational version of the bound~\eqref{1033_33},
\begin{eqnarray}\label{1033_333}
   \hat{L}_{\infty}(\boldsymbol{\rho}_{nh})
   \equiv
   \max_{y,y'\in[y_{\mathrm{min}},y_{\mathrm{max}}]}
   \mathbb{E}\!\left(
   |{\rho}(y,y')-\hat{\tilde{\rho}}_{\mathrm{max}}(y,y')|^2
   \right),
\end{eqnarray}
which provides an upper bound on the maximal reconstruction error over the sampled grid. In all numerical experiments we take $y,y'\in[-4,4]$.

\subsection*{4.1. Cat states}
\par We first performed a simulation study for the coherent cat state (CCS): 
\begin{eqnarray}\label{1642}
    \ket{\psi}_{\mathrm{CCS}}=N_c\sum\limits_{j=1}^{3} \ket{a_j},\quad  \ket{a_j} = \ket{a\, e^{\frac{2\pi i (j-1)}{3}}}, \quad j = 1, 2, 3,
\end{eqnarray}
where $\ket{a}$ are the Glauber coherent states  of the harmonic oscillator. The normalization constant is 
\begin{eqnarray}
|N_c|^{-2} = \sum_{j,k=1}^{3} \exp\Bigl(-|a|^2 + |a|^2 e^{2\pi i (k-j)/3}\Bigr).
\end{eqnarray}
The coordinate representation of the coherent state reads as
\begin{eqnarray}
    \psi_{a}(x)=\langle x|a\rangle=\pi^{-1/4}\exp{\left(-\frac{x^2}{2}-\frac{|a|^2}{2}+\sqrt{2}a x-\frac{a^2}{2}\right)},
\end{eqnarray}
and its tomogram is known to be equal to~\cite{manko1997quantum}:
\begin{eqnarray}\label{1300}
   \mathcal{W}(X|{a},\mu,\nu)&=&  \mathcal{W}_0(X|\beta) 
 e^{iXs}e^{d},\\\nonumber
s&\equiv &\frac{\sqrt{2}\left(
(\nu-i\mu)a^{\star} -(\nu+i\mu)a \right)}{\beta^2},\\\nonumber
d&=&-|a|^2+\frac{
(\nu+i\mu)^2a^2 +(\nu-i\mu)^2(a^{\star})^2}{2\beta^2},
\end{eqnarray}
where $\mathcal{W}_0(X|\beta)$ is given by \eqref{1530}, $\beta=\sqrt{\mu^2+\nu^2}$.
The corresponding CF is 
\begin{eqnarray}\label{1300_1}
  {\phi}(t;{a},\mu,\nu)&=&\phi_0(t+s;\beta)  e^{d}.
\end{eqnarray}
The coordinate representation of the cat state \eqref{1642} is
\begin{eqnarray}\label{848}
\rho(y,y')=
|N_c|^2 \sum\limits_{j,k=1}^3\psi_{ae^{2\pi i(j-1)/3}}(y)\psi^{\star}_{ae^{2\pi i(k-1)/3}}(y').
\end{eqnarray}
According to \eqref{1645_5} the tomogram of the cat states \eqref{1642} can be written as
\begin{eqnarray}\label{1328}
    \mathcal{W}_{\mathrm{CCS}}(X|a,\mu,\nu)
&=&\mathcal{W}_0(X|\beta)|N_c|^2\sum\limits_{j,k=1}^{3}e^{iXs(j,k)}e^{d(j,k)},\\\nonumber
s(j,k)&\equiv&\frac{\sqrt{2}\left(
(\nu-i\mu)a^{\star} e^{-\frac{2\pi i}{3}(k-1)}-(\nu+i\mu)a e^{\frac{2\pi i}{3}(j-1)}\right)}{\beta^2},\\\nonumber
d(j,k)&\equiv&-|a|^2+\frac{\left(
(\nu+i\mu)^2a^2 e^{\frac{4\pi i}{3}(j-1)}+(\nu-i\mu)^2(a^{\star})^2 e^{-\frac{4\pi i}{3}(k-1)}\right)}{2\beta^2}.
\end{eqnarray}
One can see that if $k,j=1$ then $s(1,1)=s$, $d(1,1)=d$ and the tomogram coincides with the single coherent state tomogram  \eqref{1300}. The PDF \eqref{1328} is a mixture of the Gaussian distributions~\cite{robertson1969descriptive}. Note that the sum of Gaussian PDFs in general is not a Gaussian PDF itself.  The CF of this cat state is
\begin{eqnarray}\label{1132}
    {\phi}_{\mathrm{CCS}}(t;{a},\mu,\nu)&=&|N_c|^2\sum\limits_{j,k=1}^{3}\phi_0(t+s(j,k);\beta)  e^{d(j,k)}.
\end{eqnarray}
The results of the KDE of the tomogram  are summarized in Fig.~\ref{fig:kde_comparison}, where we compare the estimates against the true tomogram for $n\in\{500,1000,2000\}$ sample points. The multimodal PDF is parameterized by the values:
 $\mu = 0.8$, $\nu = 1.2$,  $a = 1.0 + 0.5i$. 
\begin{figure}[h]
  \centering
  \begin{minipage}[b]{0.32\textwidth}
    \centering
    \includegraphics[width=\linewidth]{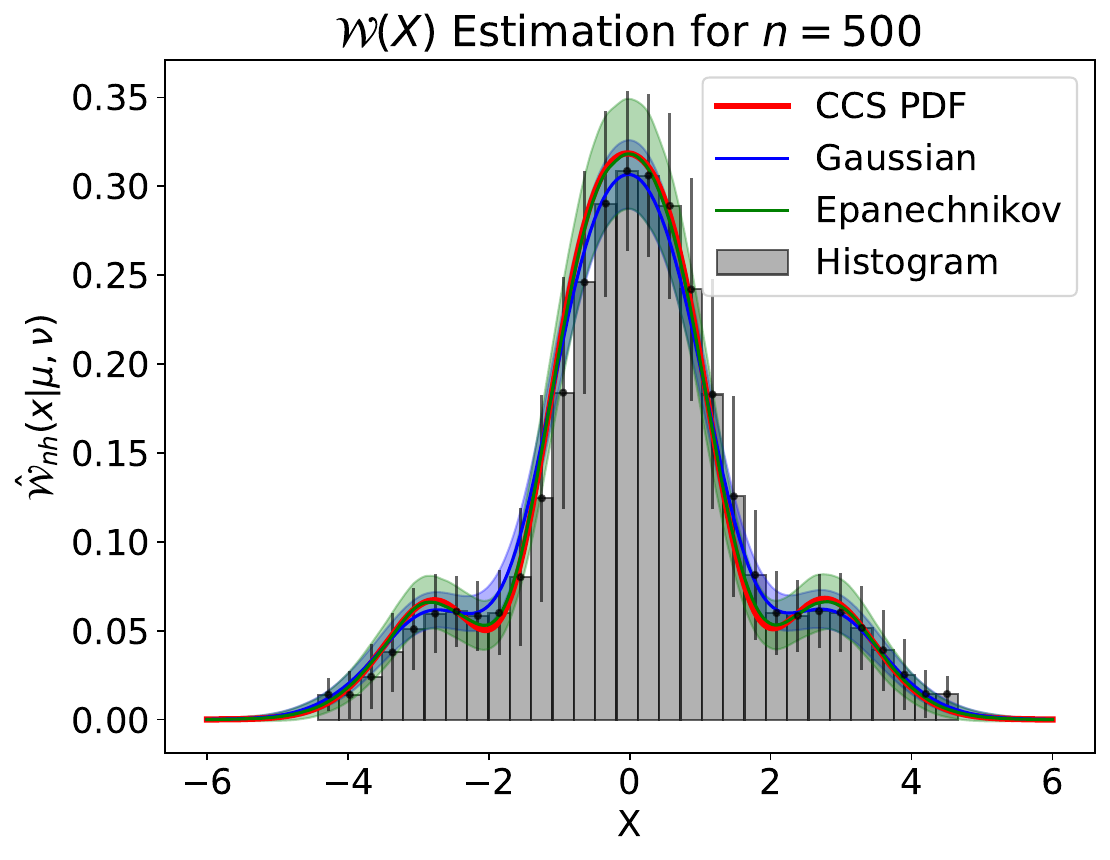}
  \end{minipage}\hfill
  \begin{minipage}[b]{0.32\textwidth}
    \centering
    \includegraphics[width=\linewidth]{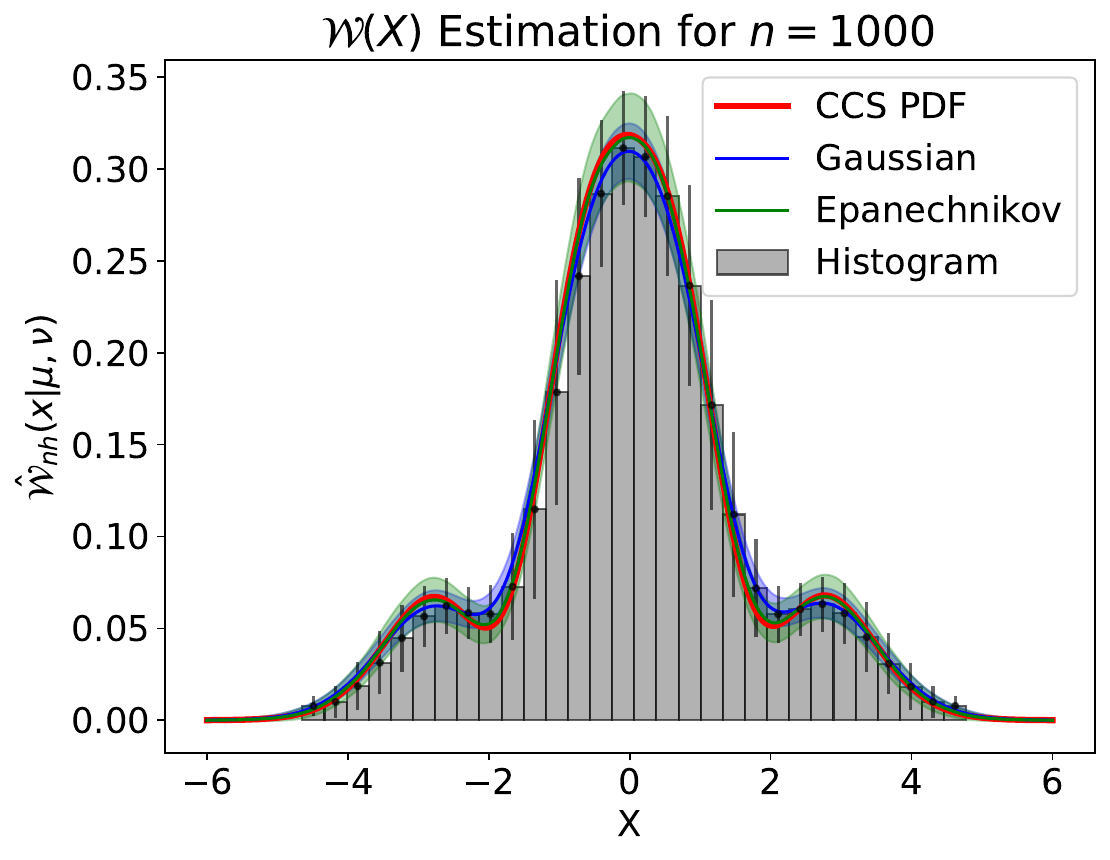}
  \end{minipage}\hfill
  \begin{minipage}[b]{0.32\textwidth}
    \centering
    \includegraphics[width=\linewidth]{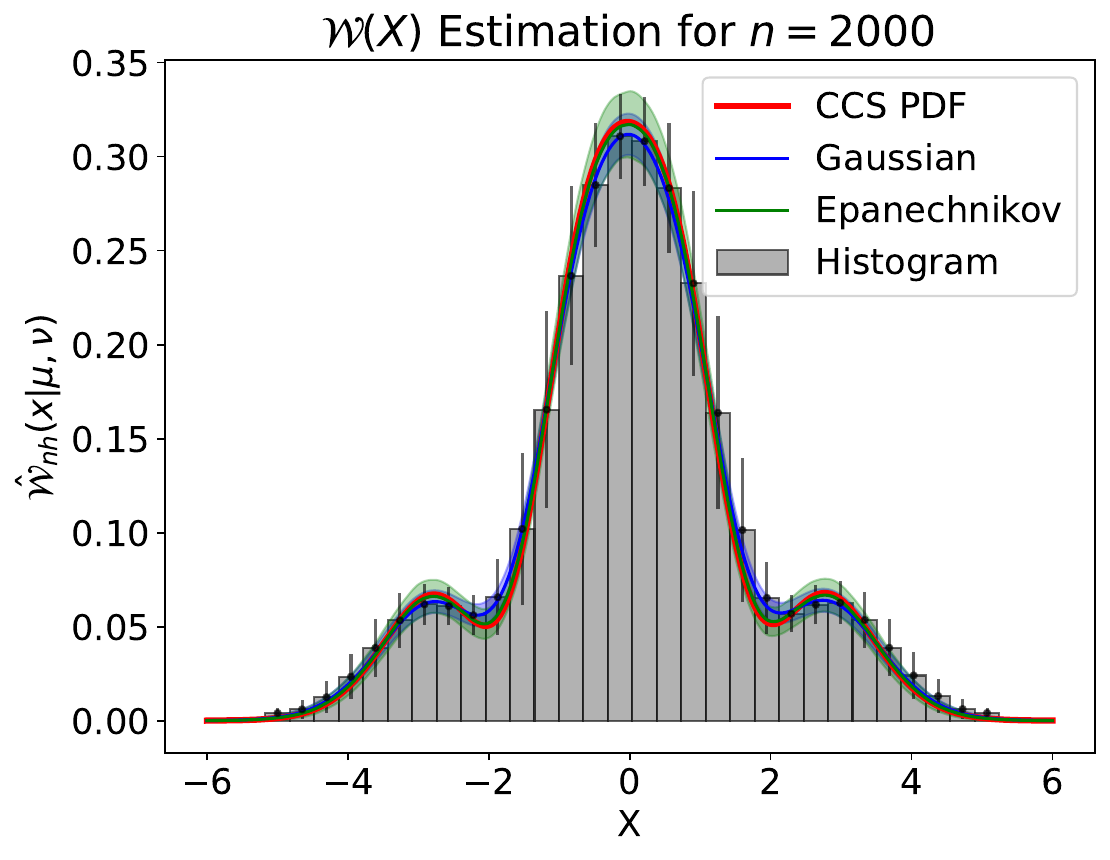}
  \end{minipage}
  \caption{Comparison of KDEs and histogram with the CCS tomogram (PDF) \eqref{1328} for different sample sizes $n=\{500,1000,2000\}$ for a fixed   $\mu = 0.8$, $\nu = 1.2$,  $a = 1.0 + 0.5i$.}
  \label{fig:kde_comparison}
\end{figure}
One can see that the KDE method is robust to the multimodal structure of the tomogram. 
In the average the PDF estimation with the Epanechnikov kernel performs
better than the one with the Gaussian kernel that can be seen in the plots and is predicted by classical KDE theory. However, Epanechnikov KDE has larger variance
(green error bars). That means that its MISE per repetition  is in general larger than for the Gaussian kernel (see the Table.~\ref{tab:kde_comparison}).
\begin{table}[h]
    \centering
    \begin{tabular}{|c|c|c|c|}
        \hline
        \textbf{n} & \textbf{KDE Type} & \textbf{Avg.\ Bandwidth} & \textbf{KDE}  \\
        \hline
        \multirow{2}{*}{500}  & Gaussian      & \(3.691\times10^{-1}\) & \(2.202\times10^{-3}\) \\[-1ex]
                              & Epanechnikov  & \(3.334\times10^{-1}\) & \(3.279\times10^{-3}\) \\
        \hline
        \multirow{2}{*}{1000} & Gaussian      & \(2.970\times10^{-1}\) & \(1.307\times10^{-3}\) \\[-1ex]
                              & Epanechnikov  & \(2.904\times10^{-1}\) & \(1.914\times10^{-3}\) \\
        \hline
        \multirow{2}{*}{2000} & Gaussian      & \(2.562\times10^{-1}\) & \(7.556\times10^{-4}\) \\[-1ex]
                              & Epanechnikov  & \(2.531\times10^{-1}\) & \(1.091\times10^{-3}\) \\
        \hline
    \end{tabular}
    \caption{Comparison of the optimal bandwidth and MISE of the KDE of the CCS tomogram \eqref{1328} for the different sample sizes $n=\{500,1000,2000\}$, averaged over $L=10^3$.}
    \label{tab:kde_comparison}
\end{table}
The results fully coincide with the theoretical prediction \eqref{1640_10}.
\par The KDE estimated CCS tomogram was compared with MLE based on parametric Gaussian mixture models (GMM) with two and three components ($GMM = 2$ and $GMM = 3$).
Fig.~\ref{fig:mle_comparison} shows the reconstructed CCS tomograms for a sample size $n=1000$. While the MLE based on a two-component Gaussian mixture fails to reproduce the true tomogram, increasing the model complexity to three components improves the fit but still leads to visible deviations. In contrast, the KDE-based reconstruction provides an accurate approximation of the true tomogram without requiring a specific parametric model.
\begin{figure}[h]
  \centering
    \includegraphics[width=0.5\linewidth]{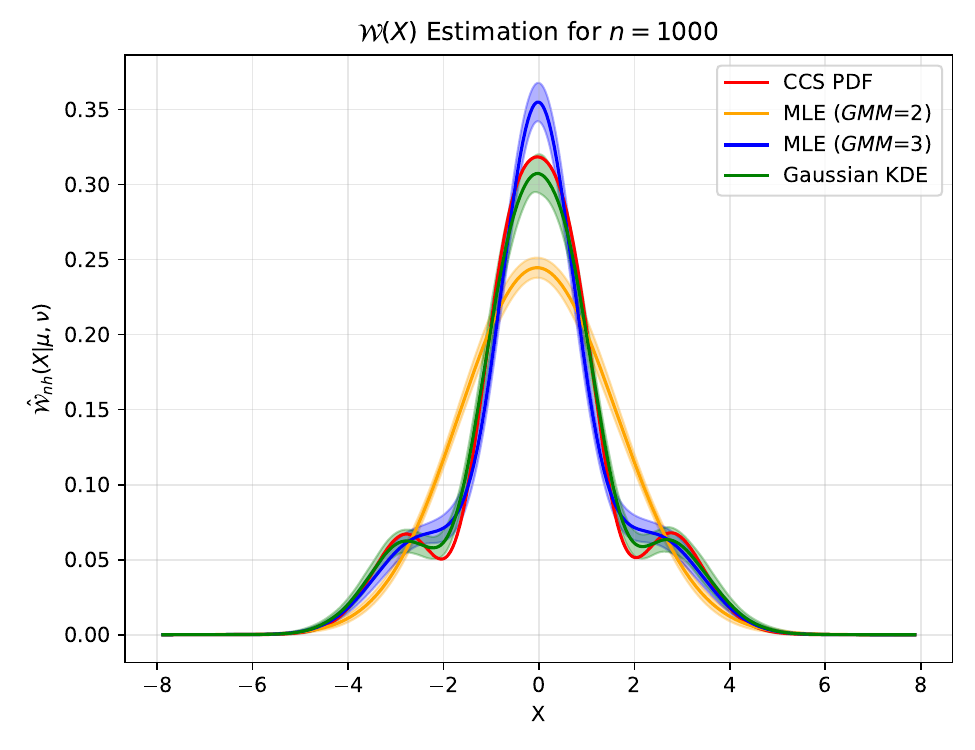}
  \caption{Comparison of CCSs tomogram reconstructions  obtained using MLE under Gaussian mixture model assumptions with two and three components (GMM = 2 and GMM = 3), and the corresponding KDE with the Gaussian kernel. The true tomographic probability density~\eqref{1328} is shown for reference. The reconstructions are based on a sample of size 
$n=1000$, with the parameters  $\mu = 0.8$, $\nu = 1.2$,  $a = 1.0 + 0.5i$. The results illustrate the sensitivity of MLE-based reconstructions to the choice of the parametric model.}
  \label{fig:mle_comparison}
\end{figure}
This qualitative observation is confirmed quantitatively in Table~\ref{tab:mle_comparison}, where the MISE of the tomogram estimate is reported for different sample sizes.
\begin{table}[h]
    \centering
    \begin{tabular}{|c|c|c|c|}
        \hline
        \textbf{n} & \textbf{MLE}, $GMM=2$ & \textbf{MLE}, $GMM=3$ & \textbf{Gaussian KDE} \\
        \hline
        500  & \(1.565\times10^{-2}\)       & \(3.323\times10^{-3}\) & \(2.202\times10^{-3}\) \\
        \hline
        1000 & \(1.588\times10^{-2}\)       & \(2.718\times10^{-3}\) & \(1.307\times10^{-3}\) \\
        \hline
        2000 & \(1.599\times10^{-2}\)       & \(1.975\times10^{-3}\) & \(7.556\times10^{-4}\) \\
        \hline
    \end{tabular}
    \caption{MISE of tomogram reconstructions for the CCS~\eqref{1328} obtained using MLE with Gaussian mixture models and Gaussian KDE, shown for different sample sizes $n=\{500,1000,2000\}$. The results are averaged over $L=10^3$ independent realizations. The tomographic parameters are fixed to $\mu=0.8$, $\nu=1.2$, $a=1+0.5i$.}
    \label{tab:mle_comparison}
\end{table}
For all considered values of 
$n$, the KDE-based estimator consistently outperforms both MLE-based reconstructions. Notably, the performance of the MLE is highly sensitive to the assumed number of mixture components, whereas the KDE approach remains robust and systematically yields lower reconstruction errors since its fully data-driven.
\par In Fig.~\ref{fig:kde2_comparison} the kernel CF estimate is presented for $n\in\{500,1000,2000\}$ sample points. We take $t=1$, fix $\nu=1.2$, $a = 1 + 0.5i$ and vary $\mu\in[0, 8]$ for $N_{\mu}=160$ points. Since the CF has real and imaginary parts, we plot them separately. Histogram method cannot directly estimate the CF, while the KCFE performs with the great accuracy.  In this simulation we only use the Gaussian kernel with the optimal bandwidth \eqref{1640_10}. Since this bandwidth is dependent from the unknown CF, first we apply the Silvermans's rule of thumb method, similarly to the procedure for the PDF estimation.
\begin{figure}[h]
  \centering
  \begin{minipage}[b]{0.32\textwidth}
    \centering
    \includegraphics[width=\linewidth]{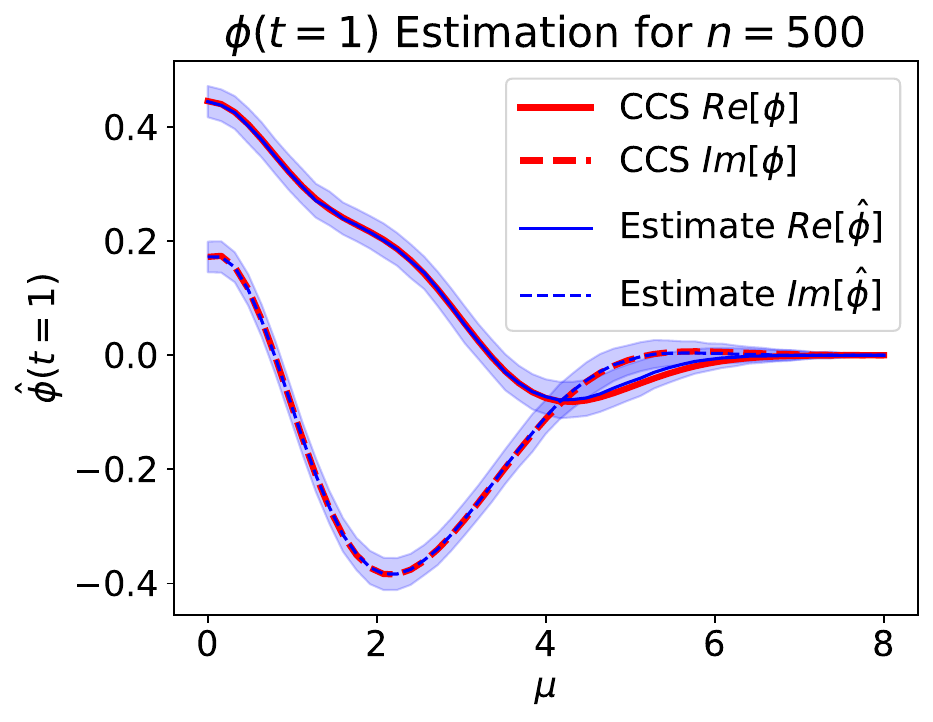}
  \end{minipage}\hfill
  \begin{minipage}[b]{0.32\textwidth}
    \centering
    \includegraphics[width=\linewidth]{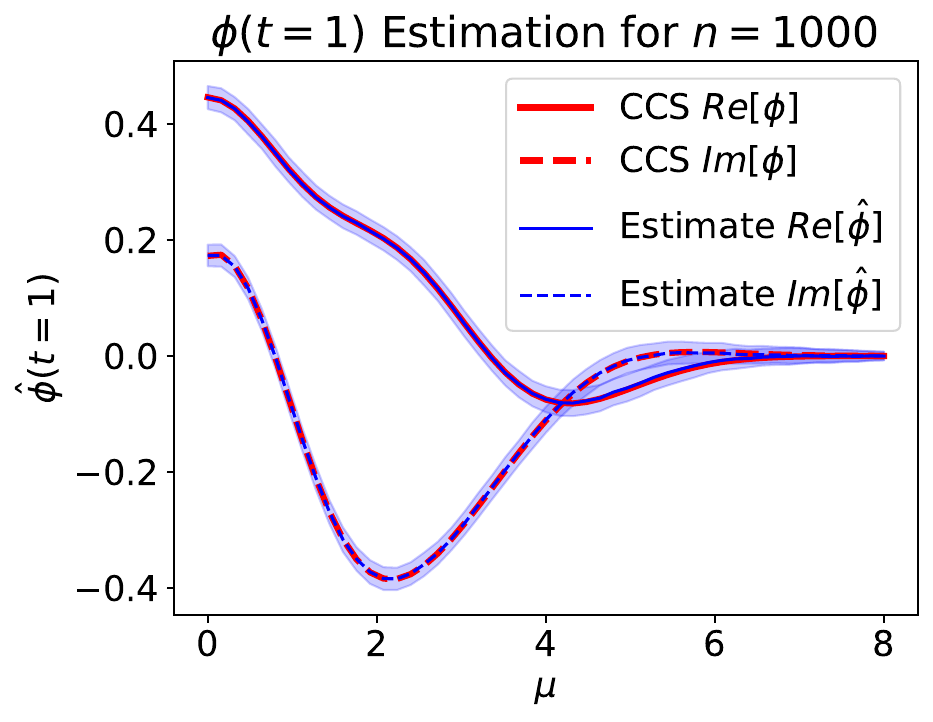}
  \end{minipage}\hfill
  \begin{minipage}[b]{0.32\textwidth}
    \centering
    \includegraphics[width=\linewidth]{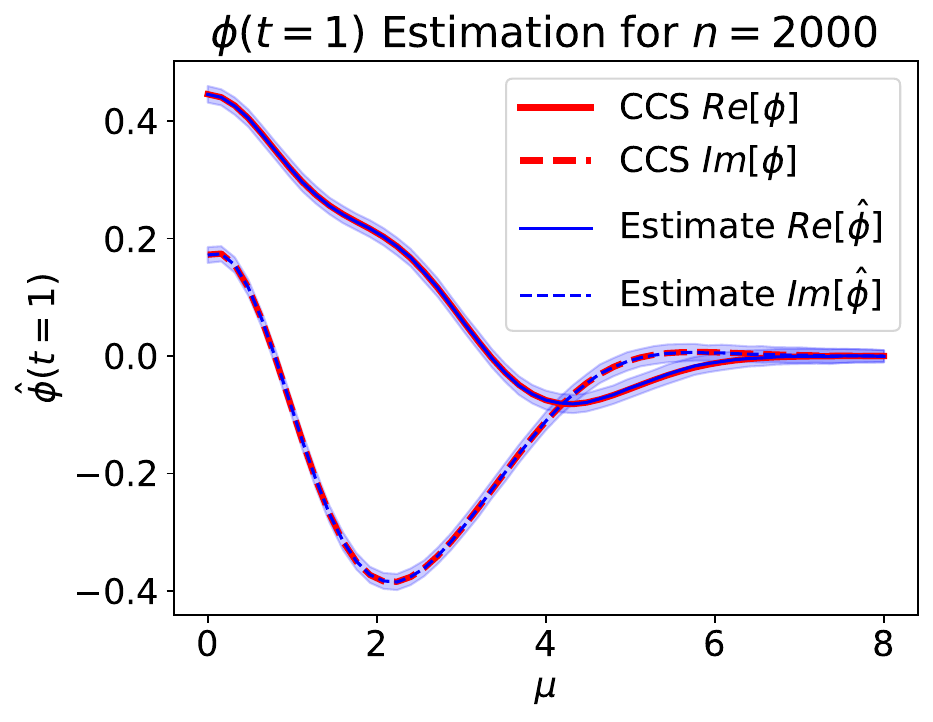}
  \end{minipage}
  \caption{Comparison of kernel estimate  with the CCS CF  \eqref{1132} for different sample sizes $n=\{500,1000,2000\}$ for the fixed $\nu=1.2$ from $\mu\in[0,8]$, $N_{\mu}=160$.}
  \label{fig:kde2_comparison}
\end{figure}
Table.~\ref{tab:kde2_comparison} presents the optimal bandwidth and  MSE for the KCFE averaged  for $L=10^3$ repetitions.
\begin{table}[h]
    \centering
    \begin{tabular}{|c|c|c|c|}
        \hline
        \textbf{n} & \textbf{Data type} & \textbf{Avg.\ Bandwidth} & \textbf{Gaussian KCFE} \\
        \hline
        \multirow{3}{*}{500}  & Noiseless  & \(9.387\times10^{-2}\) & \(1.756\times10^{-3}\) \\[-1ex]
                              & Noisy      & \(7.123\times10^{-2}\) & \(7.946\times10^{-3}\) \\[-1ex]
                              & Corrected  & \(8.157\times10^{-2}\) & \(1.898\times10^{-3}\) \\
        \hline
        \multirow{3}{*}{1000} & Noiseless  & \(6.400\times10^{-2}\) & \(9.036\times10^{-4}\) \\[-1ex]
                              & Noisy      & \(4.896\times10^{-2}\) & \(7.267\times10^{-3}\) \\[-1ex]
                              & Corrected  & \(5.534\times10^{-2}\) & \(9.070\times10^{-4}\) \\
        \hline
        \multirow{3}{*}{2000} & Noiseless  & \(4.408\times10^{-2}\) & \(4.247\times10^{-4}\) \\[-1ex]
                              & Noisy      & \(3.386\times10^{-2}\) & \(7.028\times10^{-3}\) \\[-1ex]
                              & Corrected  & \(3.822\times10^{-2}\) & \(4.505\times10^{-4}\) \\
        \hline
    \end{tabular}
    \caption{The optimal bandwidth and the MSE of the KCFE with the Gaussian kernel of the CCS states CF for the sample sizes $n=\{500,1000,2000\}$, averaged over $L=10^3$ repetitions. We compare two data types: ideal data and the linear mixture \eqref{1033} with the Gaussian noise, $\kappa=0.85$. For the noisy data we compare the CF reconstruction  without and with the noise correction procedure \eqref{CF_fil}.}
    \label{tab:kde2_comparison}
\end{table}
In Fig.~\ref{fig:mse} we present the MSE for the kernel tomogram estimator and the KCFE based on the same data sets averaged over $L=10^3$ estimations. The MSE for the CF estimator converges in accordance to the theoretical prediction \eqref{1224}. 
\begin{figure}[h]
  \centering
   \includegraphics[width=0.5\linewidth]{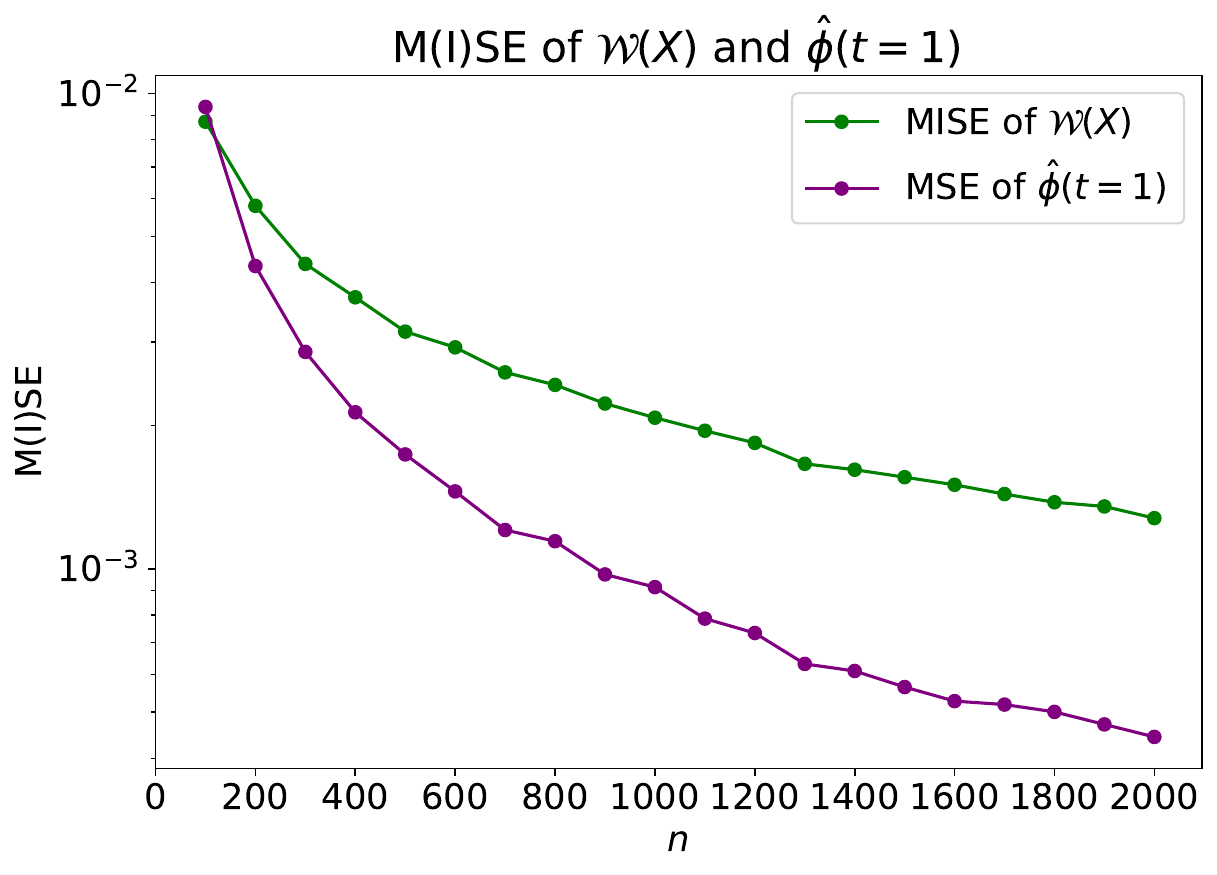}
   \caption{MISE the kernel tomogram estimator and the MSE of the KCFE based on the same data sets averaged over $L=10^3$ estimations against $n$ sample points.}
  \label{fig:mse}
\end{figure}
\par Next we consider the case when the observable data is the mixture \eqref{1033}. The observable of interest $X$ is  distributed with $ \mathcal{W}(X|a,\mu,\nu)$, the noise component $Y$ is a normally distributed random variable with $N(0,1)$, and we fix $\kappa=0.85$. The  KCFE is performed according to \eqref{CF_fil} and is shown in Fig.~\ref{fig:kde3_comparison}. One can see that the correction procedure significantly increases the performance of the estimation. 
\begin{figure}[h]
  \centering
  \begin{minipage}[b]{0.32\textwidth}
    \centering
    \includegraphics[width=\linewidth]{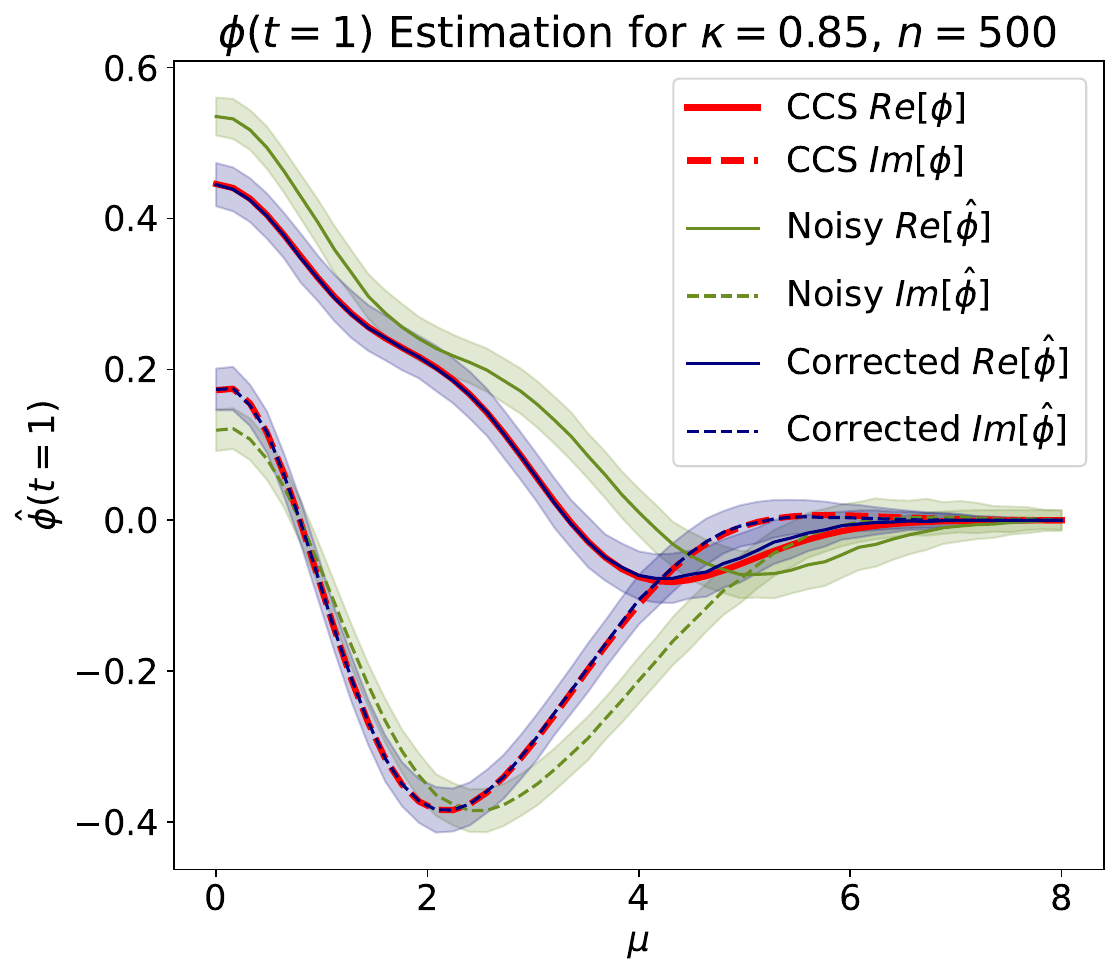}
  \end{minipage}\hfill
  \begin{minipage}[b]{0.32\textwidth}
    \centering
    \includegraphics[width=\linewidth]{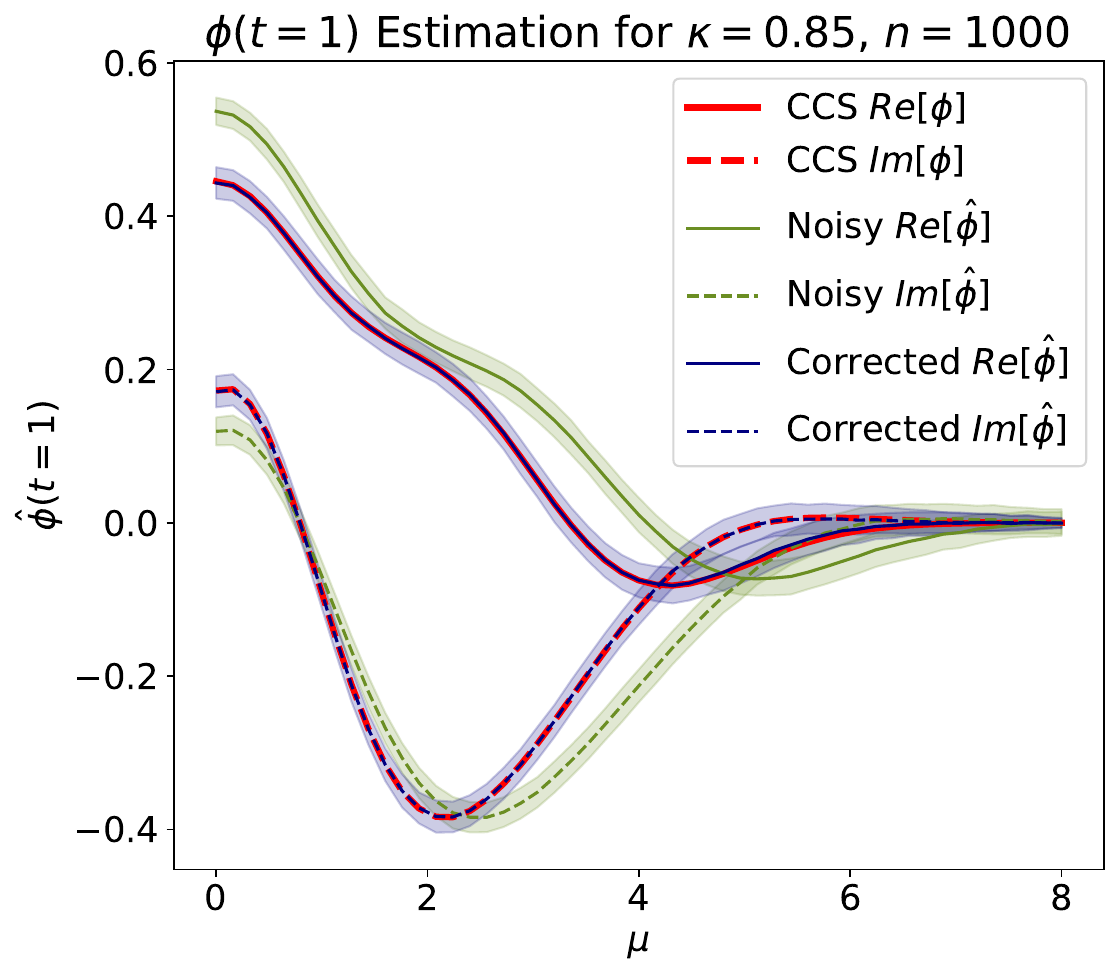}
  \end{minipage}\hfill
  \begin{minipage}[b]{0.32\textwidth}
    \centering
    \includegraphics[width=\linewidth]{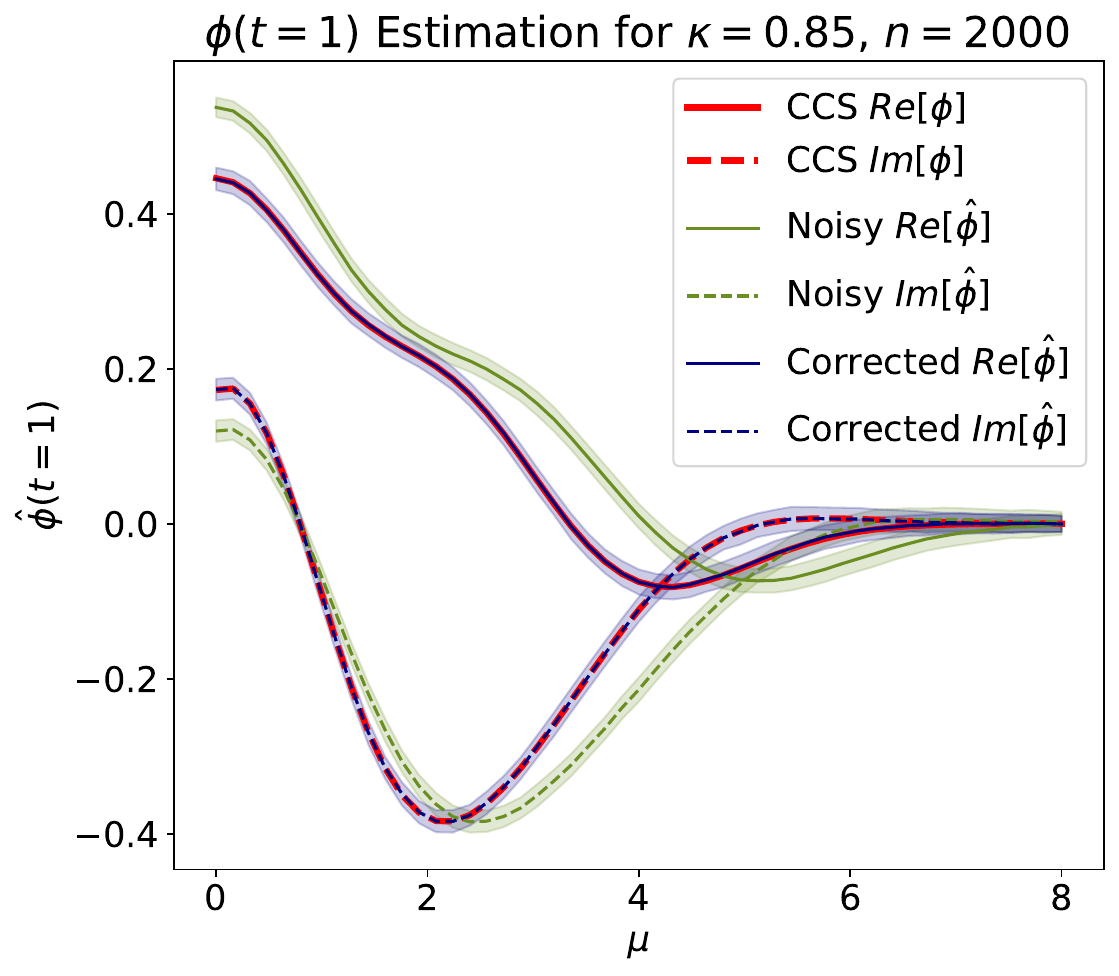}
  \end{minipage}
  \caption{Comparison of the theoretical CCS CF with the KCFE with the Gaussian kernel obtained from a noisy data without/with correction for the different sample sizes $n=\{500,1000,2000\}$, $\kappa = 0.85$.}
  \label{fig:kde3_comparison}
\end{figure}
The MSE of the KCFE using the noisy data is provided in Table.~\ref{tab:kde2_comparison}. We compare the estimation without and with correction  \eqref{CF_fil}. One can see that our correction build in the  kernel CF estimation  provides the results nearly identical to the noiseless case.
\par We also reconstruct the CF of the CCS by taking the Fourier transform of the MLE tomogram. The resulting CF estimates are shown in Fig.~\ref{fig:mle2_comparison}.
\begin{figure}[h]
  \centering
    \includegraphics[width=0.5\linewidth]{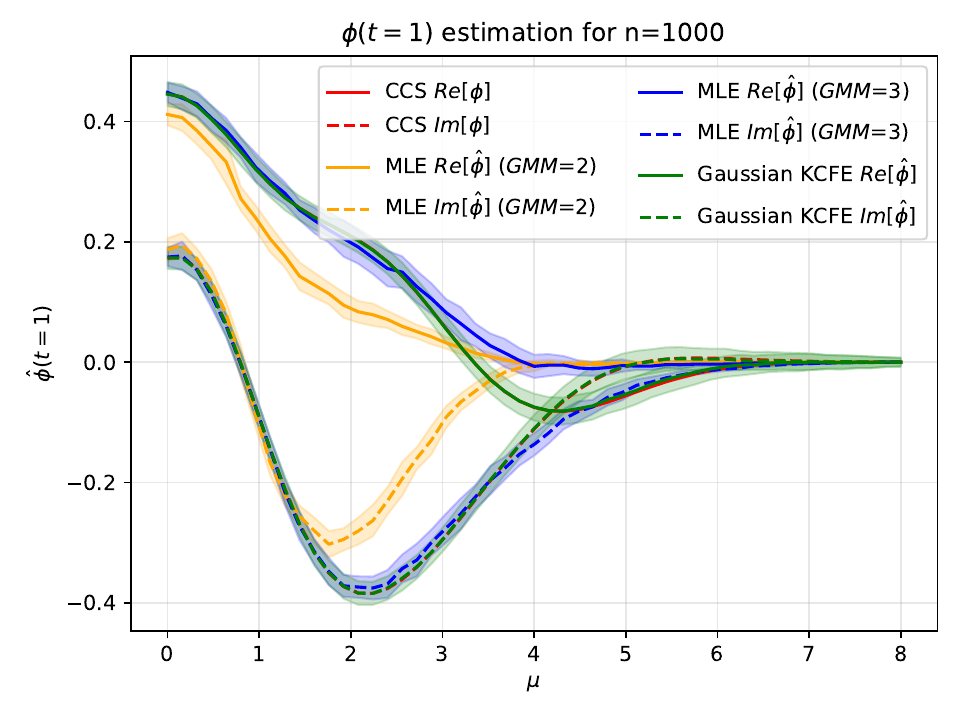}
  \caption{Comparison of CCS CF estimator obtained using MLE with Gaussian mixture models and KCFE for a sample size $n=1000$ and fixed $\nu=1.2$.}
  \label{fig:mle2_comparison}
\end{figure}The MLE reconstruction with $GMM = 2$ fails to reproduce the true CF, while increasing the model complexity to $GMM = 3$ leads to an improved but still inaccurate reconstruction in certain regions of the parameter $\mu$. In contrast, the KCFE provides an accurate reconstruction of the CF across the entire parameter range.
\par This  is confirmed in Table~\ref{tab:kde2_comparison}, where the MSE of the CF estimates is reported for different sample sizes. For all considered values of $n$, the KCFE consistently outperforms the MLE-based estimators with both $GMM = 2$ and $GMM = 3$.
Although both MLE and KCFE exhibit the same theoretical convergence rate under ideal assumptions, the performance of the MLE critically depends on the correctness of the underlying parametric model. In the present case, the Gaussian mixture model does not provide sufficient structural information for the MLE to achieve optimal performance, whereas the KCFE remains robust and accurate without requiring explicit model assumptions.
\begin{table}[h]
    \centering
    \begin{tabular}{|c|c|c|c|c|}
        \hline
        \textbf{n} & \textbf{Data type} & \textbf{MLE}, $GMM=2$ & \textbf{MLE}, $GMM=3$ &  \textbf{Gaussian  KCFE} \\
        \hline
        \multirow{3}{*}{500}  & Noiseless  & \(8.823\times10^{-3}\) & \(1.945\times10^{-3}\) & \(1.756\times10^{-3}\) \\[-1ex]
                              & Noisy      & \(3.493\times10^{-3}\) & \(9.729\times10^{-3}\) & \(7.946\times10^{-3}\) \\[-1ex]
                              & Corrected  & \(8.992\times10^{-3}\) & \(2.005\times10^{-3}\) & \(1.898\times10^{-3}\) \\
        \hline
        \multirow{3}{*}{1000} & Noiseless  & \(7.442\times10^{-3}\) & \(9.826\times10^{-4}\) & \(9.036\times10^{-4}\) \\[-1ex]
                              & Noisy      & \(1.841\times10^{-3}\) & \(7.844\times10^{-3}\) & \(7.267\times10^{-3}\) \\[-1ex]
                              & Corrected  & \(7.608\times10^{-3}\) & \(9.992\times10^{-4}\) & \(9.070\times10^{-4}\) \\
        \hline
        \multirow{3}{*}{2000} & Noiseless  & \(7.083\times10^{-3}\) & \(5.743\times10^{-4}\) & \(4.247\times10^{-4}\) \\[-1ex]
                              & Noisy      & \(1.360\times10^{-3}\) & \(7.217\times10^{-3}\) & \(7.028\times10^{-3}\) \\[-1ex]
                              & Corrected  & \(7.125\times10^{-3}\) & \(5.856\times10^{-4}\) & \(4.505\times10^{-4}\) \\
        \hline
    \end{tabular}
    \caption{MSE of CF reconstructions for the CCS obtained using MLE with Gaussian mixture models and KCFE with the Gaussian kernel, shown for different sample sizes $n=\{500,1000,2000\}$. The results are averaged over $L=10^3$ independent realizations. The parameters are fixed to $\mu=0.8$, $\nu=1.2$, $a=1+0.5i$.  Two types of data are considered: ideal data and noisy data modeled as a linear mixture with additive Gaussian noise, Eq.~\eqref{1033}, with $\kappa=0.85$. For the noisy case, CF reconstruction results are compared with and without the noise-correction procedure~\eqref{CF_fil}.}
    \label{tab:kde2_comparison}
\end{table}

\par 
 Using the set of $N_{\mu}$ KCFEs  estimates $\{\hat{\phi}_X(1;\mu_i,y-y'\}$, $i\in[1,N_{\mu}]$  we do KQSE procedure to estimate the kernel  $\rho(y, y')$, where $N_{\mu} $ different values of $\mu$ are uniformly distributed in the interval $[-\mu_{\max}, \mu_{\max}]$, for each $(y, y')$ pair. Then we evaluate the maximum by $(y,y')$ taken in the interval $(y,y')\in[-4,4] $ from the distances \eqref{1033_333} between the true kernel $\rho(y,y')$ and its estimate $\hat{\tilde{\rho}}_{\mathrm{max}}(y,y')$. Table~\ref{tab:combined} reports the latter supremum norm error for the reconstructed quantum state under noiseless, noisy, and noise-corrected conditions, while varying the sample size $n$, maximal parameter $\mu_{\max}$, or number of measurement settings $N_{\mu}$. When one parameter is varied, the others are fixed to $n=500$, $\mu_{\max}=8$, and $N_{\mu}=160$. Among the parameters varied, the sample size $n$  and number of measurement settings $N_{\mu}$ exhibit the strongest influence on the reconstruction accuracy, yielding over threefold reduction in the supremum norm error, while the effect of increasing $\mu_{\max}$ is slightly weaker but still significant. For $\mu_{\mathrm{max}}=4$ the oscillation behavior of the CF is still present and the integral truncation error is bigger then the one with $\mu_{\mathrm{max}}=\{6,8\}$. Since the oscillation behavior of CF is already small enough at $\mu_{\mathrm{max}}=6$, the estimation errors for $\mu_{\mathrm{max}}=\{6,8\}$ are of the similar rate. In all cases, the noise-corrected values closely match the noiseless benchmarks, demonstrating the efficacy of the correction procedure.
\begin{figure}[htbp]
    \centering
    \begin{minipage}{0.49\textwidth}
        \includegraphics[width=\linewidth]{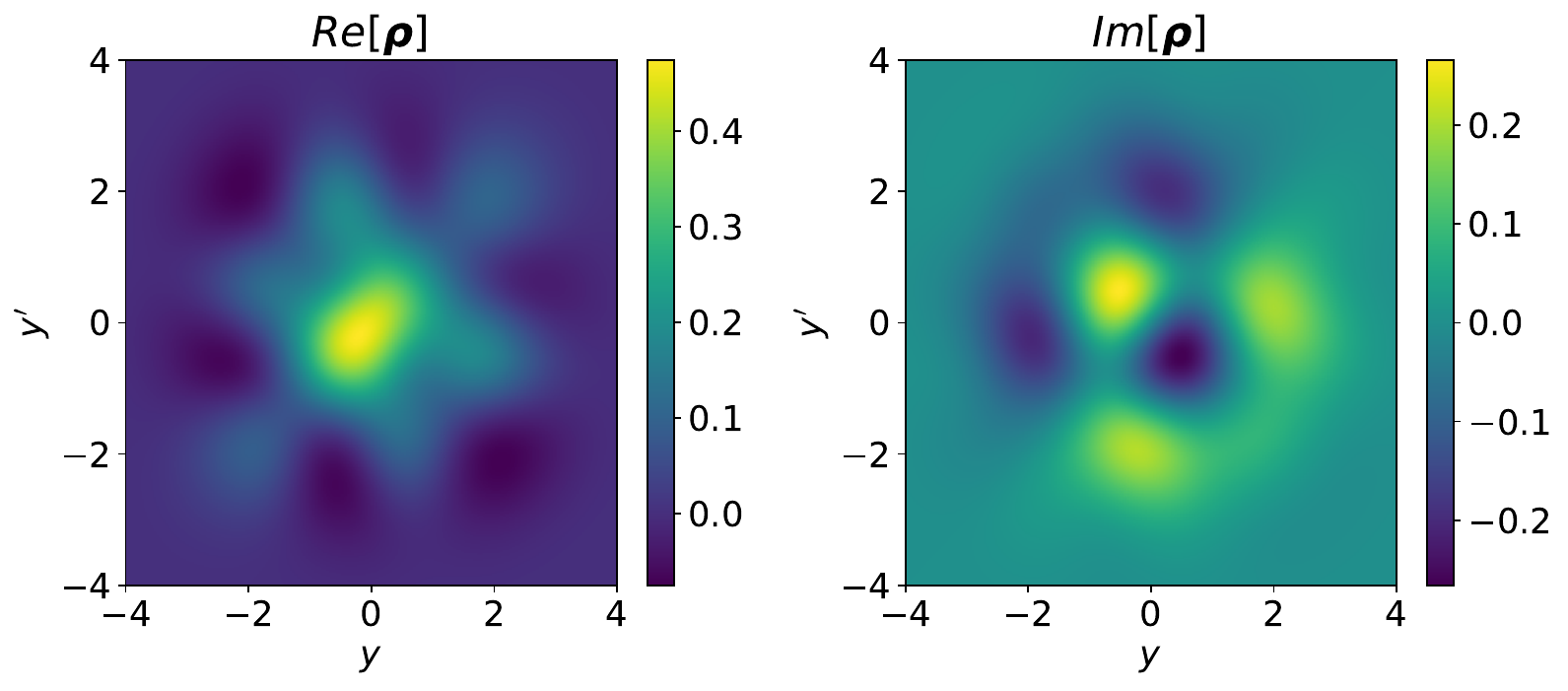}
        \caption*{(a)}
    \end{minipage}%
    \begin{minipage}{0.49\textwidth}
        \includegraphics[width=\linewidth]{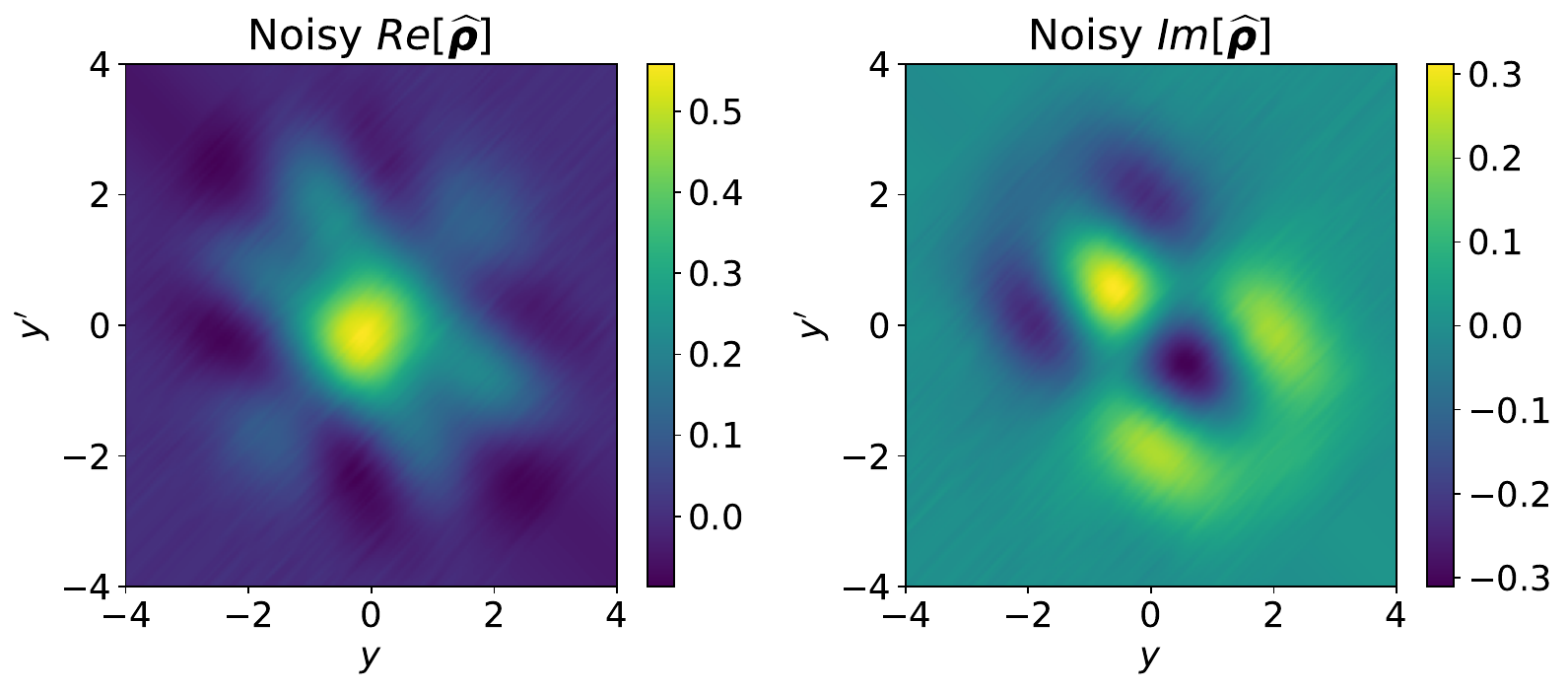}
        \caption*{(b)}
    \end{minipage}

    \vspace{0.3em}

    \begin{minipage}{0.49\textwidth}
        \includegraphics[width=\linewidth]{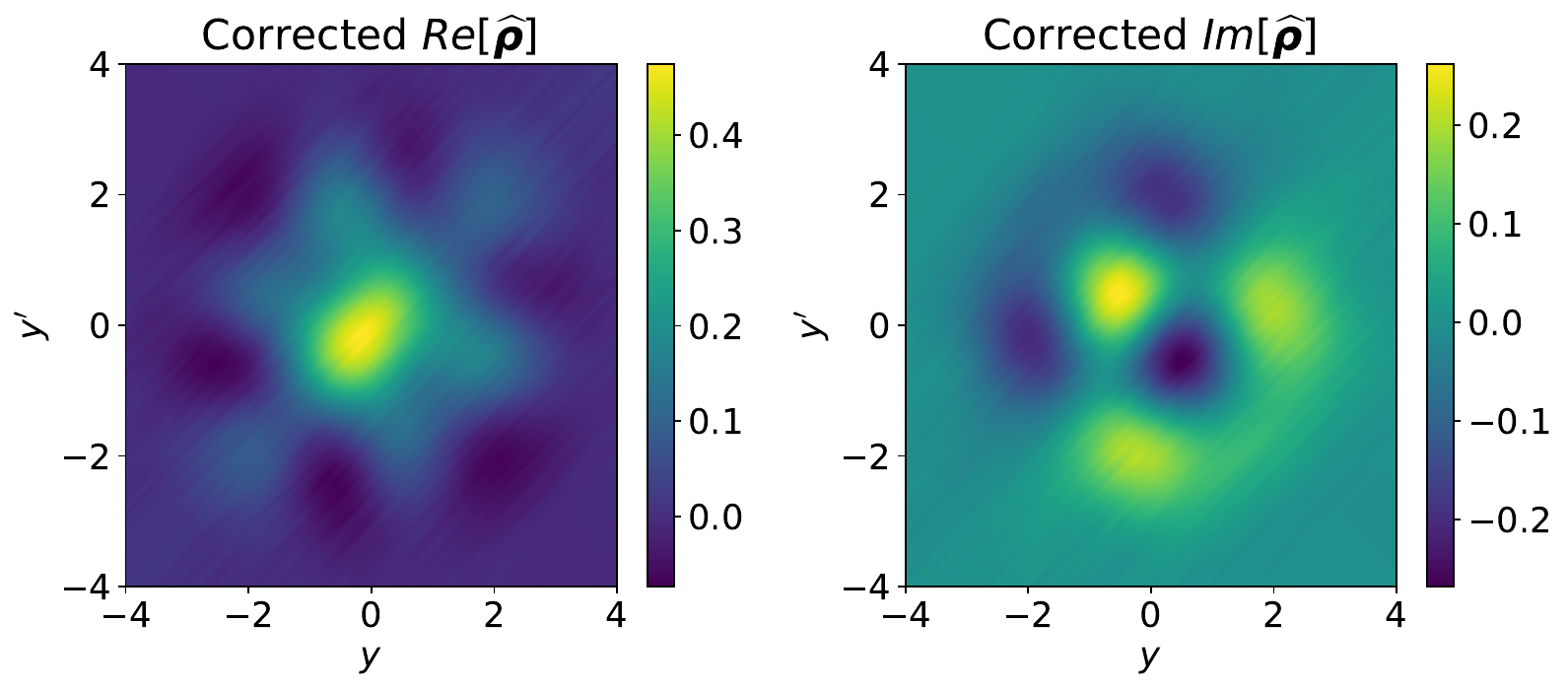}
        \caption*{(c)}
    \end{minipage}%
    \begin{minipage}{0.49\textwidth}
        \includegraphics[width=\linewidth]{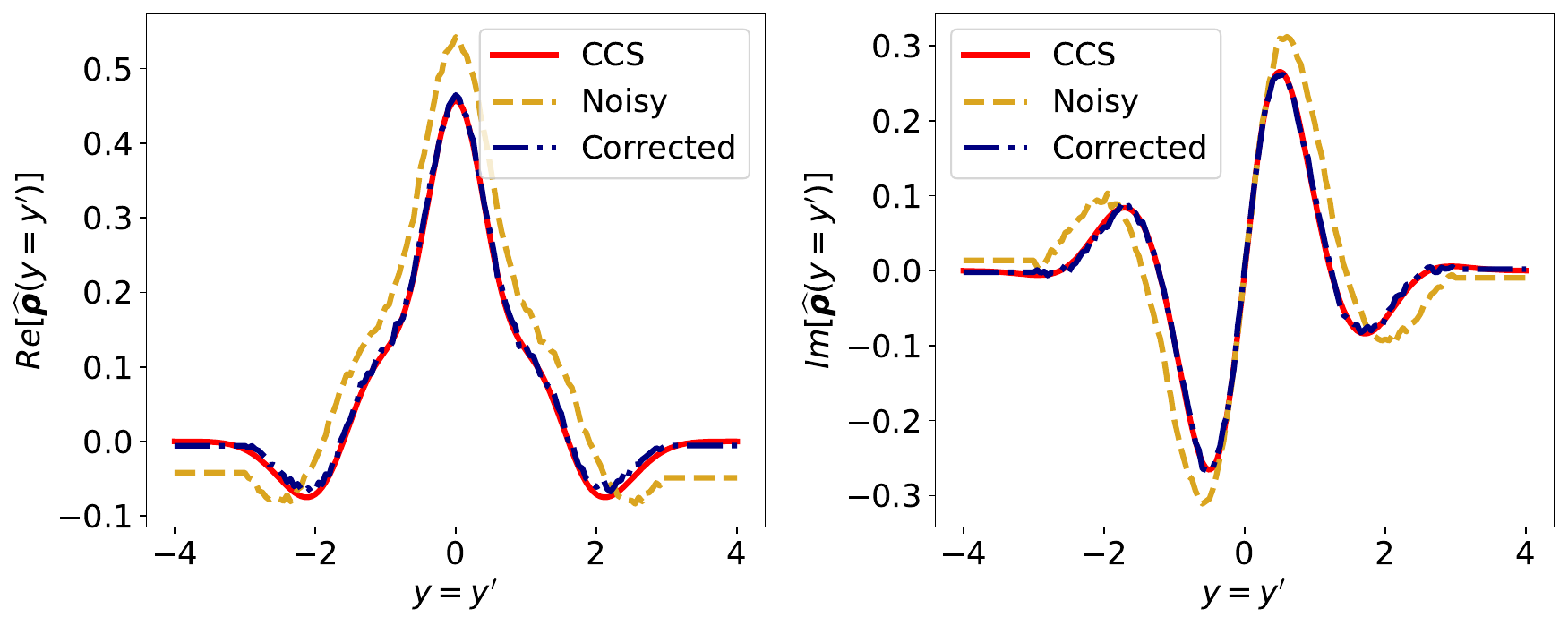}
        \caption*{(d)}
    \end{minipage}

\vspace{0.3em}
\begin{minipage}{0.49\textwidth}
        \includegraphics[width=\linewidth]{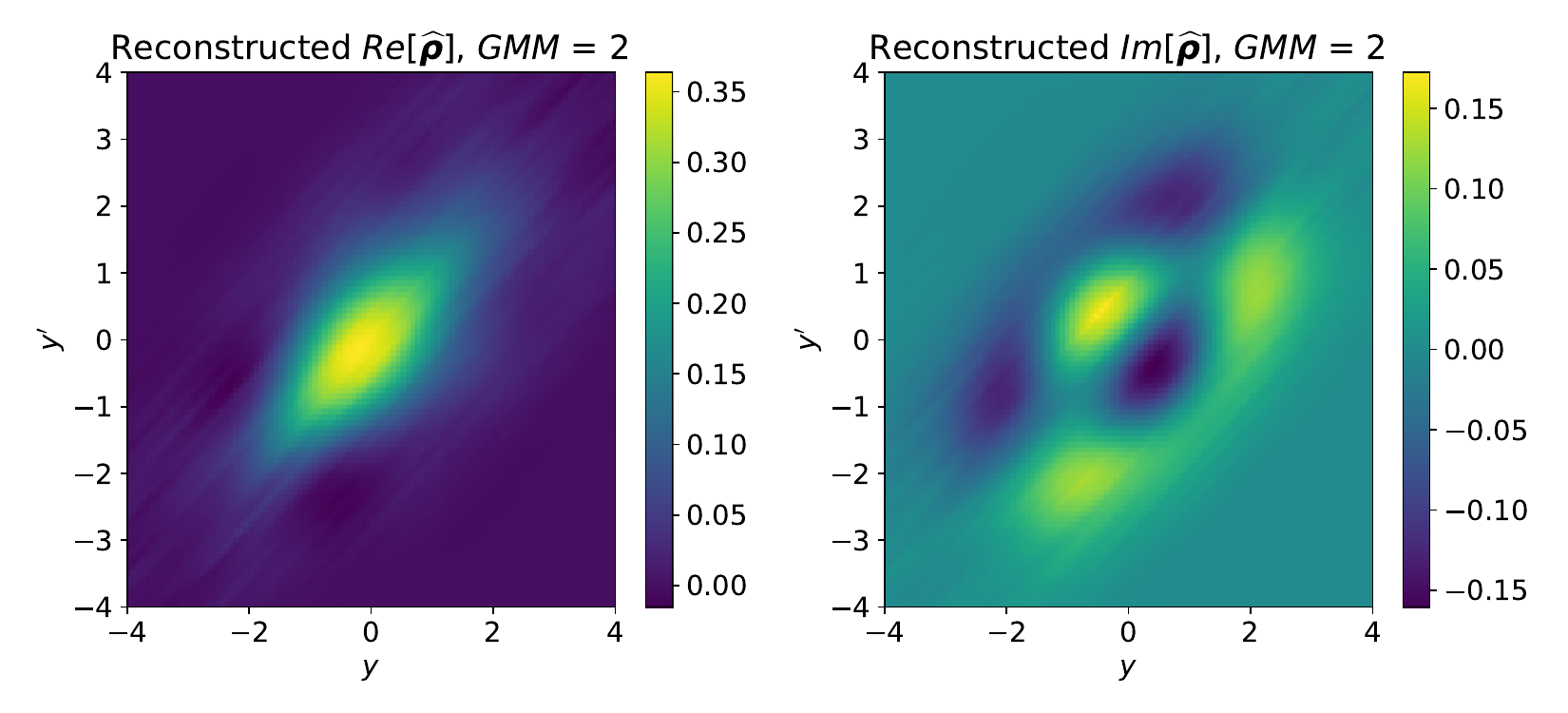}
        \caption*{(e)}
    \end{minipage}%
    \begin{minipage}{0.49\textwidth}
        \includegraphics[width=\linewidth]{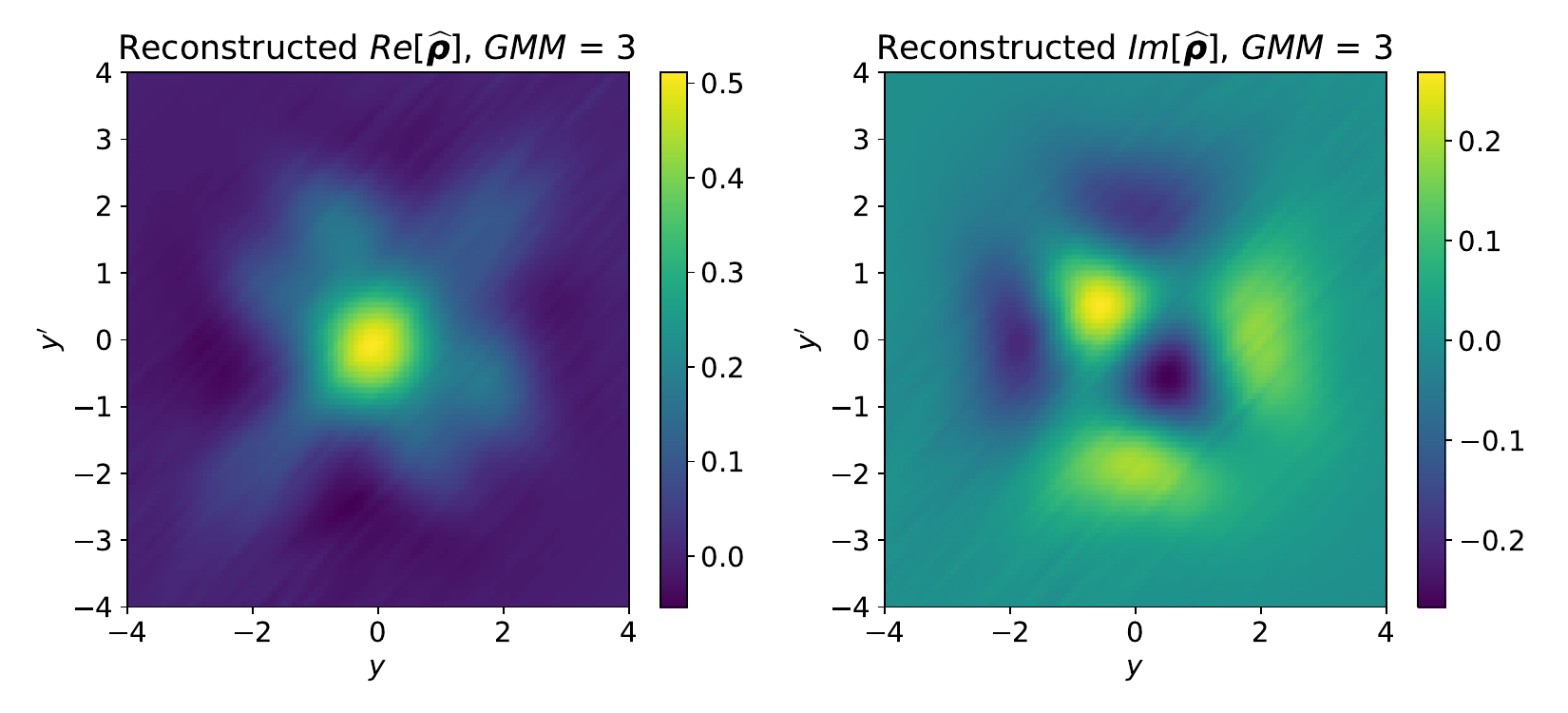}
        \caption*{(g)}
    \end{minipage}

 \caption{
Heatmaps of the CCS density matrix kernel: (a) theoretical kernel; 
(b) KQSE  from noisy data without filtering; 
(c) KQSE  from noisy data with filtering. 
Panel (d) shows a comparison of the diagonal elements of the theoretical kernel and the KQSE.
Panels (e) and (g) display CCS density matrix kernels estimated using MLE under Gaussian mixture model assumptions with $GMM=2$ and $GMM=3$, respectively.
The parameters are $n=500$, $\mu_{\mathrm{max}}=8$, $N_{\mu}=160$, and $a=1+0.5i$. 
For noisy data, $\kappa=0.85$.
}
\label{fig:rho-comparison}
\end{figure}
Fig.~\ref{fig:rho-comparison} illustrates the performance of the KQSE of \eqref{848} under different conditions. Panel (a) shows the theoretical kernel \eqref{848} for a complex amplitude $a = 1 + 0.5i$. Panel (b) demonstrates the degradation caused by noise when no filtering is applied. In contrast, panel (c) shows a visibly improved reconstruction obtained using our filtering procedure. Panel (d) provides a quantitative comparison of the diagonal elements, highlighting the accuracy gain achieved through filtering.
Panels (e) and (d) provide the MLE of the CCS kernel for $GMM=2$ and $GMM=3$. 
The reconstruction was performed with $n = 500$ samples, $\mu_{\mathrm{max}} = 8$, $N_{\mu} = 160$, and noise parameter $\kappa = 0.85$.
\begin{table}[h]
    \centering
    \begin{tabular}{|c|c|c|c|c|}
        \hline
        \textbf{n} & \textbf{Data type} & \textbf{MLE}, $GMM=2$ & \textbf{MLE}, $GMM=3$ &  \textbf{KQSE} \\
        \hline
        \multirow{3}{*}{500}  & Noiseless  & \(2.050\times10^{-2}\) & \(8.159\times10^{-3}\) & \(1.667\times10^{-4}\) \\[-1ex]
                              & Noisy      & \(1.375\times10^{-2}\) & \(2.674\times10^{-2}\) & \(1.279\times10^{-2}\) \\[-1ex]
                              & Corrected  & \(2.165\times10^{-2}\) & \(8.215\times10^{-3}\) & \(1.747\times10^{-4}\) \\
        \hline
        \multirow{3}{*}{1000} & Noiseless  & \(2.155\times10^{-2}\) & \(7.812\times10^{-3}\) & \(5.572\times10^{-5}\) \\[-1ex]
                              & Noisy      & \(1.384\times10^{-2}\) & \(2.555\times10^{-2}\) & \(1.246\times10^{-2}\) \\[-1ex]
                              & Corrected  & \(2.172\times10^{-2}\) & \(7.928\times10^{-3}\) & \(5.592\times10^{-5}\) \\
        \hline
        \multirow{3}{*}{2000} & Noiseless  & \(2.151\times10^{-2}\) & \(6.973\times10^{-3}\) & \(2.457\times10^{-5}\) \\[-1ex]
                              & Noisy      & \(1.361\times10^{-2}\) & \(2.508\times10^{-2}\) & \(1.222\times10^{-2}\) \\[-1ex]
                              & Corrected  & \(2.207\times10^{-2}\) & \(7.016\times10^{-3}\) & \(2.690\times10^{-5}\) \\
        \hline
    \end{tabular}
   \caption{
MSE of the CCS density matrix kernel estimated using KQSE with a Gaussian kernel and MLE based methods for sample sizes $n=\{500,1000,2000\}$. The parameters are $n=500$, $\mu_{\mathrm{max}}=8$, $N_{\mu}=160$, and $a=1+0.5i$. 
For noisy data, $\kappa=0.85$.
}
\label{tab:mle_rho_comparison}
\end{table}
The comparison of the numerical error~\eqref{1033_333} for the estimation of the CCS density matrix kernel obtained using KQSE with a Gaussian kernel and MLE based reconstructions is summarized in Table~\ref{tab:mle_rho_comparison}.
\par Using the similar steps, we perform KQSE procedure to estimate the purity of the CCS. Using it the squared trace distance is calculated. Table~\ref{tab:combined} presents the MSE of the KQSEs $\widehat{\mathrm{tr}(\boldsymbol{\rho}\boldsymbol{\rho}_{nh})}$ and $\widehat{D^2(\boldsymbol{\rho},\boldsymbol{\rho}_{nh})}$ under variation of key reconstruction parameters.
\begin{table}[h]
    \centering
    \begin{tabular}{|c|c|c|c|c|c|}
        \hline
        \textbf{Parameter varied} & \textbf{Value} & \textbf{Data type} & $\hat{L}_{\infty}(\boldsymbol{\rho}_{nh})$ & $\mathrm{MSE}(\widehat{\mathrm{tr}(\boldsymbol{\rho}\boldsymbol{\rho}_{nh})})$ & $\widehat{D^2(\boldsymbol{\rho},\boldsymbol{\rho}_{nh})}$ \\
        \hline
        \multirow{9}{*}{$n$} 
            &      & Noiseless & \(1.667\times10^{-4}\) & \(5.446\times10^{-5}\) & \(7.352\times10^{-3}\) \\
            & 500  & Noisy     & \(1.279\times10^{-2}\) & \(1.195\times10^{-2}\) & \(1.093\times10^{-1}\) \\
            &      & Corrected & \(1.747\times10^{-4}\) & \(6.099\times10^{-5}\) & \(7.744\times10^{-3}\) \\
            \cline{2-6}
            &      & Noiseless & \(5.572\times10^{-5}\) & \(9.253\times10^{-6}\) & \(2.889\times10^{-3}\) \\
            & 1000 & Noisy     & \(1.246\times10^{-2}\) & \(1.181\times10^{-2}\) & \(1.057\times10^{-1}\) \\
            &      & Corrected & \(5.592\times10^{-5}\) & \(9.526\times10^{-6}\) & \(3.040\times10^{-3}\) \\
            \cline{2-6}
            &      & Noiseless & \(2.457\times10^{-5}\) & \(1.983\times10^{-6}\) & \(1.083\times10^{-3}\) \\
            & 2000 & Noisy     & \(1.222\times10^{-2}\) & \(1.126\times10^{-2}\) & \(1.003\times10^{-1}\) \\
            &      & Corrected & \(2.690\times10^{-5}\) & \(2.301\times10^{-6}\) & \(1.454\times10^{-3}\) \\
        \hline
        \multirow{9}{*}{$\mu_{\max}$}
            &     & Noiseless & \(1.259\times10^{-3}\) & \(1.006\times10^{-3}\) & \(3.171\times10^{-2}\) \\
            & 4   & Noisy     & \(1.527\times10^{-2}\) & \(1.497\times10^{-2}\) & \(1.537\times10^{-1}\) \\
            &     & Corrected & \(1.277\times10^{-3}\) & \(1.105\times10^{-3}\) & \(3.246\times10^{-2}\) \\
            \cline{2-6}
            &     & Noiseless & \(1.701\times10^{-4}\) & \(5.867\times10^{-5}\) & \(7.605\times10^{-3}\) \\
            & 6   & Noisy     & \(1.397\times10^{-2}\) & \(1.131\times10^{-2}\) & \(1.062\times10^{-1}\) \\
            &     & Corrected & \(1.781\times10^{-4}\) & \(6.389\times10^{-5}\) & \(7.942\times10^{-3}\) \\
            \cline{2-6}
            &     & Noiseless & \(1.667\times10^{-4}\) & \(5.446\times10^{-5}\) & \(7.352\times10^{-3}\) \\
            & 8   & Noisy     & \(1.279\times10^{-2}\) & \(1.126\times10^{-2}\) & \(1.003\times10^{-1}\) \\
            &     & Corrected & \(1.747\times10^{-4}\) & \(6.099\times10^{-5}\) & \(7.744\times10^{-3}\) \\
        \hline
        \multirow{9}{*}{$N_{\mu}$}
            &     & Noiseless & \(3.159\times10^{-4}\) &\(1.916\times10^{-4}\) &\(1.248\times10^{-2}\) \\
            & 40  & Noisy     & \(1.321\times10^{-2}\) &\(1.202\times10^{-2}\) &\(1.094\times10^{-1}\) \\
            &     & Corrected & \(3.179\times10^{-4}\) &\(2.170\times10^{-4}\) &\(1.362\times10^{-2}\) \\
            \cline{2-6}
            &     & Noiseless & \(2.229\times10^{-4}\) &\(7.930\times10^{-5}\) &\(8.472\times10^{-3}\) \\
            & 80  & Noisy     & \(1.302\times10^{-2}\) &\(1.147\times10^{-2}\) &\(1.071\times10^{-1}\) \\
            &     & Corrected & \(2.247\times10^{-4}\) &\(8.223\times10^{-5}\) &\(8.993\times10^{-3}\) \\
            \cline{2-6}
            &     & Noiseless & \(1.667\times10^{-4}\) & \(5.446\times10^{-5}\) & \(7.352\times10^{-3}\) \\
            & 160 & Noisy     & \(1.279\times10^{-2}\) & \(1.126\times10^{-2}\) & \(1.003\times10^{-1}\) \\
            &     & Corrected & \(1.747\times10^{-4}\) & \(6.099\times10^{-5}\) & \(7.744\times10^{-3}\) \\
        \hline
    \end{tabular}
    \caption{The distances $\hat{L}_{\infty}(\boldsymbol{\rho}_{nh})$, $\mathrm{MSE}(\widehat{\mathrm{tr}(\boldsymbol{\rho}\boldsymbol{\rho}_{nh})})$ and the value $\widehat{D^2(\boldsymbol{\rho},\boldsymbol{\rho}_{nh})}$ under change of different parameters: sample size $n$, maximal parameter $\mu_{\max}$, and number of measurement settings $N_{\mu}$, with fixed $\kappa=0.85$. When not varied, the default values are $\mu_{\max}=8$, $N_{\mu}=160$, and $n=500$.}
    \label{tab:combined}
\end{table}
As expected, increasing the sample size $n$ or the resolution $N_{\mu}$ leads to improved reconstruction quality, reflected by a lower mean squared error and smaller distance to the true state.

\subsection*{4.2. Kitten state}
The experimental data analyzed in this work were obtained from the homodyne-tomographic measurements reported in Ref.~\cite{Lvovsky_2002,Lvovsky_2004}.
In that experiment, nonclassical single-mode optical states were prepared using conditional measurements on a highly unbalanced beam splitter. One input port of the beam splitter was fed with a weak coherent state of amplitude $\alpha$, while the second input port contained a conditionally prepared single-photon Fock state. The beam splitter reflectivity was $r^2=0.925$, corresponding to a small transmission amplitude $t=\sqrt{1-r^2}$.

A single-photon detector placed in one of the output ports served as a conditioning device: homodyne measurements in the other output channel were recorded only when a detection event (“click”) occurred. Due to quantum interference at the beam splitter, this conditioning erased the which-path information about the detected photon, resulting in the preparation of a nonclassical state in the signal channel. In the idealized scenario, this procedure generates a coherent superposition of the vacuum and single-photon states,
\begin{eqnarray}
    \ket{\psi_s}
    =c_0\ket{0}+c_1\ket{1},
    \qquad
    c_0=\frac{t}{\sqrt{t^2+\alpha^2}},
    \quad
    c_1=\frac{\alpha}{\sqrt{t^2+\alpha^2}},
\end{eqnarray}
commonly referred to as a \textit{Schr{\"o}dinger kitten} state.

The prepared states were characterized using balanced homodyne detection with a phase-scanned local oscillator, providing access to quadrature probability distributions for different local-oscillator phases. The experimental dataset consists of
$T=14153$ raw homodyne detector data points,  $\bigl\{x_{\theta_j},\theta_j\bigr\}_{j=1}^{T}$,
where $x_{\theta_j}$ denotes the measured quadrature value at phase $\theta_j$.
\par Due to experimental imperfections, including non-unit preparation efficiency of the single-photon state, optical losses, and dark counts of the single-photon detector, the reconstructed quantum state was not pure. Instead, it is well described by the mixed state
\begin{eqnarray}\label{933}
    \rho=(1-p)\ket{0}\bra{0}+p\ket{\psi_s}\bra{\psi_s}.
\end{eqnarray}
We note that $\langle 0|\psi_s\rangle=c_0$, so that
$|\langle 0|\psi_s\rangle|^2=|c_0|^2=1-|c_1|^2$. The purity of this state is therefore
\begin{eqnarray}
    \tr{\rho^2}
    =(1-p)^2+p^2+2p(1-p)|\langle0|\psi_s\rangle|^2
    =1-2p(1-p)\frac{\alpha^2}{t^2+\alpha^2}.
\end{eqnarray}
Solving this equation for $p$ yields
\begin{eqnarray}
    p=\frac{1}{2}\left(
    1\pm
    \sqrt{
    1-2(1-\tr{\rho^2})\frac{t^2+\alpha^2}{\alpha^2}
    }
    \right).
\end{eqnarray}
In Ref.~\cite{Lvovsky_2002}, for $\alpha=0.3$ the purity is reported to be $\tr{\rho^2}=0.77$, which leads to two possible values,
$p=\{0.3189537,\,0.6810463\}$. Both values correspond to the same purity; however, in the physically relevant “mostly-kitten with a vacuum admixture” scenario, we take $p=0.6810463$.

To construct the symplectic  tomographic representation, we define
$\beta=\sqrt{\mu^2+\nu^2}$, $x=r y$, and $\theta=\arg(\mu+i\nu)$.
The required quadrature tomograms for Fock states are given in Eq.~\eqref{1530}. Since a rotated quadrature eigenstate introduces a phase factor $e^{i n\theta}$ for the Fock state $\ket{n}$, the interference term between $\ket{0}$ and $\ket{1}$ carries the phase factor $e^{\pm i\theta}$. Collecting all contributions, we obtain
\begin{eqnarray}
    \mathcal{W}(x|\beta)
    =
    \mathcal{W}_0(x|\beta)
    \left(
    1-p|c_1|^2
    +2p|c_1|^2\frac{x^2}{\beta^2}
    +2\sqrt{2}\frac{p x}{\beta}\,
    \Re\!\left[c_0c_1^{\star}e^{i\theta}\right]
    \right),
\end{eqnarray}
where $\Re[c_0c_1^{\star}e^{i\theta}]=c_0c_1\cos\theta$. Using $\cos\theta=\mu/\beta$, the symplectic tomogram takes the form
\begin{eqnarray}
    \mathcal{W}(x|\beta)
    =
    \mathcal{W}_0(x|\beta)
    \left(
    1-pc_1^2
    +2pc_1^2\frac{x^2}{\beta^2}
    +2\sqrt{2}pc_0c_1\frac{\mu x}{\beta^2}
    \right).
\end{eqnarray}
Using Eq.~\eqref{1530}, the characteristic function of this tomogram is
\begin{equation}\label{934}
\phi(t;\mu,\nu)
=
\phi_0(t;\beta)
\left[
1
-\frac{p\,c_1^{\,2}\,\beta^{2} t^{2}}{2}
-\mathrm{i}\sqrt{2}\,p\,c_0 c_1\,\mu\, t
\right].
\end{equation}
Finally, the kernel of the density-matrix operator in the position representation is obtained as
\begin{equation}\label{935}
\rho(y,y')
=
\frac{1}{\sqrt{\pi}}
\exp\!\left(-\frac{y^{2}+y'^{2}}{2}\right)
\left[
1
- p\,c_1^{2}
- \sqrt{2}\,p\,c_0 c_1\,(y+y')
+ 2p\,c_1^{2}\,y\,y'
\right].
\end{equation}
\begin{figure}
    \centering
    \includegraphics[width=0.7\linewidth]{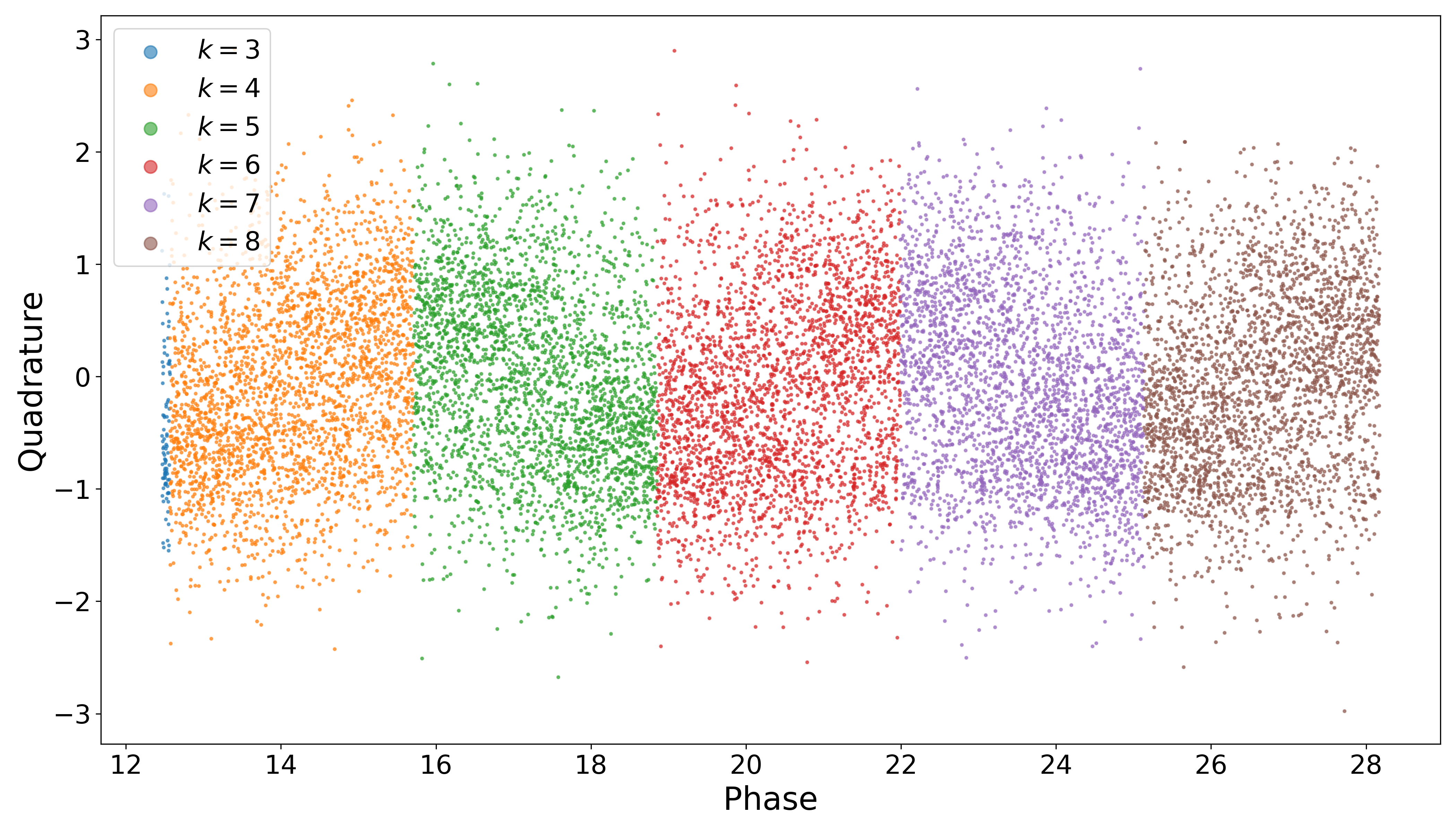}
    \caption{Raw homodyne quadrature samples plotted versus the local-oscillator phase and separated into consecutive 
$\pi$-intervals $k$. Each interval exhibits statistically equivalent quadrature distributions, demonstrating the absence of phase dependent drift and supporting the reduction of the data to a single phase range $[0,\pi)$.}
    \label{fig:data}
\end{figure}
We now describe how the obtained experimental data were analyzed. Homodyne detection measures the field quadrature, which satisfies the symmetry relation
$y_{\theta+\pi}=-y_\theta$. This property is clearly visible in the experimental data shown in Fig.~\ref{fig:data}. As a consequence, the data can be split into intervals of the form $[k\pi,(k+1)\pi]$. Each even interval is a sign-flipped mirror of the adjacent odd interval, so that all intervals collapse into a single consistent dataset on $[0,\pi)$.

After folding the data, we obtain several realizations of $x_{\theta_j}$ for each value of $\theta_j$. These realizations can be organized into bins in different ways. Here we choose bins of equal width in $\theta$, dividing the interval $[0,\pi)$ into $N_{\theta}=20$ subintervals of the same size, each containing approximately the same number of data points ($n\sim700$). The choice $N_{\theta}$ is motivated by the requirements of the KQSE method: since the effective cutoff in the symplectic parameter satisfies $\mu_{\max}\sim\log n$ and the phase discretization must scale as $N_{\theta}\sim (\log n)(\log n)$ for stable characteristic-function estimation, this value provides a consistent balance between angular resolution and statistical accuracy for $n$ data points per bin.
Thus, starting from the optical homodyne dataset $\{y_j,\theta_j\}_{j=1}^{T}$, we convert the data into symplectic variables $\{x_j,\mu_j,\nu_j\}_{j=1}^{T}$ using
\begin{eqnarray}
    x_j=\beta_j y_j,\quad \mu_j=\beta_j\cos{\theta_j},\quad \nu_j=\beta_j\sin{\theta_j}.
\end{eqnarray}

This representation is particularly convenient for the subsequent data analysis, as it allows us to apply the KQSE framework directly. In this approach, the symplectic data $\{x_j,\mu_j,\nu_j\}$ are treated as samples drawn from the symplectic tomogram of the state, and are used to construct a nonparametric estimator of the corresponding characteristic function via kernel smoothing. The estimated characteristic function then serves as the central object from which tomograms, phase-space distributions, and density-matrix elements can be consistently reconstructed. In this way, the folded and binned homodyne data provide the direct input for the KQSE procedure employed in this work.
\par Fig.~\ref{fig:kitten_pdf_comparison} compares the experimentally obtained tomogram with its theoretical prediction~\eqref{933}, together with histogram-based, MLE-based, and KDE-based reconstructions.
\begin{figure}[h]
  \centering
    \includegraphics[width=0.5\linewidth]{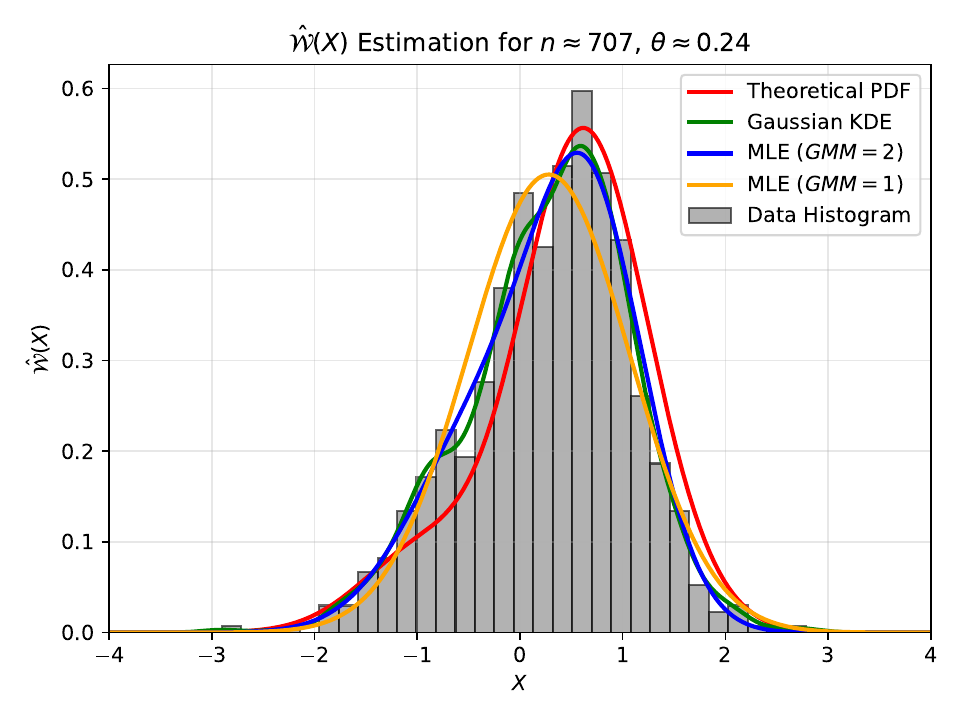}
 \caption{
Comparison of the experimentally obtained tomogram with the theoretical tomogram of the statistical mixture of the kitten state and vacuum~\eqref{933}. Shown are the experimental histogram, MLE-based reconstructions using one- and two-component Gaussian mixture models ($GMM=1$ and $GMM=2$), and the KDE-based reconstruction with a Gaussian kernel.}
  \label{fig:kitten_pdf_comparison}
\end{figure}
The experimental histogram indicates noticeable deviations from the theoretical tomogram, suggesting that the underlying physical state is not fully captured by the idealized model. In particular, while the theoretical tomogram corresponds to a bimodal probability density, the experimental data exhibit an asymmetry between the two modes, with the first peak being systematically more pronounced then on the theoretical curve. This feature is consistently reproduced by both the KDE-based estimator and the MLE reconstruction with a higher complexity Gaussian mixture model.

Such discrepancies are not unexpected in the experimental setting, where additional noise sources and imperfections may not be fully accounted for by the theoretical model. As a result, the theoretical tomogram should be regarded as an approximate reference rather than an exact ground truth in this case. Since only a single experimental dataset is available, we quantify the reconstruction accuracy using the integrated squared error (ISE) averaged by $\theta\in[0,\pi)$  with $N_{\theta}=20$, rather than the mean squared error:
\begin{eqnarray}
    \mathrm{ISE}_{KDE}=7.430\times10^{-3},\quad  \mathrm{ISE}_{GMM=2}=7.037\times10^{-3}, \quad \mathrm{ISE}_{GMM=1}=2.242\times10^{-2}.
\end{eqnarray}
The numerical values of the ISE indicate that the KDE-based reconstruction and the MLE with a two-component Gaussian mixture model yield comparable performance, whereas the single-component MLE fails to adequately describe the experimental data. While parametric MLE is expected to be optimal when the assumed model is correct, the KDE approach remains competitive without relying on detailed model assumptions, making it particularly suitable for the analysis of experimental tomographic data.
\par Next, we estimate the CF using two different approaches. In the first approach, the CF is obtained by Fourier transforming the tomogram reconstructed via MLE under $GMM=1$ and $GMM=2$. In the second approach, the CF is reconstructed directly using the KCFE with a Gaussian kernel. The resulting CF estimates are shown in Fig.~\ref{fig:kitten_cf_comparison}.
The squared errors (SE) throughout $\mu\in[-8,8]$  with $N_{\mu}=100$ indicate a clear hierarchy in the reconstruction accuracy of the CF. The KCFE yields the smallest error,
$\mathrm{SE}_{\mathrm{KCFE}}=4.075\times10^{-3}$, followed by the MLE with a two-component Gaussian mixture model,
$\mathrm{SE}_{\mathrm{GMM}=2}=4.489\times10^{-3}$. In contrast, the single-component MLE exhibits a substantially larger error,
$\mathrm{SE}_{\mathrm{GMM}=1}=1.276\times10^{-2}$.

These results quantitatively confirm the qualitative behavior observed in Fig.~\ref{fig:kitten_cf_comparison}. While increasing the model complexity from $GMM=1$ to $GMM=2$ significantly improves the MLE-based reconstruction, the parametric approach remains less accurate than the kernel-based estimator.
\begin{figure}[h]
  \centering
  \begin{minipage}[b]{0.5\textwidth}
    \centering
    \includegraphics[width=\linewidth]{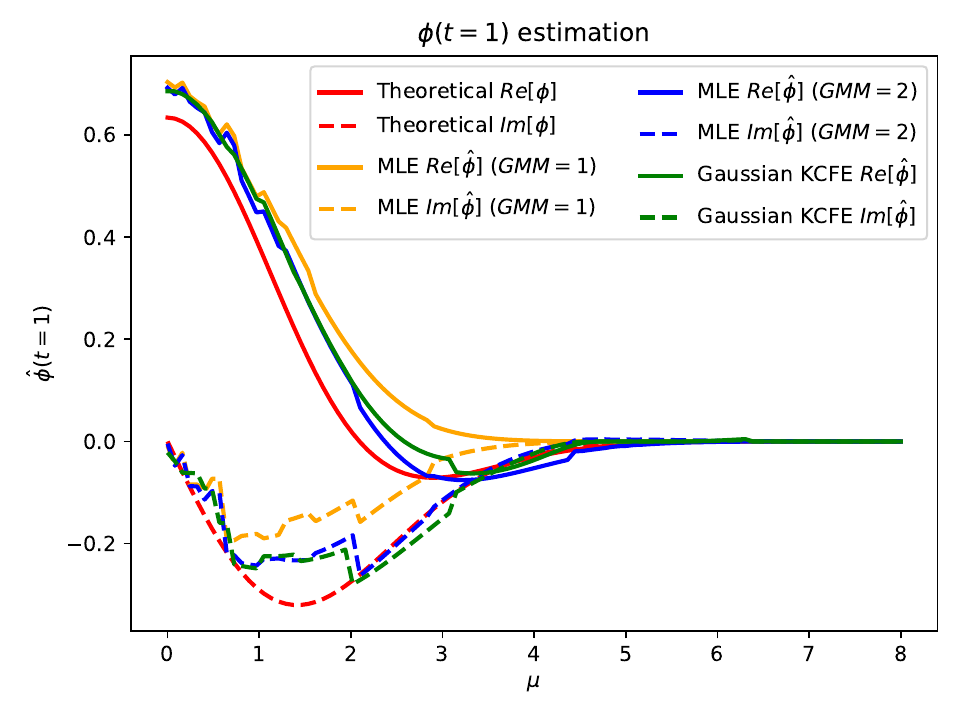}
  \end{minipage}\hfill
 \caption{
Comparison of CF reconstructions for  the statistical mixture of the kitten state and vacuum~\eqref{933} state obtained using MLE with Gaussian mixture models ($GMM=1$ and $GMM=2$) and direct KCFE with a Gaussian kernel.}
  \label{fig:kitten_cf_comparison}
\end{figure}
\par Using the reconstructed CFs, we next estimate the density-matrix kernel~\eqref{935} of the statistical mixture of the kitten state and vacuum~\eqref{933}. The results are provided in Fig.~\ref{fig:kitten_rho_comparison}. The reconstruction accuracy is quantified by the $L_\infty$ error defined in Eq.~\eqref{1033_333}. The resulting errors are
$L_{\infty,\mathrm{GMM}=1}=3.424\times10^{-2}$,
$L_{\infty,\mathrm{GMM}=2}=1.372\times10^{-2}$,
and
$L_{\infty,\mathrm{KQSE}}=1.012\times10^{-2}$.
These results show a systematic improvement of the kernel reconstruction when moving from a single-component to a two-component Gaussian mixture model within the MLE approach. Nevertheless, the smallest reconstruction error is achieved by the KQSE method, indicating its enhanced robustness with respect to model mismatch and residual noise present in the experimentally motivated CF estimates.

\begin{figure}[h]
  \centering
    \begin{minipage}[b]{1.0\textwidth}
    \centering
    \includegraphics[width=\linewidth]{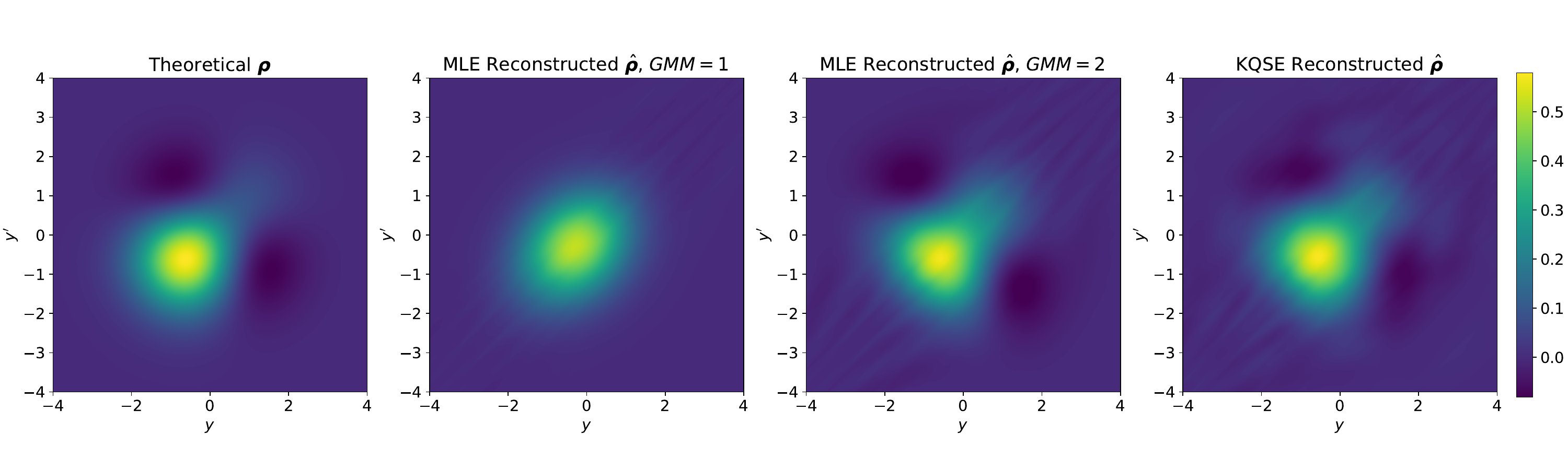}
  \end{minipage}
  \caption{Heatmaps of kernel \eqref{935} of the statistical mixture of the kitten state and vacuum~\eqref{933}: (a) theoretical kernel; 
(b) and (c) MLE under Gaussian mixture model assumptions with $GMM=1$ and $GMM=2$, respectively; (d) KQSE  
  }
  \label{fig:kitten_rho_comparison}
\end{figure}}

\end{document}